\documentclass[runningheads]{llncs}
\usepackage[T1]{fontenc}
\usepackage{xcolor}
\usepackage{sidenotes}

\usepackage{graphicx}
\usepackage{amssymb}
\usepackage{amsmath}
\usepackage{xcolor}
\usepackage{longtable}
\usepackage{caption}
\usepackage{algorithm}
\usepackage{algpseudocode}
\usepackage{longtable}
\usepackage{nicefrac}
\usepackage{float}

\algrenewcommand{\algorithmiccomment}[1]{%
	\hfill\textcolor{gray}{\footnotesize $\triangleright$ #1}%
}

\newcommand{\Proc}[2]{\textproc{#1}(#2)}

\usepackage{listings}

\lstdefinelanguage{C}{
	keywords={double, False,True,None,and,as,assert,break,class,continue,def,del,elif,
		else,except,finally,for,from,global,if,import,in,is,lambda,nonlocal,not,
		or,pass,raise,return,try,while,with,yield},
	keywordstyle=\color{blue}\bfseries,
	ndkeywords={self},
	ndkeywordstyle=\color{purple},
	identifierstyle=\color{black},
	commentstyle=\color{gray}\itshape,
	stringstyle=\color{orange},
	sensitive=true
}

\usepackage{bbding}
\usepackage{subfiles}
\usepackage{stmaryrd}
\usepackage{bbm}
\usepackage{xspace}
\usepackage{cleveref}

 \usepackage{pgfplots}
 \pgfplotsset{compat=1.18}
 \usepackage{tikz}
 \usetikzlibrary{arrows.meta}
 
 \definecolor{colorA}{HTML}{e3e6ff} 
 \definecolor{colorC}{HTML}{eaffea} 
 \definecolor{colorE}{HTML}{e0ffff} 
 \definecolor{colorBoundary}{HTML}{a0a0a0} 
 \definecolor{colorMark}{HTML}{bf212f}

\crefname{algorithm}{Algorithm}{Algorithms}
\Crefname{algorithm}{Algorithm}{Algorithms}
\crefname{proposition}{Prop.}{Props.}
\Crefname{proposition}{Prop.}{Props.}
\crefname{section}{\S}{\S\S}
\crefname{table}{Table}{Tables}

\DeclareMathOperator{\AST}{\mathrm{AST}}
\DeclareMathOperator{\IF}{\mathrm{If}}
\DeclareMathOperator{\Unif}{\mathrm{U}}
\DeclareMathOperator{\Tri}{\mathrm{T}}

\newcommand{\ie}{\textit{i.e}.\xspace}
\newcommand{\viz}{\textit{viz}.\xspace}
\newcommand{\eg}{\textit{e.g}.\xspace}
\newcommand{\resp}{resp.\xspace}
\newcommand{\wrt}{w.r.t.\xspace}

\newcommand{\Var}{\mathrm{Var}}
\newcommand{\R}{\mathbb{R}}

\newcommand{\cp}{\mathsf{copy}}
\newcommand{\sw}{\mathsf{swap}}
\newcommand{\sem}[1]{\llbracket #1 \rrbracket}
\newcommand{\dsem}[1]{\llparenthesis #1 \rrparenthesis}
\newcommand{\Prob}{\mathcal{P}}
\newcommand{\Pro}{\mathbb{P}}
\newcommand{\inv}{^{-1}}
\newcommand{\one}{\mathbbm{1}}
\newcommand{\ite}[3]{\IF[#1,#2,#3]}
\newcommand{\unif}[2]{\Unif(#1,#2)}
\newcommand{\tri}[2]{\Tri(#1,#2)}

\begin{document}
	\title{Exact Evaluation of Probabilistic Programs with Cylindrical Algebraic Decomposition}
	\titlerunning{Exact Evaluation of Probabilistic Programs with CAD}
	%
	\author{Mohamed Hamza Bandukara \and Fredrik Dahlqvist \and Niki Omidvari\textsuperscript{(\Envelope)} 
	}
	\authorrunning{Bandukara, Dahlqvist and Omidvari}
	%
	\institute{Queen Mary University of London, London, UK\\
		\email{\{m.h.bandukara,f.dahlqvist,n.omidvari\}@qmul.ac.uk}}
	\maketitle              
	\begin{abstract}
		We present a method for computing the exact output distribution of small programs with random inputs. Specifically, we are interested in inline programs manipulating sensor data such as \eg GPS or inertial measurement sensors whose inputs have a known or well-modelled distribution. These programs typically only include relatively few variables, arithmetic operations, square roots and if-else statements. This small syntax allows us to recast the problem of computing the exact output distribution as a cylindrical algebraic decomposition problem followed by symbolic and/or numerical integration. We present this method in detail and show with two prototypes that it can successfully be applied to benchmarks from the literature on floating-point arithmetic and small programs from open-source sensor libraries.
		
		\keywords{Probabilistic program  \and Range analysis \and Cylindrical Algebraic Decomposition \and Symbolic computations}
	\end{abstract}

	\section{Introduction}\label{sec:intro}

There are two ways in which randomness can find its way into a computer program. First, it can be inserted there by design, through sampling from a Pseudo-Random Number Generator (PRNG). This kind of randomness is used in a huge range of applications, most notably in Monte-Carlo simulations for tasks as varied as pricing financial instruments, weather forecasting or computer graphics. Second, it can enter the program via interactions with the outside world, for example through sensors forming part of a cyber-physical system like an autonomous vehicle, an industrial robot or a mobile phone. All sensors are necessarily noisy, and any program processing sensor data will therefore exhibit some randomness. Importantly, this noise can be -- and often is -- modelled to a good level of precision, \eg~\cite{cutler2004dutton,hofmann2003navigation,Specht_2021} for GPS sensors or \cite{electronics12112458,wu2017dynamic} for inertial measurement sensors (gyroscopes, accelerometers, magnetometers).

Irrespective of the source of randomness, we will refer to a program exhibiting randomness as a \emph{probabilistic program}. Our objective in this work is to compute the exact output distribution of probabilistic programs with known continuous input distributions\footnote{We leave the addition of discrete and mixed distributions to future work.}.  In principle, our method can be applied to any probabilistic program. In practice, it is aimed at programs manipulating sensor data. Computing the output distribution of a probabilistic program is a very hard problem in general and we therefore restrict our attention to small programs containing ($i$) arithmetic operations, ($ii$) radicals, and ($iii$) if-else statements. This syntax is sufficient to capture many functions in the code bases of various sensor libraries, for example associated with Inertial Measurement Units (IMUs) or GPS sensors.  Below is a fairly typical example from an IMU library\footnote{\url{https://github.com/CCNYRoboticsLab/imu\_tools}}; based on an input acceleration from an accelerometer it returns the first component of a quaternion representing the orientation of the global w.r.t.\ the local frame.
\begin{lstlisting}
	double getMeasurement(double ax, double ay, double az){
		normalizeVector(ax, ay, az);
		if (az >= 0){return sqrt((az + 1) * 0.5);} 
		else {return -ay / (2.0 * sqrt((1 - az) * 0.5);}}
\end{lstlisting}
We focus on relatively small probabilistic programs like this one (included as \texttt{get\_meas} in our results, see~\cref{tab:benchmarks}) rather than the typically much more complex code bases of Monte-Carlo simulations.

There is of course a very simple way of characterising the output distribution of a probabilistic program: sample repeatedly from the input distributions, run the program on these inputs, collect the outputs, and plot a histogram. In other words, analyse the program by Monte-Carlo simulation. This is a frequent and conceptually simple approach to testing probabilistic programs. This simplicity however comes at the expense of \emph{precision} and \emph{formal guarantees}.

\noindent\emph{Precision}. By definition, Monte-Carlo simulation can only \emph{approximate} the probability of a range of output values. By the central limit theorem, the error incurred by this approximation only decreases as $1/\sqrt{n}$, where $n$ is the number of samples. Thus what may at first seem like a computationally cheap method can turn out to be quite expensive if a high level of precision is required, particularly if the probability of a rare event must be approximated. Beyond the computational cost, the problem of generating a lot of quality random inputs isn't trivial. There are two options: either sample analog data directly from the sensor under consideration, the throughput of samples is then typically limited, or use a much faster PRNG, in which case it is typically impossible to exclude the presence of statistical artifacts beyond a small number of samples (\eg the original Mersenne twister~\cite{matsumoto1998mersenne} only guarantees the i.i.d. behaviour of successive samples up to $n=623$). This phenomenon is illustrated perfectly by the poor performance of `good' PRNGs on physics problem requiring many samples~\cite{bad_randomness}.

\noindent\emph{Formal guarantees}. Monte-Carlo simulations cannot offer formal guarantees of the type `\emph{the probability of the output lying in a set $X$ of bad outcomes is less that $\alpha\%$}'. This is partly due to the impossibility of ruling out statistical artifacts when using PRNGs, but even when samples are \textit{bona fide} i.i.d. samples, it is extremely difficult to provide useful upper-bounds to the distance between the empirical output distribution (based on samples) and the actual output distribution. 

Computing exact output distributions is much harder than Monte-Carlo simulation but it solves the precision problem and can also be used to provide formal guarantees in the development and certification of cyber-physical systems. As we shall show in \cref{sec:experiments}, computing exact/analytical output distributions is a practical proposal for benchmarks from the floating-point analysis community (FPBench~\cite{fpbench}) and small programs from open-source IMU sensors libraries. 

The structure of this paper is as follows: in \cref{sec:prelim} we will define arithmetic on random variables, Cylindrical Algebraic Decomposition (CAD), and show how the latter can be used to compute the former. In \cref{sec:programs} we will describe the syntax and semantics of the programs that we consider in this work. This probabilistic semantics (the output distribution)  is concretely evaluated using algorithms described in \cref{sec:algorithms}. Finally, we present an empirical evaluation in \cref{sec:experiments}.

\subsection*{Related work.} 

\paragraph{Statistical models.} Arithmetic operations and if-else branching on random variables is used by the statistical community to develop generative models, \eg 
\begin{align*}
	X&\sim\mathcal{N}(0,1) &
	Y&=\begin{cases}
		\alpha_1\sqrt{X} + \epsilon & X\geq 0 \\
		\alpha_2 X + \epsilon & X<0
	\end{cases}\qquad\quad \epsilon \sim \mathcal{N}(0,\sigma).
\end{align*}
These models will be implemented as probabilistic programs.
Methods for evaluating the Probability Density Function (PDF) of such models are developed in \cite{pacal2012,korzen2014pacal}. However, their method is purely numerical and cannot deal with arithmetic expressions containing repeated variables. Although we do not focus on this application, our method can easily be applied to statistical models.

\paragraph{Probabilistic floating-point arithmetic.}
The literature on floating-point arithmetic for random inputs \cite{bouissou2012generalization,bouissou2016uncertainty,lohar2019sound,sankaranarayanan2020reasoning,constantinides2021rigorous,constantinides2025automated} is closely related to our work, it targets the same kind of programs under the same condition of random inputs. The key differences with our work are that (a) this line of work does not evaluate output distributions exactly, relying on under- and over-approximations like $p$-boxes~ \cite{bouissou2012generalization,constantinides2021rigorous}, and (b) we do not (yet!) consider rounding errors when computing exact output distributions and work instead in infinite precision (symbolically). 

\paragraph{Exact inference in probabilistic programs.}
Though our focus is on small in-line probabilistic programs without inference, there is broader work on exact inference for probabilistic programs that includes in particular computing output distributions (the \emph{evidence} in the Bayesian perspective).
For discrete distributions, exact inference is more tractable and can be done using weighted model counting, as in Dice \cite{holtzen2020scaling}. Continuous distributions can be discretised to exploit these methods; the recent `bit-blasting' approach \cite{garg2024bit} being a particularly efficient example of this approach. Being approximate, these discretisation methods cannot offer the formal guarantees we aim to compute. 

Computer algebra can be used to simplify the distributional representations of probabilistic programs, though they may leave results partially unevaluated or finally use numeric methods. 
Hakaru \cite{narayanan2016probabilistic,carette2016simplifying} uses the Maple computer algebra system for simplification but may still rely on sampling for full evaluation. Similarly, \cite{bhat2013deriving} uses a type-based system to represent probability distributions but uses numerical integration to evaluate them, and so does not produce closed-form solutions.
The top solver for exact symbolic inference is PSI \cite{gehr2016psi,gehr2020lambdapsi}, which supports both discrete and continuous distributions and a rich syntax including conditioning, bounded loops, and higher-order functions. PSI works by incrementally building, simplifying, and integrating the joint probability distribution of a program.
PSI is a powerful general-purpose tool targeting a very expressive language, and may therefore return results containing unevaluated integrals or overly complex PDFs. We have designed two tools, one to compute closed-form Cumulative Distribution Functions (CDFs) and PDFs, based on Mathematica's symbolic integration engine, the other, based on C\texttt{++} libraries, to evaluate output CDFs and PDFs without necessarily constructing a full closed-form solution.

\paragraph{Probabilistic Circuits}~\cite{choi2020probabilistic}, such as sum-product networks \cite{poon2011sum,sanchez2021sum} and arithmetic circuits \cite{darwiche2003differential}, represent joint distributions as trees of products and weighted sums of density functions. Their structural constraints enable exact polynomial-time inference for queries such as marginalisation and conditioning but can lead to large circuits with many repetitions. 
They are designed for queries on joint distributions, rather than for computing output distributions. The SPPL system \cite{saad2021sppl} bridges this gap by translating probabilistic programs into sum-product expressions, enabling exact symbolic inference comparable to PSI. SPPL is faster than PSI on shared benchmarks but targets a much less expressive language (\eg it cannot compute $Z=X\div Y$ or accept infinite-support distributions).

	\section{Preliminaries}\label{sec:prelim}

In this section we describe the problem of evaluating arithmetic expressions over random variables, and show how cylindrical algebraic decomposition (CAD) can solve it. We apply this technique in the next sections to evaluate the output distributions, \ie the \emph{probabilistic semantics}, of probabilistic programs.

\paragraph{Arithmetic on random variables.}
We assume familiarity with the notions of random variable, probability measure, and Probability Density (\resp  Cumulative Distribution) Functions, \ie PDF (\resp CDF). By \emph{arithmetic expression} we mean an expression given by the grammar:
\begin{align*}
	e & ::= X \in \text{Var} \mid c \in \mathbb{R} \mid e + e \mid e - e \mid e \times e \mid e \div e
\end{align*}
where $\Var=\{X_1, \ldots, X_n\}$ is a finite set of variable symbols. A \emph{polynomial expression}, or simply a \emph{polynomial}, is an arithmetic expression not containing the division symbol $\div$.

\emph{Independent variables.} Consider first the case of two \emph{independent} real random variables $X_1,X_2$ described by PDFs $f_1$ and $f_2$ respectively. It is well known~\cite{springer1979} that the density functions of the distributions of $X_1+X_2, X_1-X_2, X_1\times X_2$ and $X_1\div X_2$ can be expressed analytically as follows: 
{\small
	\begin{align}
		f_{X_1+X_2}(t) &= \int_{-\infty}^\infty{f_1(x)f_2(t-x) \, dx} \quad&
		f_{X_1-X_2}(t) &= \int_{-\infty}^\infty{f_1(x)f_2(x-t) \, dx} \nonumber \\
		f_{X_1 \times X_2}(t) &= \int_{-\infty}^\infty{\frac{1}{\left | x \right | }f_1(x)f_2\hspace{-3pt}\left(\frac{t}{x}\right) \, dx} \quad &
		f_{X_1 \div X_2}(t) &= \int_{-\infty}^\infty{\left | y \right | f_1(ty)f_2(y) \, dy} \label{eq:independent}
\end{align}}The independence assumption is crucial as it  means that the joint density of $X_1,X_2$ factors as the product of the densities, \viz $f_{X_1,X_2}(x,y)=f_1(x)f_2(y)$, which makes the derivations of the expressions \eqref{eq:independent} possible. Evaluating independent arithmetic expressions over random expression numerically is discussed in~\cite{pacal2012} and implemented in the Python library PaCal~\cite{korzen2014pacal}.

 The density function of any arithmetic expression over independent random variables $X_1, \ldots, X_n$ can thus be computed inductively using  \eqref{eq:independent} \emph{provided that no variable is repeated}. Indeed, variable repetition immediately breaks independence: consider the expression $X\div (X+Y)$, the random variables $X$ and $X+Y$ are not independent since what we usually mean when writing $X\div (X+Y)$ is that the two occurrences of $X$ represent the \emph{same} random element. Put differently, to sample from $X\div (X+Y)$ we sample $x$ from $X$, $y$ from $Y$, \emph{copy $x$} and evaluate $x\div(x+y)$ (using our two copies of $x$).

\emph{Non-independent variables.} If we cannot assume that variables are independent, we must assume a general joint density $f_{X_1,X_2}(x_1,x_2)$ that cannot be factored as above. To evaluate the distribution of $X_1\ast X_2$ for $\ast\in\{+,-,\times,\div\}$ one can compute the CDF as $	F_{X_1\ast X_2}(t)=\int_{\{(x,y)\mid x\ast y\leq t\}} f_{X_1,X_2}(x_1,x_2)~dx_1dx_2$
and, if it is analytically tractable,  take its derivative to get the PDF. More generally, for an arithmetic expression $\varphi(X_1,\ldots,X_n)$ over non-independent random variables with joint density $f(x_1, \ldots, x_n)$, the CDF of $\varphi(X_1,\ldots,X_n)$ is given by
\begin{align}
	F(t)=\int_{\{(x_1, \ldots, x_n)\mid \varphi(x_1,\ldots, x_n)\leq t\}} f(x_1,\ldots,x_n)~dx_1\ldots dx_n. \label{eq:general_integral}
\end{align}
To concretely evaluate integrals of this shape we will exploit the fact that the set over which the integral above is taken has a very specific shape known as a \emph{semi-algebraic set}. Although it is not essential for our approach to work, we will also assume throughout that the $n$ input variables are independent, meaning that $f(x_1, \ldots, x_n)=f_{1}(x_1)\ldots f_{n}(x_n)$. However, this does not mean that we can fall back on \eqref{eq:independent}, as $\varphi(X_1, \ldots,X_n)$ will typically contain variable repetitions. 

\paragraph{Cylindrical Algebraic Decomposition (CAD).}
The class of \textit{semi-algebraic sets}~\cite[\S 2]{basu2006algorithms} of $\R^n$ is the smallest class of subsets $S\subseteq \R^n$ that contains ($i$) \emph{algebraic sets}, \ie sets $\{x \in\R^n\mid p(x)=0\}$ of zeroes of a polynomial $p$, ($ii$) sets $\{x\in\R^n\mid p(x)>0\}$ defined by polynomial inequalities, and ($iii$) is closed under intersection, union and complementation.  A function $f: S\to T$ between two semi-algebraic sets $S\subseteq\R^{m}$ and $T\subseteq\R^n$ is \emph{a semi-algebraic function} if its graph $\{(x,f(x))\mid x\in \R^m\}$ is semi-algebraic in $\R^{m+n}$. It is easy to check that the sets over which integrals of the shape~\eqref{eq:general_integral} are taken are semi-algebraic.

\begin{proposition}\label{prop:division_free}
	For any arithmetic expression $e(X_1,\ldots, X_k)$, the set $\{x\in \R^k\mid e(x)<0\}$ is semi-algebraic.
\end{proposition}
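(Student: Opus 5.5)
The plan is to clear denominators by induction on the structure of $e$, but to keep track of the set on which $e$ is actually defined, namely where no denominator vanishes. I will prove by structural induction the following stronger claim. For every arithmetic expression $e$ there are polynomials $p_e, q_e$ and a semi-algebraic set $D_e\subseteq\R^k$ such that:
\begin{itemize}
\item $D_e$ is exactly the set of points where every subexpression appearing as a divisor is nonzero, i.e.\ the domain of $e$;
\item $q_e(x)\neq 0$ on $D_e$;
\item $e(x)=p_e(x)/q_e(x)$ for all $x\in D_e$.
\end{itemize}
Given this, the set in the proposition is $\{x\in D_e \mid e(x)<0\} = D_e\cap\{x\mid p_e(x)q_e(x)<0\}$. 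This is because on $D_e$ the sign of $p_e/q_e$ equals the sign of $p_eq_e$, as $q_e^2>0$ there. The set $\{p_eq_e<0\}=\{-p_eq_e>0\}$ is semi-algebraic by clause ($ii$) of the definition, and the intersection with $D_e$ is semi-algebraic by clause ($iii$).

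The induction runs as follows.
\begin{itemize}
\item \emph{Base cases.} For a variable $X_i$ take $p=x_i$, $q=1$, $D=\R^k$. For a constant $c$ take $p=c$, $q=1$, $D=\R^k$. Here $\R^k$ is semi-algebraic as the complement of $\emptyset=\{1=0\}$.
\item \emph{Sum and difference.} Put $p_{e_1\pm e_2}=p_1q_2\pm p_2q_1$, $q=q_1q_2$, and $D=D_1\cap D_2$.
\item \emph{Product.} Put $p=p_1p_2$, $q=q_1q_2$, $D=D_1\cap D_2$.
\item \emph{Quotient.} Put $p=p_1q_2$ and $q=q_1p_2$. Take $D=D_1\cap D_2\cap\big(\R^k\setminus\{p_2=0\}\big)$. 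On $D_2$, $e_2=0$ iff $p_2=0$, so this is exactly the domain of $e_1\div e_2$.
\end{itemize}
In each case the identity $e=p/q$ on $D$ is a routine fraction manipulation, and $q\neq 0$ on $D$ follows from the hypotheses. Semi-algebraicity of $D$ follows from closure under intersection and complement together with clause ($i$).

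The only real subtlety, and the step I expect to need care, is the semantics of division by zero. The paper's grammar does not say what $e(x)$ means when a denominator vanishes. The natural reading, which I would adopt explicitly, is that the set $\{x\mid e(x)<0\}$ ranges over points where $e$ is defined. Under that reading the argument above is complete.

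If instead the program semantics assigns some fixed value $v$ to $a\div 0$, the same induction still works with one change: carry a finite partition of $\R^k$ into semi-algebraic pieces, each with its own polynomial pair $(p,q)$. In the quotient case, split each piece into the parts where $p_2=0$ and where $p_2\neq 0$, assigning the pair $(v,1)$ on the former. The target set is then a finite union of semi-algebraic pieces of the form above.

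Either way, no tools beyond the closure properties in the definition are needed. In particular there is no need for Tarski--Seidenberg, since no quantifiers or projections arise.
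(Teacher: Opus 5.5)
Your proof is correct, and its core is the same as the paper's: write $e$ as a single quotient $p/q$ of polynomials and read off the sign. Your condition $pq<0$ is equivalent to the paper's split $\{q>0\wedge p<0\}\cup\{q<0\wedge p>0\}$.

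The difference is in how you reach $p/q$. The paper rewrites the AST until the only division is at the root, and leaves the rewriting order and termination implicit. Your bottom-up induction produces $(p_e,q_e)$ directly and carries the domain $D_e$ as an invariant.

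That invariant is a genuine gain. The paper's rule $e_1\div(e_2\div e_3)\mapsto(e_1\times e_3)\div e_2$ drops the side condition $e_3\neq 0$, so the rewritten expression can be defined, and negative, where $e$ is not. For example, rewriting the inner division first turns $-1+(1\div(1\div x))$ into $((-1)\times 1+1\times x)\div 1$. Its negativity set $(-\infty,1)$ contains $x=0$, where $e$ is undefined.

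The paper's conclusion still holds. Its rules preserve values wherever $e$ is defined, so $\{x\mid e(x)<0\}$ is the paper's set intersected with the domain of $e$. But the semi-algebraicity of that domain is exactly what your $D_e$ supplies, and the paper never states it.

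What the paper's formulation buys in return is a syntactic procedure. It combines this procedure with square-root elimination in Proposition~\ref{prop:sqrt_free}, where a rational normal form like yours is no longer available.
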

\begin{proof}
	If the abstract syntax tree $\AST(e)$ of $e$ does not contain a division node, then there is nothing to show, so let us assume that it does.
	By replacing all subterms $e_1 - e_2$ by $e_1 + (-1\cdot e_2)$, we can assume w.l.o.g. that $\AST(e)$ does not contain any subtraction. Moreover, by replacing all subterms $e_1 \times (e_2\div e_3)$ by $(e_1\times e_2)\div e_3$, and subterms $e_1+(e_2\div e_3)$ by $((e_1\times e_3)+e_2)\div e_3$ (and similarly for addition on the right) we rewrite $e$ in such a way that no division occurs below an addition or a multiplication in $\AST(e)$. Next, we get rid of consecutive divisions in $\AST(e)$ by re-writing subterms $e_1\div (e_2\div e_3)=(e_1\times e_3)\div e_2$ and $(e_1\div e_2)\div e_3$ as $e_1\div (e_2\times e_3)$. If need be,  we push any division appearing under one of these multiplications up in the manner described above. It follows that we can rewrite $e$ in such a way that the only division in $\AST(e)$ is the top node, \ie that $e=e_1\div e_2$. We now simply write
	\begin{align*}
		\{x\in \R^k\mid e_1(x)\div e_2(x)<0\}=&\{x\in \R^k\mid e_2(x)>0 \wedge e_1(x)<0\}~\cup \\
		&\{x\in \R^k\mid e_2(x)<0 \wedge e_1(x)>0\}
	\end{align*}
	which is clearly semi-algebraic.\hfil\qed
\end{proof}

We can now define the central concept of \emph{Cylindrical Algebraic Decomposition} or CAD, see \eg~\cite{arnon1984cylindrical},\cite[\S 5]{basu2006algorithms}. A CAD of $\R^n$ is a sequence $\mathcal{S}_1, \ldots, \mathcal{S}_n$ of finite partitions $\mathcal{S}_i$ of $\R^i, 1\leq i\leq n$ into semi-algebraic subsets called \emph{cells of level $i$} or \emph{$i$-cells}. These partitions must satisfy the following properties:
\begin{enumerate}
	\item Each cell of level 1 in $\mathcal{S}_1$ is either a point or an open interval
	\item For each $i$-cell $S\in\mathcal{S}_i, 1\leq i<n$, there are finitely many continuous semi-algebraic functions $\xi_{S,1}< \ldots< \xi_{S,k_S}: S\to \R$ such that the \emph{cylinder} $S\times \R\subseteq \R^{i+1}$ is the disjoint union of the $2k_S+1$ cells of level $i+1$ given by:
	\item[-] $\{(x_1, \ldots,x_i, x_{i+1})\mid (x_1, \ldots,x_i)\in S, x_{i+1}=\xi_{S,j}(x_1,  \ldots,x_i)\}$, that is to say the graph of the functions $\xi_{S,j}, 1\leq j\leq k_S$, or
	\item[-] $\{(x_1, \ldots,x_i, x_{i+1})\mid (x_1, \ldots,x_i)\in S, \xi_{S,j}(x_1, \ldots,x_i)<x_{i+1}<\xi_{S,j+1}(x_1, \ldots,x_i)\}$, that is to say
	the band bounded from above and below by the graphs of $\xi_{S,j}$ and $\xi_{S,j+1}$ for $0\leq j\leq k_S$ where $\xi_{S,0}=-\infty$ and $\xi_{S,k_S+1}=\infty$.
\end{enumerate}
Let $\pi_i: \R^{i+1}\simeq \R^i\times \R\to \R^i$ be the canonical projection forgetting the last element in an $i+1$-tuple.
Given a cell $S$ of level $i$, the cells of level $i+1$ decomposing the cylinder $S\times \R=\pi_i\inv(S)$ are called the cells \emph{above} $S$ since they project down into $S$. Note in particular that if $k_S=0$ in the definition above, then the only cell above $S$ is the entire cylinder $S\times \R$.

A CAD $(\mathcal{S}_i)_{1\leq i\leq n}$ is said to be \emph{adapted to a finite family $T_1, \ldots, T_k$ of semi-algebraic sets} if each $T_j, 1\leq j\leq k$ is a union of cells of $(\mathcal{S}_i)_{1\leq i\leq n}$. A CAD $(\mathcal{S}_i)_{1\leq i\leq n}$ is said to be \emph{adapted to a finite family $P_1, \ldots, P_k$ of polynomials in the variables $X_1, \ldots, X_n$} if for each $n$-cell $S$ and each polynomial $P_i$, the sign of $P_i$ is constant on $S$ (\ie >0, <0 or =0). 

\begin{figure}[htbp]
	\centering
	\begin{tikzpicture}[
		font=\footnotesize, 
		dot/.style = {circle, draw, fill=colorMark, inner sep=0pt, minimum size=3.5pt, line width=0.7pt},
		>=Stealth
		]
		
		\newcommand{\myunitlength}{2.1cm} 
		
		\begin{axis}[
			x=\myunitlength,            
			y=\myunitlength,            
			xmin=-2.2, xmax=2.2,      
			ymin=-2.2, ymax=4.2, 
			axis lines=none,      
			clip=false
			]
			
			
			\fill[colorA] (axis cs:-2.2,-2.2) rectangle (axis cs:-1,2.2);
			\fill[colorC] (axis cs:-1,-2.2) rectangle (axis cs:1,2.2);
			\fill[colorE] (axis cs:1,-2.2) rectangle (axis cs:2.2,2.2);
			
			\fill[white] (axis cs:0,0) circle [radius=1];
			\fill[opacity=0.15, black] (axis cs:0,0) circle [radius=1];
			\draw[black, thick] (axis cs:0,0) circle [radius=1];
			
			\addplot[blue, line width=1pt] coordinates {(-1,-2.2) (-1,2.2)};
			\addplot[cyan, line width=1pt] coordinates {(1,-2.2) (1,2.2)};
			
			\draw[->, colorBoundary] (axis cs:-2.2,0) -- (axis cs:2.2,0) node[above left, text=black, font=\scriptsize] {$x_1$};
			\draw[->, colorBoundary] (axis cs:0,-2.2) -- (axis cs:0,2.2) node[below right, text=black, font=\scriptsize] {$x_2$};
			
			\node[below left, font=\scriptsize] at (axis cs:0,0) {$0$};
			\node[below left, font=\scriptsize] at (axis cs:-1,0) {$-1$};
			\node[below right, font=\scriptsize] at (axis cs:1,0) {$1$};
			
			\node[font=\small] at (axis cs:0, -2.5) {$\mathcal{S}_2$};
			
			\node[dot] (pB) at (axis cs:-1,0) {};
			\node[dot, color=olive!60!black, fill=olive!60!black] (pD) at (axis cs:1,0) {};
			
			\node[anchor=east, align=right, font=\scriptsize] at (axis cs:-1.1, 0.6) {Cylinder\\$(-\infty,-1) \times \mathbb{R}$};
			\node[anchor=west, align=left, font=\scriptsize] at (axis cs:1.1, 0.6) {Cylinder\\$(1,\infty) \times \mathbb{R}$};
			
			\node[anchor=east, font=\scriptsize, align=right] at (axis cs:-1.05, 0) {$\{-1\}\times\{0\}$};
			\node[anchor=south east, font=\scriptsize, align=right] at (axis cs:-1.05, 1.1) {$\{-1\}\times(0,\infty)$};
			\node[anchor=north east, font=\scriptsize, align=right] at (axis cs:-1.05, -1.1) {$\{-1\}\times(-\infty,0)$};
			
			\node[anchor=west, font=\scriptsize, align=left] at (axis cs:1.05, 0) {$\{1\}\times\{0\}$};
			\node[anchor=south west, font=\scriptsize, align=left] at (axis cs:1.05, 1.1) {$\{1\}\times(0,\infty)$};
			\node[anchor=north west, font=\scriptsize, align=left] at (axis cs:1.05, -1.1) {$\{1\}\times(-\infty,0)$};
			
			\node[anchor=south, font=\scriptsize, align=center] at (axis cs:0, 1.3) {Above:\\$(-1,1)\times(\sqrt{1-x_1^2},\infty)$};
			\node[anchor=north, font=\scriptsize, align=center] at (axis cs:0, -1.3) {Below:\\$(-1,1)\times(-\infty,-\sqrt{1-x_1^2})$};
			
			\node[font=\tiny, fill=white, fill opacity=0.8, text opacity=1, inner sep=1.5pt, align=center] at (axis cs:0, 0) {Interior:\\$(-1,1)\times(-\sqrt{1-x_1^2},\sqrt{1-x_1^2})$};
			\node[anchor=south, font=\tiny] at (axis cs:0, 1.02) {$\xi_{(-1,1),1}=\sqrt{1-x_1^2}$};
			\node[anchor=north, font=\tiny] at (axis cs:0, -1.02) {$\xi_{(-1,1),1}=-\sqrt{1-x_1^2}$};

			
			\node[font=\small] at (axis cs:0, 4.2) {\textbf{Projection on $x_1$ axis: 1-cells}};
			\node[font=\small] at (axis cs:0, 3.9) {$\mathcal{S}_1$};
			
			\draw[->, thick] (axis cs:-2.2, 3.4) -- (axis cs:2.2, 3.4) node[below, font=\scriptsize] {$x_1$};
			
			\draw[blue, line width=2pt] (axis cs:-2.2, 3.4) -- (axis cs:-1, 3.4);
			\draw[color=green!60!black, line width=2pt] (axis cs:-1, 3.4) -- (axis cs:1, 3.4);
			\draw[cyan, line width=2pt] (axis cs:1, 3.4) -- (axis cs:2.2, 3.4);
			
			\draw[thick] (axis cs:-1, 3.5) -- (axis cs:-1, 3.3) node[below=2pt, font=\scriptsize] {$-1$};
			\draw[thick] (axis cs:1, 3.5) -- (axis cs:1, 3.3) node[below=2pt, font=\scriptsize] {$1$};
			
			\node[dot] at (axis cs:-1, 3.4) {};
			\node[dot, color=olive!60!black, fill=olive!60!black] at (axis cs:1, 3.4) {};
			
			\node[above, blue, font=\scriptsize] at (axis cs:-1.6, 3.4) {$(-\infty,-1)$};
			\node[above, color=colorMark, font=\scriptsize] at (axis cs:-1.0, 3.5) {$\{-1\}$};
			\node[above, color=green!60!black, font=\scriptsize] at (axis cs:0, 3.4) {$(-1,1)$};
			\node[above, color=olive!60!black, font=\scriptsize] at (axis cs:1.0, 3.5) {$\{1\}$};
			\node[above, cyan, font=\scriptsize] at (axis cs:1.6, 3.4) {$(1,\infty)$};

			\draw[->, dashed, colorBoundary] (axis cs:-1.6, 3.2) -- (axis cs:-1.6, 2.3);
			\draw[->, dashed, colorBoundary] (axis cs:-1.0, 3.2) -- (axis cs:-1.0, 2.3);
			\draw[->, dashed, colorBoundary] (axis cs:0.0, 3.2)  -- (axis cs:0.0, 2.3);
			\draw[->, dashed, colorBoundary] (axis cs:1.0, 3.2)  -- (axis cs:1.0, 2.3);
			\draw[->, dashed, colorBoundary] (axis cs:1.6, 3.2)  -- (axis cs:1.6, 2.3);
			
		\end{axis}
	\end{tikzpicture}
	\caption{Cylindrical Algebraic Decomposition (CAD) adapted to the unit disk.}
	\label{fig:cad_example}
\end{figure}
\begin{example}
	Here is a CAD of $\R^2$ adapted to the closed unit disk $\{(x_1,x_2)\mid x^2_1+x^2_2=1\}$ and the polynomial $x_1^2+x_2^2-1$. The 1-cells are given by $\mathcal{S}_1\leq \{(-\infty,-1),\{-1\},(-1,1),\{1\},(1,\infty)\}$. There is only one 2-cell above $(-\infty,-1)$ (\resp $(1,\infty)$), namely the cylinder $(-\infty,-1)\times\R$ (\resp $(1,\infty)\times \R$). Above $\{-1\}$ there are three 2-cells: $\{-1\}\times (-\infty, 0)$, $\{-1\}\times \{0\}$ and $\{-1\}\times (0,\infty)$ (and similarly above $\{1\}$)),  \ie $\xi_{(-1,1),1}=0$. Finally, above $(-1,1)$ there are five 2-cells defined by the two semi-algebraic functions $\xi_{(-1,1),1}(x_1)=-\sqrt{1-x_1^2}$ and $\xi_{(-1,1),1}(x_1)=\sqrt{1-x_1^2}$ representing the elements of $(-1,1)\times \R$ which are strictly above the unit disk, on the positive boundary of the unit disk, in the interior of the unit disk, on the negative boundary of the unit disk and strictly below the unit disk respectively. The closed unit disk is the union of 3 of the 2-cells, and the CAD is therefore adapted as claimed. The 1-cells and 2-cells of this example are represented in \cref{fig:cad_example}.
\end{example}

The problem of computing a CAD adapted to a finite collection of semi-algebraic sets or polynomials was initially solved by Collins in~\cite{collins1976} and later refined in \eg~\cite{arnon1984cylindrical,collins1991partial,strzebonski2000solving,basu2006algorithms}.
Algorithmically, CAD does not produce an explicit description of the semi-algebraic functions $\xi_{S,i}, 1\leq i\leq k_S$, for every cell $S$ in the decomposition (this would in any case be impossible since the roots of polynomials of degree $\geq 5$ do not in general have algebraic expressions). Instead, it relies on the fact that it is enough to find a single \emph{sample point} in each cell $S\in \mathcal{S}_i$ to be able to express these functions \emph{implicitly} as the roots of so-called `projection polynomials' $p:\R^{i+1}\to \R^i$ denoted $\mathrm{Root}_j(p(x_1, \ldots, x_i,\cdot ))$ and whose existence is guaranteed by the implicit function theorem and the definition of these polynomials, see~\cite{arnon1984cylindrical,basu2006algorithms}. Formally, given a finite family polynomials $P_1, \ldots, P_k$ in $n$ variables and of degree bounded by $d$, the CAD algorithm returns the tree structure of a CAD adapted to $P_1, \ldots, P_k$ and a (rational or algebraic) sample point in each cell~\cite{arnon1984cylindrical}. The complexity of this algorithm is $(kd)^{O(1)^{n-1}}$, \ie polynomial in the degree but doubly exponential in the number of variables \cite[\S 11]{basu2006algorithms}. 
In our experimental evaluation, we use Mathematica's implementation of CAD, \texttt{CylindricalDecomposition[]}~\cite{strzebonski2000solving} and the C\texttt{++} implementation QEPCAD~\cite{10.1145/968708.968710,collins1991partial} for which we have developed a Python library\footnote{A similar function is \texttt{Reduce[]}, which mainly uses  \texttt{CylindricalDecomposition[]}, but also uses other methods for quantifier elimination (virtual substitution, Gr\"{o}bner bases). We therefore choose to use \texttt{CylindricalDecomposition[]} directly.}.

\paragraph{Using CAD for arithmetic on RVs.}
Our aim is to compute the integral \eqref{eq:general_integral}  when $f(x_1, \ldots, x_n)=f_1(x_1)\ldots f_n(x_n)$, \ie when the $n$ input variables are independent, for an arbitrary arithmetic expression $\varphi(x_1, \ldots, x_n)$ (in particular variables can be repeated). To achieve this, we will construct a CAD $(\mathcal{S}_i)_{1\leq i\leq n}$ adapted to the semi-algebraic set 
\[
T=\{(t, x_1, \ldots,x_n)\in \R^{n+1}\mid \varphi(x_1, \ldots, x_n)-t< 0\}.
\]
Moreover, we will choose this CAD in such a way that the 1-cells describe the possible ranges of values taken by the variable $t$. Beyond this, we can assume w.l.o.g.\ that the projections follow the variable ordering $x_1, \ldots, x_n$, \ie the 2-cells describe the ranges of possible values for the variables $(t,x_1)$, etc. We will also assume that we can restrict our attention to the part of the CAD partitioning $T$, that is to say the cells below the set $\mathcal{S}_{n+1}$ of $(n+1)$-cells partitioning $T$.  

From the definition of a CAD, there are $2k_S+1$ cells of level $i+1$ above any given $i$-cell $S$, $1\leq i\leq n$. However, $k_S$ of these have dimensionallity $i$, being the graph a semi-algebraic function $\xi_{S,j}: S\to \R$. They will therefore not contribute to our integration problem \eqref{eq:general_integral} since all cells above them will necessarily have a dimension that is smaller than their ambient space and thus be of (Lebesgue) measure zero. We will therefore ignore these cells and only consider the $k_S+1$ bands above $S$ defined by the semi-algebraic maps $\xi_{S,1}, \ldots\xi_{S,k_S}$; this simplification is considered for the same reason in~\cite{strzebonski2000solving}.  Finally, to write down our expression for \eqref{eq:general_integral} we introduce the following convention. We will write $S_{i_1}, 0\leq i_1\leq k_1$ for the cells of level 1 that are open sets (and defined by $k_1-1$ points), we will write $S_{i_1, i_2}, 1\leq i_2\leq k_{S_{i_1}}$ for the (two-dimensional) 2-cells above $S_{i_1}$, and so on until $S_{i_1,\ldots,i_{n+1}}, 1\leq i_{n+1}\leq k_{S_{i_1,\ldots,i_{n}}}$ for the cells of level $n+1$ above the cell  $S_{i_1,\ldots,i_{n}}$. With this notation and a CAD adapted to the set $T$ we can write the CDF $F(t)$ in~\eqref{eq:general_integral} as the piecewise function defined at $t\in S_{i_1}$ by equation \eqref{eq:CAD_CDF} below. This expression illustrates the bad complexity of CAD \wrt the number of variables. However, we find in \cref{sec:experiments} that evaluating the CAD in \eqref{eq:CAD_CDF} is manageable in practice, the computational bottleneck being the symbolic evaluation of the integrals.
\begin{align}
	\hspace{-10ex }F(t) &=\int_{\{(x_1, \ldots,x_n)\mid \varphi(x_1, \ldots,x_n)\leq t\}}f_1(x_1)\ldots f_n(x_n)~dx_1\ldots dx_n \nonumber\\
	& = \hspace{-4ex}\sum_{\substack{0 \leq i_2\leq  k_{S_{i_1}} \\ 0\leq i_3\leq k_{S_{i_1,i_2}} \\ \cdots  \\ 0\leq i_{n+1}\leq k_{S_{i_1,\ldots,i_n}} }} 
	\hspace{-2ex}\int_{\xi_{S_{i_1},i_2}(t)}^{\xi_{S_{i_1},i_2+1}(t)}f_1(x_1)\int_{\xi_{S_{i_1,i_2},i_3}(t,x_1)}^{\xi_{S_{i_1,i_2},i_3+1}(t,x_1)}f_2(x_2)\ldots \nonumber
\end{align}
\vspace{-10ex}
\begin{align}
	&\hspace{22ex} \int_{\xi_{S_{i_1,\ldots,x_n},i_{n+1}}(t,x_1,\ldots,x_{n-1})}^{\xi_{S_{i_1,\ldots,i_n},i_{n+1}+1}(t,x_1,\ldots,x_{n-1})} f_n(x_n)~dx_1\ldots dx_n \label{eq:CAD_CDF}
\end{align}
	
\section{Stochastic Programs}\label{sec:programs}

\paragraph{Syntax.} 
We develop our exact/non-Monte-Carlo based evaluation for simple in-line probabilistic programs, specifically if-else statements, square roots and arithmetic expressions over independent continuous random variables. The grammar of programs that we support is given by:
\begin{align}
\text{Expressions:} & &	
e & ::= X \in \text{Var} \mid c \in \mathbb{R} \mid e + e \mid e - e \mid e \times e \mid e \div e \mid \sqrt{e} \nonumber\\
\text{Tests:} & & B & ::= e > 0 \mid \lnot B\mid B\wedge B \mid B\vee B \nonumber\\
\text{Programs:} & & p&::= e \mid \ite{B}{p}{p}\label{def:syntax}
\end{align}
where $\Var = \{X_1, X_2, ..., X_n\}$ is a finite and enumerated set of variables.

\paragraph{Semantics.}
We write $\AST(e)$ for the abstract syntax tree of $e$, $\Var(e)$ for the list of variables in $e$ ordered by the enumeration of $\Var$, $\Var(\AST(e))$ for the list of variables at the leaves of $\AST(e)$ ordered from left to right, and we define $M_e=\lvert \Var(e)\rvert$ and $N_e=\lvert\Var(\AST(e))\rvert$. Thus for $e=X_1\div(X_1+X_2)$, $\Var(e)=\{X_1, X_2\}$, $M_e=2$, $\Var(\AST(e))=(X_1, X_1, X_2)$ and $N_e=3$. We overload the syntactic operators $\{+,-,\times, \div, \sqrt{}\}$ to denote their usual interpretations on $\R$. 

\emph{Semantics of expressions}. We define both a deterministic semantics (used for tests) and a probabilistic semantics (used for programs) of an expression $e$ and denote them by $\dsem{e}$ and $\sem{e}$ respectively. We opt not to follow the familiar approach consisting in updating a global valuation function over all the variable names in the language. Instead, we define $\dsem{e}$ as a map $\R^{M_e}\to\R$ (\ie we only consider the variable in $e$ as inputs).  For this, we will need a bookkeeping function defined as follows: clearly $M_e\leq N_e$ and by combining the maps $\cp: \Var\to \Var\times \Var, X_i\mapsto(X_i, X_i)$ and $\sw: \Var\times \Var\to \Var\times \Var, (X_i,X_j)\mapsto (X_j,X_i)$ we can construct a map $\varphi_e: \Var^{M_e}\to \Var^{N_e}$ that turns the list $\Var(e)$ into the list $\Var(\AST(e))$. The same formal combination of $\cp$ and $\sw$ defines a map  $\varphi_e: \R^{M_e}\to \R^{N_e}$ that populates the leaves of $\AST(e)$ with $M_e$ concrete values, one for each variable in $\Var(e)$. 
We now define the semantics in two steps. First we inductively define $\dsem{\AST(e)}: \R^{N_e}\to\R$ as:
{\small
\begin{align*}
	&\text{Base cases:}& &\dsem{\AST(X_i)}:\R\to\R, x\mapsto x \qquad \dsem{\AST(c)}:\R^0\to\R, 0 \mapsto c \\
	&\text{Inductive case:}& &  \dsem{\AST(e_1\ast e_2)} = \ast\circ (\dsem{\AST(e_1)}\times \dsem{\AST(e_2)})  \quad\ast\in\{+, -, \times, \div\}\\
	& & & \dsem{\AST(\sqrt{e})} = \sqrt{}\circ \dsem{\AST(e)}
\end{align*}}
\hspace{-1ex}where $\R^0=\{0\}$, the 0-dimensional real vector space, and $\dsem{\AST(e_1)}\times \dsem{\AST(e_2)}:\R^{N_{e_1}}\times \R^{N_{e_2}} \simeq \R^{N_e} \to\R\times\R$ is the pairing of the maps $\dsem{\AST(e_1)}, \dsem{\AST(e_2)}$. We can now define
\[
	\dsem{e}=\dsem{\AST(e)}\circ \varphi_e
\]
where $\varphi_e$ is the bookkeeping map defined earlier which `prepares' the inputs of the function $\dsem{e}$ so that they can be fed to $\AST(e)$. This function will introduce perfect correlations/dependencies (when variable names are copied) in the probabilistic semantics that we define next. Thus, unless it is a permutation, $\varphi_e$ is the mathematical representation of why we cannot simply inductively evaluate the probabilistic semantics using the formulas in \eqref{eq:independent}.

The probabilistic semantics is defined in terms of the deterministic one via:
\begin{align}
	\sem{e}: \Prob(\R^{M_e})\to\Prob(\R), \Pro\mapsto \dsem{e}_\ast\Pro\label{eq:sem_expression}
\end{align}
where $\Prob(\R^k)$ is the space of probability measures on the Borel space $\R^{k}$, $\Pro$ is the input (joint) distribution over the variables in $\Var(e)$ and $\dsem{e}_\ast\Pro$ is the output distribution defined as the pushforward of $\Pro$ through $\dsem{e}$, i.e.\ $\dsem{e}_\ast\Pro(A)=\Pro(\dsem{e}\inv(A))$ for any Borel measurable $A\subseteq \R$. In practice, we will associate with each $X_i\in\Var(e)$ a probability distribution $\Pro_i\in\Prob(\R)$, and take the input distribution to be the corresponding product of these distributions (i.e.\ we will assume that the input random variables are independent). For example, we might be interested in computing a distribution like
\begin{align}
	\sem{X_1\div(X_1+X_2)}(\mathsf{Uniform}(0,1)\otimes\mathsf{Uniform}(2,3)) \label{eq:pushfwd_example}
\end{align}
In the language of random variables, this is the distribution of $X_1\div(X_1+X_2)$ when $X_1\sim \mathsf{Uniform}(0,1)$ and $X_2\sim \mathsf{Uniform}(2,3)$ are independent. By definition of the pushforward, the CDF of \eqref{eq:pushfwd_example} is
\begin{align*}
&\sem{X_1\div(X_1+X_2)}(\mathsf{Uniform}(0,1)\otimes\mathsf{Uniform}(2,3))\big((-\infty,t)\big) \\
=~ & \mathsf{Uniform}(0,1)\otimes\mathsf{Uniform}(2,3)\big(\{(x_1,x_2)\mid \dsem{X_1\div(X_1+X_2)}(x_1,x_2)<t\}\big)\\
=~& \int_{\left\{(x_1,x_2)\mid \frac{x_1}{x_1+x_2}<t\right\}} \one_{[0,1]}(x_1)\one_{[2,3]}(x_2)~dx_1dx_2
\end{align*}
which is of the shape \eqref{eq:general_integral}. CAD is the tool that will allow us to compute these integrals, and therefore the probabilistic semantics of expressions and programs.

\emph{Semantics of tests}.  The semantics of tests is defined inductively as expected
{\small \begin{align*}
	&\text{Base case:}  & &\dsem{e>0}=\{x\in\R^{M_e}\mid \dsem{e}(x)>0\}
	\\
	&\text{Inductive case:}& &\dsem{\neg B}=\dsem{B}^c, \dsem{B_1\wedge B_2}=\dsem{B_1}\cap\dsem{B_2}, \dsem{B_1\vee B_2}=\dsem{B_1}\cup\dsem{B_2}, 
\end{align*}}

\emph{Semantics of programs}. We need only define the semantics of conditional branching. We follow \cite{kozen1979semantics,dahlqvist2020semantics}: every test $B$ defines a measurable subset $\dsem{B}\subseteq \R^{M_B}$ and this subset defines a linear operator $z_B: \Prob_{\leq 1}(\R^{M_B})\to\Prob_{\leq 1}(\R^{M_B})$ -- where $\Prob_{\leq 1}(\R^{M_B})$ is the set of \emph{sub}distributions on $\R^{M_B}$ -- defined by $z_B(\Pro)(A)=\Pro(A\cap \dsem{B})$. Clearly, $z_B(\Pro)$ is in general a subdistribution since the total mass of $\R^{M_B}$ under $z_B(\Pro)$ is $z_B(\Pro)(\R^{M_B})=\Pro(\R^{M_B}\cap \dsem{B})=\Pro(\dsem{B})$. In terms of density functions, $z_B$ \emph{z}eroes out the PDF of $\Pro$ on $\dsem{B}^c$. Thus the density of $z_{X_1>0}(\mathsf{Normal}(0,1))$ for example is the positive half of the bell curve. We define
\begin{align}
	\sem{\ite{B}{p_1}{p_2}} = \sem{p_1}\circ z_{B}+ \sem{p_2}\circ z_{\neg B}\label{eq:sem_ite}
\end{align}
Intuitively,  $z_B$ and $z_{\neg B}$ split the input into a subdistribution over $M_B$-tuples that satisfes the guard and another that satisfies its complement; these are passed to their respective branches, then re-combined to produce the output distribution. 
	\section{Algorithms}\label{sec:algorithms}
In this section we describe the function $\Proc{CDF}{}$  from \cref{algo:CDF} that takes as input a program $p(X_1, \ldots, X_k)$ from the syntax \eqref{def:syntax} and for each $X_i, 1\leq i\leq k$, a PDF $f_i: (a_i,b_i)\to \R^+, a_i,b_i\in \R\cup\{-\infty,\infty\}$, and returns the CDF
\[
F_p(t) = \sem{p}(\Pro_1\otimes \ldots\otimes \Pro_k)\big((-\infty,t)\big).
\]

\paragraph{Semi-algebraic input set generation.} The first step consists in specifying the semi-algebraic set over which we will integrate the input PDF following \eqref{eq:general_integral}. This semi-algebraic set
will be defined from three types of symbolic inequalities:
\begin{enumerate}
	\item constraints on the support of the input variables, \viz $(a_i<X_i)\wedge (X_i<b_i)$,
	\item the guards of the if-else statements and their complements,
	\item the expressions $e(X_1, \ldots, X_n)<t$ at the leaves of the program's AST that specify the problem of computing the CDF.
\end{enumerate}

The symbolic semi-algebraic set associated with the input data described above is constructed recursively by the function \Proc{SAS}{} in \cref{algo:SAS} applied to the program $p(X_1, \ldots,X_n)$ and the symbolic variables range constraints
\begin{align}
	S=\bigwedge_{i=1}^k\big((a_i< X_i)\wedge (X_i<b_i)\big). \label{eq:initial_cond}
\end{align}

\begin{algorithm}[h!]
	\caption{Computing the Semi-Algebraic Set associated with the problem}
	{\small
		\begin{algorithmic}\label{algo:SAS}
			\Require{Program $p(X_1, \ldots, X_k)$, semi-algebraic set $S$ initialised as \eqref{eq:initial_cond}}
			\Ensure{The symbolic semi-algebraic set associated with computing $p$'s output CDF}
			\Function{SAS}{$p, S$}
			\If{$p=\ite{B}{p_1}{p_2}$} 
			\State $B \gets$ \Call{RemoveRoots}{$B$}
			\State ifSAS $\gets$ \Call{SAS}{$p_1$, $B\wedge S$}
			\State elseSAS $\gets$ \Call{SAS}{$p_2$, $\neg B \wedge S$}
			\State \Return (ifSAS $\vee$ elseSAS)
			\Else \Comment{Base case: $p(X_1, \ldots, X_n)$ is an expression}
			\State \Return $S\wedge$ \Call{RemoveRoots}{$p-t<0$}
			\EndIf
			\EndFunction
			\State
			\Require{A boolean combination $B$ of inequalities $e_1<0, \ldots, e_p<0$ in DNF}
			\Ensure{A boolean combination of root-free inequalities that define the same semi-algebraic set as $B$ in DNF}
			\Function{RemoveRoots}{$B$}
			\State output $\gets$ $\varnothing$
			\For{disjunct in $B$}
			\State conj $\gets$ $\varnothing$
			\For{conjunct in disjunct}
			\State $e_1, e_2, e_3$ $\gets$ \Call{CanonicalForm}{conjunct} 
			\State \Comment{Put inequalities in the form $(\sqrt{e_1}\times e_2)+e_3<0$, see \cref{prop:sqrt_free}}
			\State I1 $\gets$  \Call{RemoveRoots}{$e_1 > 0\wedge e_2>0\wedge e_1e_2^2<e_3^2$}
			\State I2 $\gets$ \Call{RemoveRoots}{$e_1 > 0\wedge e_2<0\wedge  e_1e_2^2>e_3^2$}
			\State conj $\gets$ conj $\wedge$  (I1 $\vee $ I2)
			\EndFor
			\State output $\gets$ output $\vee$ conj 
			\EndFor
			\State \Return \Call{DNF}{output}
			\EndFunction
		\end{algorithmic}	
	}
\end{algorithm}

The function $\Proc{SAS}{}$ of \cref{algo:SAS} calls a function to get rid of square root expressions which, strictly speaking, are not part of the syntax defining semi-algebraic sets. It is not hard to check that this can always be done as is proved below. Our syntax only includes square roots, but the result below can easily be generalised to all roots and we could therefore have included them in our syntax.
\begin{proposition}\label{prop:sqrt_free}
	For every inequality of the form $e(X_1, \ldots, X_k)<0$, where $e(X_1, \ldots, X_k)$ is constructed from \eqref{def:syntax}, $\dsem{e(X_1, \ldots, X_k)<0}$ is semi-algebraic.
\end{proposition}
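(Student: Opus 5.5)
We argue by induction on the number of square-root nodes in $\AST(e)$, with \cref{prop:division_free} as the base case: if $e$ contains no $\sqrt{}$, it is an arithmetic expression and $\dsem{e<0}$ is semi-algebraic. One should also fix the domain convention first. The set $\dsem{e<0}$ consists of points where $e$ is defined (radicands nonnegative, denominators nonzero) and $e<0$. The definedness conditions are themselves inequalities and equalities on strictly smaller subexpressions, so by the induction hypothesis they contribute semi-algebraic sets, and we intersect with them at the end.

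For the inductive step, we pick an innermost-outermost occurrence: a subterm $\sqrt{e_1}$ of $e$ such that $e_1$ contains no further roots, or alternatively a maximal one; we take the former so that $e_1$ is root-free. We treat $s=\sqrt{e_1}$ as a fresh symbol and mimic the rewriting of \cref{prop:division_free}, viewing $e$ as a rational expression in $s$ and the other (root-containing but fewer) subterms. We push divisions to the top and use $s^2=e_1$ to reduce powers of $s$, which gives the canonical form
\[
e \;=\; \frac{\sqrt{e_1}\,e_2+e_3}{\sqrt{e_1}\,e_4+e_5},
\]
where $e_2,\dots,e_5$ have strictly fewer root nodes than $e$. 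Multiplying numerator and denominator by the conjugate $\sqrt{e_1}e_4-e_5$, or simply splitting on the sign of the denominator as in \cref{prop:division_free}, reduces the problem to sets of the form $\{\sqrt{e_1}e_2+e_3<0\}$ and $\{\sqrt{e_1}e_2+e_3>0\}$. This matches the \textsc{CanonicalForm} step of \cref{algo:SAS}.

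Next we eliminate the root. On $\{e_1\ge 0\}$ we have
\[
\sqrt{e_1}\,e_2+e_3<0 \iff \big(e_2\ge0\wedge e_3<0\wedge e_1e_2^2<e_3^2\big)\vee\big(e_2<0\wedge(e_3<0\vee e_1e_2^2>e_3^2)\big).
\]
This is a case split on the signs of $e_2$ and $e_3$ followed by squaring, which is legitimate because both sides of the compared quantity $\sqrt{e_1}|e_2|$ versus $|e_3|$ are nonnegative. Every atom now involves only $e_1,e_2,e_3$ and products of them, each having fewer square-root nodes than $e$. Each atom can be rewritten as some $e'<0$, $e'>0$ or $e'\ge 0$ where $e'$ has fewer roots. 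By the induction hypothesis, applied also to $-e'$ and to complements, each atom defines a semi-algebraic set. Closure of semi-algebraic sets under finite Boolean operations then finishes the argument.

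The main obstacle is the normal-form step: showing rigorously that any expression can be put in the shape $(\sqrt{e_1}e_2+e_3)/(\sqrt{e_1}e_4+e_5)$ with a strictly decreasing root count, while keeping track of where the rewriting changes the domain of definition. Cancellations can enlarge the domain, for example $(e_1\cdot e_2)\div e_2$. We would handle this by carrying the original definedness constraints alongside the rewriting rather than relying on the rewritten expression's own domain. An alternative, cleaner route that we would fall back on is to note that $\dsem{e}$ has a semi-algebraic graph: introduce a new variable $y_j$ for each subterm, with constraints $y_j=y_a\ast y_b$, $y_jy_b=y_a\wedge y_b\neq0$, or $y_j^2=y_a\wedge y_j\ge0$. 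Then $\dsem{e<0}$ is the projection of a semi-algebraic set, which is semi-algebraic by Tarski–Seidenberg (\cite[\S 2]{basu2006algorithms}).
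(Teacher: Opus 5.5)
\paragraph{Gaps in the main route.} Your main route has a genuine gap in which radical you eliminate. Eliminating an \emph{innermost} root $s=\sqrt{e_1}$ by squaring requires writing $e$ as $(s\,e_2+e_3)/(s\,e_4+e_5)$ with $e_2,\dots,e_5$ free of $s$. This is impossible when $s$ sits under another radical. For $e=\sqrt{1+\sqrt{X_1}}-1$ the innermost root is $s=\sqrt{X_1}$, but $e=\sqrt{1+s}-1$ is not rational in $s$. The only decomposition available is $e_2=0$, $e_3=e$, so nothing decreases and the induction stalls. The squaring trick must be applied to a radical that lies under no other radical. The paper takes the first root in preorder, i.e.\ an outermost one, and lets the recursion handle roots inside $e_1$.

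Second, even when the canonical form exists, the syntactic number of root nodes need not drop. For $e=\sqrt{X_1}\sqrt{X_2}+\sqrt{X_3}$ (three root nodes), eliminating $\sqrt{X_1}$ yields the atom $X_1\sqrt{X_2}\sqrt{X_2}-\sqrt{X_3}\sqrt{X_3}<0$, which has four, because $e_2$ and $e_3$ get squared. A measure that works is the number of \emph{distinct} radical subterms, with the eliminated radical $s$ chosen of maximal nesting depth. Such an $s$ occurs inside no other radical. Hence every atom produced from $e_1,e_2,e_3$ and the reduction $s^2=e_1$ contains only radicals of $e$ other than $s$. Your sign case split itself is correct. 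It is in fact more complete than the paper's displayed decomposition, which misses the points with $e_2<0\wedge e_3\le 0$ and those with $e_1e_2=0\wedge e_3<0$.

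\paragraph{The fallback.} Your fallback, by contrast, is a complete proof, and genuinely different from the paper's. Introduce one auxiliary variable per subterm, with constraints $y_j=y_a\ast y_b$, $y_jy_b=y_a\wedge y_b\neq 0$, or $y_j^2=y_a\wedge y_j\ge 0$. These constraints describe the graph of $\dsem{e}$ by polynomial (in)equalities. They also encode the domain of definition automatically: there is no witness where a radicand is negative or a denominator vanishes. Then $\dsem{e<0}$ is the projection of this set intersected with $y_{\mathrm{top}}<0$, hence semi-algebraic by Tarski--Seidenberg.

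\paragraph{Comparison with the paper.} The projection argument buys rigour cheaply: it needs no normal form, no termination measure, and no domain-changing rewrites. For instance, the paper's step $\sqrt{e_1e_2}\to\sqrt{e_1}\sqrt{e_2}$ is invalid when $e_1,e_2<0$. What it does not give directly is what the paper's elimination is really for: an explicit root-free formula in the original variables that \textsc{RemoveRoots} can hand to the CAD. Obtaining one from your lifted formula requires an extra quantifier-elimination step over the auxiliary variables. For the proposition as stated, lead with the projection argument. Keep the repaired explicit elimination only as the justification of the algorithm.
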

\begin{proof}
	Using \cref{prop:division_free} we can assume w.l.o.g.\ that $e$ is subtraction- and division-free. By replacing all subterms $\sqrt{e_1\times e_2}$ by $\sqrt{e_1}\times \sqrt{e_2}$, repeatedly applying distributivity, and using commutativity and associativity we can assume that the first square root node in $\AST(e)$ (using preorder traversal) occurs in a tree of the shape $(\sqrt{e_1}\times e_2)+e_3$ (if $\sqrt{}$ is the root, take $e_2=1, e_3=0$, and similarly if it sits only below an addition, or only below a multiplication; if there is no root take $e_1=0$).
	We can now eliminate the square root in the arithmetically obvious way:
	\begin{align*}
		& \{x\in \R^k\mid (\sqrt{e_1(x)}\times e_2(x))+e_3(x)<0\}\\
		=~& \{x\in \R^k\mid e_1(x)>0\wedge e_2(x)>0\wedge e_3(x)<0\wedge e_1(x)\times (e_2(x))^2<(e_3(x))^2\}
		\\
		& \cup \{x\in \R^k\mid e_1(x)>0\wedge e_2(x)<0\wedge e_3(x)>0\wedge e_1(x)\times (e_2(x)^2>(e_3(x)
		)^2\}
	\end{align*}
	By recursively applying the procedure described above to these new inequalities if they contain square roots we eliminate all square roots from the initial expression and are left with boolean combinations of semi-algebraic sets, \viz a semi-algebraic set.\hfil\qed
\end{proof}

Given the semantics of if-else branching described in \eqref{eq:sem_ite}, it may not immediately be clear that aggregating the preconditions, the guards, their complements, and the inequalities $p-t<0$ at each leaf in the way described by \cref{algo:SAS} will produce the desired semantics. The following result shows that it does.
\begin{proposition}\label{prop:semantic_correctness}
	\cref{algo:SAS} is semantically correct in the sense that 
	\[
		\sem{p}(\mathbb{P}_1\otimes \ldots\otimes \mathbb{P}_k)\big((-\infty,t)\big)=\mathbb{P}_1\otimes \ldots\otimes \mathbb{P}_k\big(\dsem{\mathrm{SAS}(p,S)}\big)
	\]
\end{proposition}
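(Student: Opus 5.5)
I would prove the statement by structural induction on the program $p$. To make the induction go through, I would first strengthen it to a statement about an arbitrary precondition $S$ (a semi-algebraic subset of $\R^k$, identified with its denotation) rather than only the initial one \eqref{eq:initial_cond}. Write $\Pro=\Pro_1\otimes\ldots\otimes\Pro_k$. The generalised claim is
\[
\sem{p}(z_S(\Pro))\big((-\infty,t)\big)=\Pro\big(\dsem{\mathrm{SAS}(p,S)}\big).
\]
Here $z_S$ is the zeroing operator, extended from tests to arbitrary measurable sets. The original statement follows because, for $S$ as in \eqref{eq:initial_cond}, the set $\dsem{S}$ is the product of the supports $(a_i,b_i)$ and so has full $\Pro$-measure, hence $z_S(\Pro)=\Pro$.

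Throughout, all semantics are read over the common ambient space $\R^k$ rather than $\R^{M_e}$ or $\R^{M_B}$. To justify this I would note a bookkeeping lemma: if $\pi:\R^k\to\R^{M_e}$ is the projection onto the variables occurring in $e$, then $\dsem{e}\circ\pi$ pushes $\Pro$ forward to the same measure as $\dsem{e}$ pushes $\pi_\ast\Pro$. The same holds for the test sets, since $\pi^{-1}(\dsem{B})$ is the cylinder over $\dsem{B}$. This lets tests and expressions with different variable sets be combined. I would also use that the semantics is linear in subprobability measures, which is immediate from the definition of pushforward.

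\emph{Base case: $p=e$ is an expression.} By \eqref{eq:sem_expression},
\[
\sem{e}(z_S\Pro)\big((-\infty,t)\big)=z_S\Pro\big(\dsem{e}^{-1}(-\infty,t)\big)=\Pro\big(\dsem{S}\cap\{x\mid \dsem{e}(x)<t\}\big).
\]
The algorithm returns $S\wedge\mathrm{RemoveRoots}(e-t<0)$, so the base case reduces to checking that \textsc{RemoveRoots} preserves denotations. This is exactly the content of the proof of \cref{prop:sqrt_free}, together with the rewriting steps of \cref{prop:division_free}.

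\emph{Inductive case: $p=\ite{B}{p_1}{p_2}$.} By \eqref{eq:sem_ite} and linearity,
\[
\sem{p}(z_S\Pro)=\sem{p_1}(z_Bz_S\Pro)+\sem{p_2}(z_{\neg B}z_S\Pro).
\]
Since $z_Bz_S=z_{B\wedge S}$ and $\dsem{\mathrm{RemoveRoots}(B)}=\dsem{B}$, the induction hypothesis applied to $p_1$ with precondition $B\wedge S$, and to $p_2$ with $\neg B\wedge S$, gives
\[
\Pro\big(\dsem{\mathrm{SAS}(p_1,B\wedge S)}\big)+\Pro\big(\dsem{\mathrm{SAS}(p_2,\neg B\wedge S)}\big).
\]
To turn this sum into $\Pro$ of the disjunction, I need the two sets to be disjoint. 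I would prove an auxiliary invariant by the same induction: $\dsem{\mathrm{SAS}(q,S')}\subseteq\dsem{S'}$ for every program $q$ and precondition $S'$. This holds because every branch of the recursion conjoins onto $S'$. The two sets therefore lie inside $\dsem{B}$ and $\dsem{B}^c$ respectively, so they are disjoint and additivity gives $\Pro$ of their union, which is $\dsem{\text{ifSAS}\vee\text{elseSAS}}$.

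The main obstacle I expect is the root-elimination step. The displayed decomposition in \cref{prop:sqrt_free} must be exact: it needs the boundary cases (for instance $e_2=0$ or $e_1=0$) to be handled. It also needs the sign conditions on $e_3$, which \cref{algo:SAS} omits from I1 and I2, to be accounted for. Moreover, square roots of negative arguments have undefined semantics. If exact equality cannot be shown, I would instead establish equality up to a $\Pro$-null set. Such discrepancies lie in zero sets of nonzero polynomials, which are Lebesgue-null, and the input distributions are continuous, so they do not affect either side of the identity.
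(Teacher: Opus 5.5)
Your induction is the paper's. It too strengthens the claim to a precondition, proving $\sem{p}\circ z_B(\Pro_1\otimes\ldots\otimes\Pro_k)\big((-\infty,t)\big)=\Pro_1\otimes\ldots\otimes\Pro_k\big(\dsem{\mathrm{SAS}(p,B\wedge S)}\big)$ and then taking $B$ to be a tautology. Its inductive step likewise goes through $z_{B'}\circ z_B=z_{B'\wedge B}$, the induction hypothesis, and disjointness plus additivity. Your invariant $\dsem{\mathrm{SAS}(q,S')}\subseteq\dsem{S'}$ and your reduction of the base case to ``\textsc{RemoveRoots} preserves denotations'' make explicit what the paper uses silently, in its disjointness step and in its appeals to ``the definition of \cref{algo:SAS}''.

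The plan breaks at the rescue in your last paragraph. You rightly notice that \cref{algo:SAS} drops the sign conditions on $e_3$, but that discrepancy is not confined to the zero set of a polynomial. Take $p=\sqrt{X_1}$ with $X_1\sim\unif{0}{1}$. The leaf inequality $\sqrt{X_1}-t<0$ has $e_1=X_1$, $e_2=1$, $e_3=-t$, so I1 is $X_1>0\wedge X_1<t^2$. At $t=-\tfrac{1}{2}$ the right-hand side of the proposition is $\tfrac{1}{4}$, while the left-hand side is $0$.

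Falling back on \cref{prop:sqrt_free} does not help either, for two reasons. First, its split omits the region $e_1\geq 0\wedge e_2<0\wedge e_3<0$, where the inequality holds automatically; for $p=-\sqrt{X_1}$ at $t=\tfrac{1}{2}$ it yields $\varnothing$ instead of a set of probability $1$. Second, its preliminary rewrite $\sqrt{e_1\times e_2}\mapsto\sqrt{e_1}\times\sqrt{e_2}$ changes definedness on $\{e_1<0\wedge e_2<0\}$, which generally has positive measure.

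What the base case needs is a corrected elimination lemma, proved by cases on the sign of $e_2$:
\begin{align*}
\{\sqrt{e_1}\,e_2+e_3<0\}={}&\{e_1\geq 0\wedge e_2>0\wedge e_3<0\wedge e_1e_2^2<e_3^2\}\\
\cup{}&\{e_1\geq 0\wedge e_2<0\wedge (e_3<0\vee e_1e_2^2>e_3^2)\}\cup\{e_1\geq 0\wedge e_2=0\wedge e_3<0\},
\end{align*}
together with a \textsc{RemoveRoots} that implements it. Only then are the leftover discrepancies (e.g.\ $e_1=0$ versus $e_1>0$, or vanishing denominators) proper algebraic sets that your null-set argument can absorb. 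The paper's proof has the same hole. As written, neither it nor your plan closes the base case, and for \cref{algo:SAS} taken literally the identity already fails for $p=\sqrt{X_1}$.
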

\begin{proof}
	We show by induction on the structure of $p$ that 
	\[
	\sem{p}\circ z_B(\mathbb{P}_1\otimes \ldots\otimes \mathbb{P}_k)\big((-\infty, t)\big) = \mathbb{P}_1\otimes\ldots \otimes \mathbb{P}_k\big(\dsem{\mathrm{SAS}(p, B\wedge S)}\big)
	\]
	The result then follows by picking any tautology for $B$, for example $1>0$.
	In the base case, $p=e(X_1, \ldots,X_k)$ is an expression and we have
	{\small{
			\begin{align*}
				&\sem{e}\circ z_B(\mathbb{P}_1\otimes\ldots \otimes \mathbb{P}_k)\big((-\infty,t)\big)\\
				\stackrel{(i)}{=}~&\dsem{e}_\ast z_B(\mathbb{P}_1\otimes\ldots \otimes \mathbb{P}_k)\big((-\infty,t)\big)\\
				\stackrel{(ii)}{=}~&\mathbb{P}_1\otimes\ldots \otimes \mathbb{P}_k\big(\{(x_1, \ldots, x_k)\mid \dsem{e}(x_1, \ldots,x_k)<t \wedge (x_1,\ldots,x_k)\in \dsem{B}\big\})\\
				\stackrel{(iii)}{=}~&\mathbb{P}_1\otimes\ldots \otimes \mathbb{P}_k\big(\{(x_1, \ldots, x_k)\mid \dsem{e}(x_1, \ldots,x_k)-t<0 \wedge (x_1,\ldots,x_k)\in \dsem{B\wedge S}\}\big) \\
				\stackrel{(iv)}{=}~&\mathbb{P}_1\otimes\ldots \otimes \mathbb{P}_k\big(\dsem{\mathrm{SAS}(e, B\wedge S)}\big)
	\end{align*}}}where ($i$) is from \eqref{eq:sem_expression}, ($ii$)  follows from the definition of pushforward and of $z_B$, ($iii$) follows from the definition \eqref{eq:initial_cond} of $S$ (which means that $\dsem{S}$ contains the support of $\mathbb{P}_1\otimes\ldots \otimes \mathbb{P}_k$) and ($iv$) is by definition of \cref{algo:SAS}.
	
	For the inductive case we have $p=\ite{B'}{p_1}{p_2}$ and therefore
	{\small{
			\begin{align*}
				&\sem{\ite{B'}{p_1}{p_2}}\circ z_B(\mathbb{P}_1\otimes\ldots \otimes \mathbb{P}_k)\big((-\infty, t)\big)\\
				\stackrel{(i)}{=}~&\sem{p_1}\circ z_{B'}\circ z_B(\mathbb{P}_1\otimes\ldots \otimes \mathbb{P}_k)\big((-\infty,t)\big)+\sem{p_1}\circ z_{\neg B'}\circ z_B(\mathbb{P}_1\otimes\ldots \otimes \mathbb{P}_k)\big((-\infty,t)\big)\\
				\stackrel{(ii)}{=}~&\sem{p_1}\circ z_{B'\wedge B}(\mathbb{P}_1\otimes\ldots \otimes \mathbb{P}_k)\big((-\infty,t)\big)+\sem{p_1}\circ z_{\neg B'\wedge B}(\mathbb{P}_1\otimes\ldots \otimes \mathbb{P}_k)\big((-\infty,t)\big)\\
				\stackrel{(iii)}{=}~& \mathbb{P}_1\otimes\ldots \otimes \mathbb{P}_k\big(\dsem{\mathrm{SAS}(p_1, B'\wedge B\wedge S)}\big)+ \mathbb{P}_1\otimes\ldots \otimes \mathbb{P}_k\big(\dsem{\mathrm{SAS}(p_2, \neg B'\wedge B\wedge S)}\big)\\
				\stackrel{(iv)}{=}~& \mathbb{P}_1\otimes\ldots \otimes \mathbb{P}_k\big(\dsem{\mathrm{SAS}(p_1, B'\wedge B\wedge S)\vee \mathrm{SAS}(p_2, \neg B'\wedge B\wedge S)}\big) \\
				\stackrel{(v)}{=}~& \mathbb{P}_1\otimes\ldots \otimes \mathbb{P}_k\big(\dsem{\mathrm{SAS}(p, S)}\big) 
	\end{align*}}}where ($i$) is by \eqref{eq:sem_ite}, ($ii$) is an immediate consequence of the definition $z_B$, ($iii$) is by the induction hypothesis, ($iv$) is by the fact that $\dsem{B'\wedge B\wedge S}$ and $\dsem{\neg B'\wedge B\wedge S}$ are disjoint and the additivity of measures, finally ($v$) follows from the definition of \cref{algo:SAS}.\hfil\qed
\end{proof}

\paragraph{CAD.} Once the semi-algebraic input set $S$ has been constructed by $\Proc{SAS}{}$, it is fed to $\Proc{CylindricalAlgebraicDecomposition}{S, \{t, X_1, \ldots, X_n\})}$ that will construct the associated CAD, using the specified variable ordering $\{t, X_1, \ldots, X_n\}$. $\Proc{CylindricalAlgebraicDecomposition}{}$ returns a boolean combination of inequalities that we assign to the variable CADTree in \cref{algo:CDF}. Its exact shape is in practice somewhat  unpredictable and putting it in DNF allows us to traverse it conveniently. Each disjunct of $\Proc{DNF}{\mathrm{CADTree}}$ corresponding to a given 1-cell $(c_i,d_i)$ will be a conjunction of the shape 
\[
c_i< t<d_i\wedge f_1(t)<x_1<g_1(t)\wedge\ldots\wedge f_k(t, x_1, \ldots, x_{k-1})<x_k<f_k(t, x_1, \ldots, x_{k-1}).
\]
As discussed in \cref{sec:prelim}, the semi-algebraic functions $f_i(t,x_1, \ldots,x_{i-1}), 1\leq i\leq k-1$ in the expression above are defined as the parametrised roots of a (projection) polynomial in $x_{i}$. Unless this root is expressible explicitly by radicals, it will be represented implicitly in this way. 

\paragraph{Full and semi-evaluations of CDF}. Each of the disjuncts above allow us to compute one of the summands in the CAD-based expression of the CDF~\eqref{eq:CAD_CDF}. This computation, \viz piece $\leftarrow$ piece $+~ \Proc{Integration}{\mathrm{conjunction}, f_1, \ldots,f_k}$, defines the CDF as a piecewise function (the guard of the \textbf{if} just above checks whether the conjunction contains the current one-cell $c_i< t<d_i$.). In implementing $\Proc{Integration}{}$ in  \cref{algo:CDF} there are three options: we can perform it  symbolically, numerically at specific values of $t$, or as a mixture of the two, with some summands of~\eqref{eq:CAD_CDF} evaluated symbolically and others left unevaluated -- and thus requiring numerical evaluation -- using a best effort strategy. In the first case we will talk of full evaluation, in the other cases of semi-evaluation\footnote{Thus $ \frac{t^3}{3} + \frac{t^2}{2}$ is a full evaluation of $\int_{0}^{t} (x^2 + x)~dx$ and $\int_{0}^{t} x^2~dx+\frac{t^2}{2}$ is semi-evaluated.}.

\begin{algorithm}[hbt]
	\caption{Algorithm to compute the CDF of $p(X_1, ..., X_k)$}
	\label{algo:CDF}
	{\small
		\begin{algorithmic}
			\Require{Program $p(X_1, \ldots, X_k)$, PDFs $f_1, \ldots , f_k$ with ranges $(a_1,b_1), \ldots(a_k,b_k)$.}
			\Ensure{Piecewise CDF function $F$}
			\Function{CDF}{$p, S, f$}
			\State F $\gets$ [] \Comment{Empty piecewise function}
			\State InputSAS $\gets$ SAS($p,\bigwedge_{i=1}^k (a_i<X_i)\wedge (X_i<b_i)$)
			\State CADTree $\gets$ \Call{CAD}{InputSAS, $\{t, X_1, \ldots, X_n\}$)}
			\State \Comment{Optional: try different variable orderings $\{t, X_{i_1}, \ldots, X_{i_n}\}$ to get fewer cells}
			\State CADTree $\gets$ \Call{RemoveMeasureZeroCells}{CADTree}
			\State OneCells $\gets$ \Call{GetOneCells}{CADTree} \Comment{Defines the piecewise structure F}
			\State CADTree $\gets$ \Call{DNF}{CADTree}
			\For{cell in OneCells}
			\State piece $\gets$ 0
			\For{conjunction in CADTree}
			\If{conjunction $\subset$ cell}
			\State piece $\gets$ piece + \Call{Integration}{conjunction, $f_1\cdots f_k$}
			\State \Comment{Optional: run integrations in parallel}
			\EndIf
			\EndFor
			\State F.append([cell, piece])
			\EndFor
			\State \Return F \Comment{The CDF is then plotted and differentiated to get the PDF}
			\EndFunction
		\end{algorithmic}
	}
\end{algorithm}

\paragraph{Performance optimisation.} 
We have seen that CAD is doubly exponential in the number of variables in the program. In practice however, we observe that the computational bottleneck is not the generation of the CAD, but the nested integrals of~\eqref{eq:CAD_CDF} that are exponential in the number of variables. To speed up this part of \cref{algo:CDF} we use two strategies. The first is to find a good variable elimination ordering as this can lead to significantly fewer cells, thus reducing the cost of integration.
Brown's heuristic~\cite{Brown2004Tutorial,delRio2022Heuristic} aims to achieve this by ordering variables from the simplest (low maximal degree and few occurrences) to the most complex (high maximal degree and many occurrences).
Second, since every $\Proc{Integration}{\mathrm{conjuction}, f_1,\ldots,f_k}$ computation is independent of the others, we can parallelise the computation of the integrals.

\paragraph{Implementations.} We have developed two implementations of \cref{algo:CDF}. The first, written in Mathematica, is designed to compute full evaluations and relies on Mathematica's powerful symbolic integration capabilities. It provides our gold-standard solutions. The second, using only open-source code, aims to efficiently obtain semi-evaluated solutions that can be evaluated or plotted fast. Both tools implement the performance optimisations described above.

\emph{Mathematica implementation.} Mathematica's \texttt{CylindricalDecomposition[]} function~\cite{strzebonski2000solving} computes our CADs. Mathematica encodes the root expressions defining a CAD by using $\mathtt{Root}[p(x_1, \ldots x_i,\#)\&,j]$ expressions, standing for the $j^{th}$ real root of the (projection) polynomial $p(x_1, \ldots x_i,x_{i+1})$ in the variable $x_{i+1}$. Its powerful integration engine allows us to compute closed-form solution for nested integrals containing root expressions in their bounds. This allows us to evaluate \eqref{eq:CAD_CDF} in closed-form. Mathematica can also differentiate these expressions and produce closed-form expressions for the corresponding PDFs.

\emph{Open-source implementation.} We have developed Py-CAD, a tool written in Python and C\texttt{++} that performs the (semi-)evaluation of \eqref{eq:CAD_CDF}.  After denominator and radical elimination (\cref{prop:division_free,prop:sqrt_free}) Py-CAD starts by simplifying the input semi-algebraic set $S$ as much as possible using Z3~\cite{z3}. The simplified input is fed to QEPCAD~\cite{10.1145/968708.968710,collins1991partial}. QEPCAD produces projection polynomials and constructs cells consisting of an index and a sample point. Py-CAD parses the projection polynomials into SymPy expressions and, for each cell, reconstructs the symbolic boundaries as SymPy \texttt{RootOf} objects by solving the projection polynomials at the corresponding cell sample points.
These cells are then assembled into a disjunctive normal form representation that is logically-equivalent to that produced by Mathematica. Py-CAD can then assemble the CDF~\eqref{eq:CAD_CDF} from integrals with exact algebraic boundaries regardless of the existence of closed form radical equivalents. During evaluation, a pass is conducted to identify fully determined \texttt{RootOf} nodes and substitute them for their numerical value via root-finding, allowing the integration engine to proceed. Py-CAD (like PSI) uses symbolic integration when possible  (\eg via C\texttt{++} pattern matching) and then performs numerical evaluation on semi-evaluated expressions. In order to improve performance on larger benchmarks, we introduce compositional sub-CADs: when an independent subtree of variables is identified (e.g. $y^2 + z^2$ within $x^2 + y^2 + z^2 < t$), its distribution is computed via an isolated sub-CAD and injected as a PDF into the outer problem, reducing a three-variable CAD to two two-variable CADs, a significant saving given that the complexity is doubly exponential in the number of variables~\cite{basu2006algorithms}. 
For now, Py-CAD is focusing on numerical integration of semi-evaluated expressions, with future plans to include full symbolic integration.
	\section{Experimental Evaluation}\label{sec:experiments}

\paragraph{Experimental setup.} We test both our implementations on programs and input distributions of various levels of complexity, from Taylor expansions (of order 5) of well-known functions, to programs from the FPBench benchmark suite~\cite{fpbench} used in floating-point analysis, to three functions from IMU libraries (including \texttt{getMeasurement} from \cref{sec:intro}). Run times and various metrics are presented in \cref{tab:benchmarks}. The column $t_{math}$ (\resp $\hat{t}_{math}$) shows the run time for our Mathematica full evaluation (\resp numerical evaluation over 400 equidistant points), $\hat{t}_{open}$ show the run time of Py-CAD's semi-evaluation over 400 equidistant points and $t_{psi}$ that of PSI's full evaluation (completed by Mathematica if marked by *).

The tools (artifact available at https://doi.org/10.5281/zenodo.20080511) take as input a program as well as a description of its input distributions. When a range of values is already defined in the literature, \eg for the FPBench benchmarks, we stick to this range and place a distribution over it. We choose as input distributions on ranges $[a,b]$ either the uniform $\unif{a}{b}$, the triangular $\tri{a}{b}$ or the $\mathrm{Beta}(2,2,a,b)$ distributions. The latter have PDFs
\begin{align*}
	f_\mathrm{T}(x) = \begin{cases}
		\frac{4(x - a)}{(b-a)^2} & \text{if } a < x \le a+\nicefrac{(a+b)}{2} \\
		\frac{4(b - x)}{(b-a)^2} & \text{if } a+\nicefrac{(a+b)}{2} < x \le b.
	\end{cases}
	&~ &
	f_{\mathrm{Beta}}(x)=\one_{[a,b]}(x)\frac{6(x-a)(b-x)}{(b-a)^3}
\end{align*}
Distributions with infinite or compact supports are supported by our tool, the only practical limitation is the tractability of their PDFs in symbolic integrations. 

\begin{table}[htbp]
\begin{center}
		\scriptsize{
\caption{\footnotesize{Running time and KS-distance of a collection of benchmarks. The tools are run on an 8-core laptop with Apple M2 chip and 16GB RAM, or 10-core AMD Ryzen AI 9 with 32GB RAM (Py-CAD). The code of the benchmarks can be found in the Appendix together with plots of the output CDFs and PDFs. Depth is the Mathematica tree-depth of the expression (this is smaller than the tree depth given by our syntax \eqref{def:syntax}). \# vars describes the number of variables in the program, \# cells is the total number of cells in the corresponding CAD -- and therefore a good proxy for the complexity of the symbolic integration. KS denotes the KS distance between the symbolic distribution and the empirical distribution based on $10^6$ input samples. Timings are given as follows. $t_{mc}$ for the Monte-Carlo approach ($10^6$ samples). $t_{math}$ for fully-evaluated Mathematica, while $\hat{t}_{math}$ is numerically-evaluated Mathematica. $\hat{t}_{open}$ is from semi-evaluated Py-CAD. Finally, $t_{psi}$ is the timing for PSI. An asterix (*) is used where the PSI output contained integrals, which were then symbolically integrated via Mathematica (in parallel, and included in the timing). T-O stands for `timed out'; timeouts were chosen according to benchmark complexity. X means error (no result given).}}\label{tab:benchmarks}}
\begin{tabular}{|c|c|c|c|c|c|c|c|c|c|}
	\hline
	\textbf{benchmark} & \textbf{depth} & \textbf{\#vars} & \textbf{\#cells} & \textbf{KS} $\times 10^{-3}$ & $\mathbf{t}_{mc}$ & $\mathbf{t}_{math}$ & $\mathbf{\hat{t}}_{math}$   & $\mathbf{\hat{t}}_{open}$ & $\mathbf{t}_{psi}$ \\
	\hline\hline
	\multicolumn{10}{|l|}{1a. Taylor expansions - uniform inputs} \\
	\hline \hline
	exp & 3 & 1 & 2 & 0.97 & 1.69 & 0.86 & 1.19 & 2.85 & 5.49* \\
	\hline
	sin & 3 & 1 & 2 & 1.11 & 1.74 & 0.65 & 0.90 & 2.82 & 0.76*  \\
	\hline
	cos & 3 & 1 & 2 & 1.81 & 1.71 & 0.89 & 1.04 & 2.19 & 0.55* \\
	\hline
	log & 4 & 1 & 2 & 0.71 & 1.70 & 0.61 & 0.87 & 2.75 & 0.81*\\
	\hline \hline
	\multicolumn{10}{|l|}{1b. Taylor expansions - triangular inputs (timeout=5mins)} \\
	\hline \hline
	\text{exp} & 3 & 1 & 2 & 0.72 & 1.74 & 1.52 & 1.14 & 3.84 &  175.74* \\
	\hline
	\text{sin} & 3 & 1 & 2 & 0.54 & 1.73 & 1.03 & 0.87 & 4.08 & 0.76* \\
	\hline
	\text{cos} & 3 & 1 & 2 & 0.76 & 1.73 & 1.29 & 0.83 & 4.09 & T-O \\
	\hline
	\text{log} & 4 & 1 & 2 & 0.62 & 1.74 & 1.03 & 1.10 & 3.83 & 163.79* \\
	\hline \hline
	\multicolumn{10}{|l|}{2a. FPBench~\cite{fpbench} benchmarks - uniform inputs (timeout=20mins)} \\
	\hline \hline
	\text{sum} & 1 & 3 & 9 & 0.47 & 2.00 & 8.36 & 12.90 & 6.25 & 4.89\\
	\hline
	\text{xbyxy} & 3 & 2 & 7 & 0.69 & 1.76 & 1.10 & 1.37& 2.88 & X \\
	\hline
	\text{nonlin1} & 3 & 1 & 2 & 0.84 & 5.16 & 2.73 & 2.83& 1.95 & 1.58* \\
	\hline
	\text{bspline3} & 2 & 1 & 2 & 0.76 & 1.72 & 0.72 & 0.99& 2.08 & 0.25 \\
	\hline
	\text{cav10} & 4 & 1 & 5 & 0.67 & 2.92 & 0.82 & 1.16 & 2.89 & 0.72 \\
	\hline
	doppler1 & 4 & 3 & 14 & 0.91 & 1.81 & 27.22 & 53.04& 9.33 & 362.18*\\
	\hline
	doppler2 & 4 & 3 & 14 & 0.79 & 1.77 & 29.11 & 35.00 & 9.85 & T-O\\
	\hline
	doppler3 & 4 & 3 & 14 & 0.65 & 1.77 & 29.20 & 39.76 & 9.61 & T-O\\
	\hline
	rigidBody1 & 2 & 3 & 40 & 0.70 & 1.90 & 63.88 & X & 112.4 & T-O\\
	\hline \hline
	\multicolumn{10}{|l|}{2b. FPBench~\cite{fpbench} benchmarks - triangular inputs (timeout=20mins)} \\
	\hline \hline
	\text{sum} & 1 & 3 & 9 & 0.73 & 2.07 & 79.01 & 54.72 & 294.52 & 9.81 \\
	\hline
	\text{xbyxy} & 3 & 2 & 7 & 0.47 & 1.84 & 13.50 & 2.87 & 19.33 & T-O \\
	\hline
	\text{nonlin1} & 3 & 1 & 2 & 0.80 & 6.96 & 1.75 & 1.67 & 3.62 & 11.82* \\
	\hline
	\text{bspline3} & 2 & 1 & 2 & 0.49 & 1.74 & 1.19 & 1.00 & 3.21 & 0.28 \\
	\hline
	\text{cav10} & 4 & 1 & 5 & 1.09 & 2.99 & 0.99 & 1.24 & 5.51 & 0.87  \\
	\hline
	\text{doppler1} & 4 & 3 & 14 & 0.75 & 1.76 & T-O & 168.03 & 196.77 & T-O\\
	\hline
	\text{doppler2} & 4 & 3 & 14 & 0.96 & 1.83 & T-O & 208.13 & 220.26 & T-O \\
	\hline
	\text{doppler3} & 4 & 3 & 14 & 0.51 & 1.82 & T-O & 94.25 & 390.19 & T-O \\
	\hline \hline
	\multicolumn{10}{|l|}{2b. FPBench~\cite{fpbench} benchmarks - Beta(2,2) inputs (timeout=20mins)} \\
	\hline \hline
	\text{sum} & 1 & 3 & 9 & 0.76 & 2.13 & 45.89 & 105.34 & 88.14 & 8.83 \\
	\hline
	\text{xbyxy} & 3 & 2 & 7 & 1.04 & 1.90 & 3.09 & 1.85 & 7.8 & X \\
	\hline
	\text{nonlin1} & 3 & 1 & 2 & 1.26 & 6.59 & 1.90 & 1.68 & 2.45 & 4.87*\\
	\hline
	\text{bspline3} & 2 & 1 & 2 & 0.73 & 1.77 & 1.02 & 0.86 & 2.12 & 0.26 \\
	\hline
	\text{cav10} & 4 & 1 & 5 & 0.62 & 3.08 & 1.16 & 1.20 & 3.18 & 0.90 \\
	\hline
	\text{doppler1} & 4 & 3 & 14 & 0.92 & 1.89 & T-O & 360.76 & 41.86 & X\\
	\hline
	\text{doppler2} & 4 & 3 & 14 & 1.12 & 1.92 & T-O & 489.75 & 49.2 & X \\
	\hline
	\text{doppler3} & 4 & 3 & 14 & 0.63 & 1.95 & T-O & 260.12 & 60.86 & T-O \\
	\hline \hline
	\multicolumn{10}{|l|}{3a. IMU benchmarks - uniform inputs (timeout=60mins\&100mins)} \\
	\hline \hline
	v\_norm & 4 & 3 & 10 & 0.76 & 1.86 & 134.63 & 9.72 & 14.26 & T-O\\
	\hline
	change\_gyro & 3 & 3 & 9 & 0.78 & 7.71 & T-O & 230.93 & X & T-O\\
	\hline
	get\_meas & 4 & 3 & 40 & 1.57 & 5.42 & T-O & T-O &  5653.12& T-O\\
	\hline
\end{tabular}
\end{center}
\vspace{-2mm}
\end{table}

\paragraph{Evaluation of correctness.} 
To sanity-check our method, we compare its result to Monte-Carlo sampling, a fool-proof approximation of the ground truth. For this we draw $10^6$ samples from each distribution to create an empirical CDF and a histogram of the output distribution. The running time for this procedure is reported in column $t_{mc}$. We measure the distance between our full evaluation and Monte-Carlo sampling using the KS-distance
\[
KS(\mathbb{P}_{full}, \mathbb{P}_{mc}) = \sup_{t} \lvert F_{full}(t) - F_{mc}(t) \rvert 
\]
where $\mathbb{P}_{full}$ (\resp $F_{full}$) is our output distribution (\resp closed-form CDF) and similarly for Monte-Carlo
\footnote{We use Mathematica's built-in \texttt{KolmogorovSmirnovTest} function. When it takes too long (over  $30$ seconds), we estimate the KS distance by evaluating the CDFs over 1000 equidistant percentiles: $\max(_{1\leq i\leq 1000} \lvert (F_a(x_i) - F_{mc}(x_i) \rvert ) $}
. Examples of comparison between exact and empirical PDFs and CDFs can be found in \cref{tab:cdf_pdf_sample} with a complete list of graphs in the Appendix showing that our tools return the correct output distributions.
\begin{table}[h]
	\scriptsize{
		\begin{center}
			\caption{\small{Some examples of PDF and CDF.}}
			\label{tab:cdf_pdf_sample}
			\begin{tabular}{c  c}
				\hline
				CDF & PDF\\
				\hline
				\includegraphics[width=0.45\textwidth]{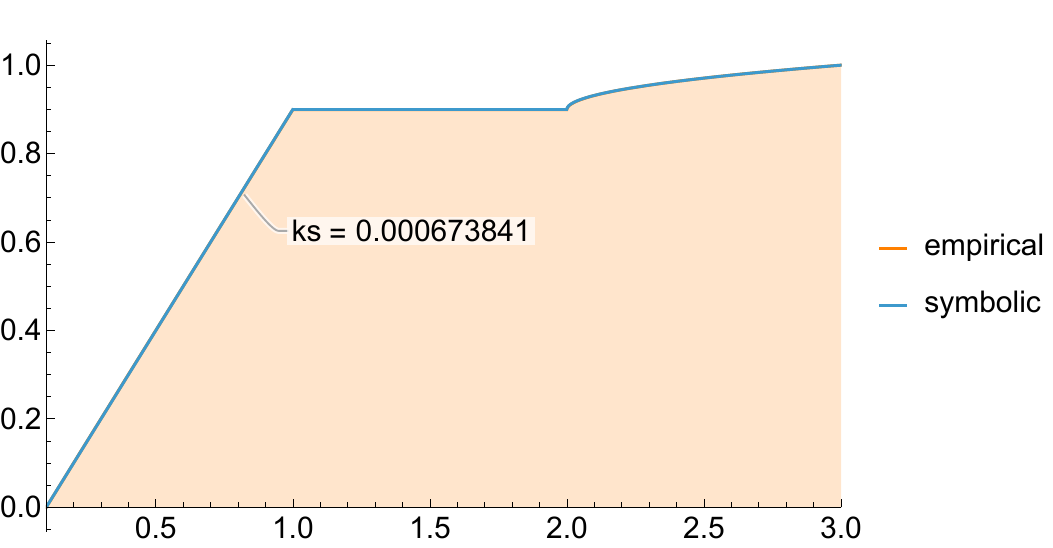} &
				\includegraphics[width=0.45\textwidth]{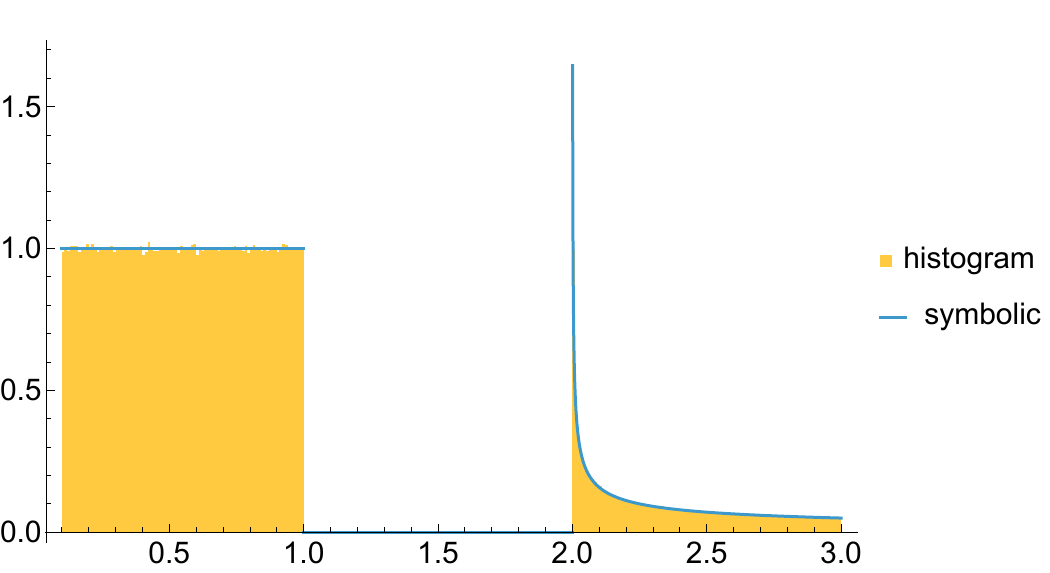} \\
				\hline
				\multicolumn{2}{c}{cav10: $\text{If}\left[x^2-x\geq 0,\frac{x}{10},x^2+2\right], x\sim\unif{0}{10}$} \\
				\hline
				\includegraphics[width=0.45\textwidth]{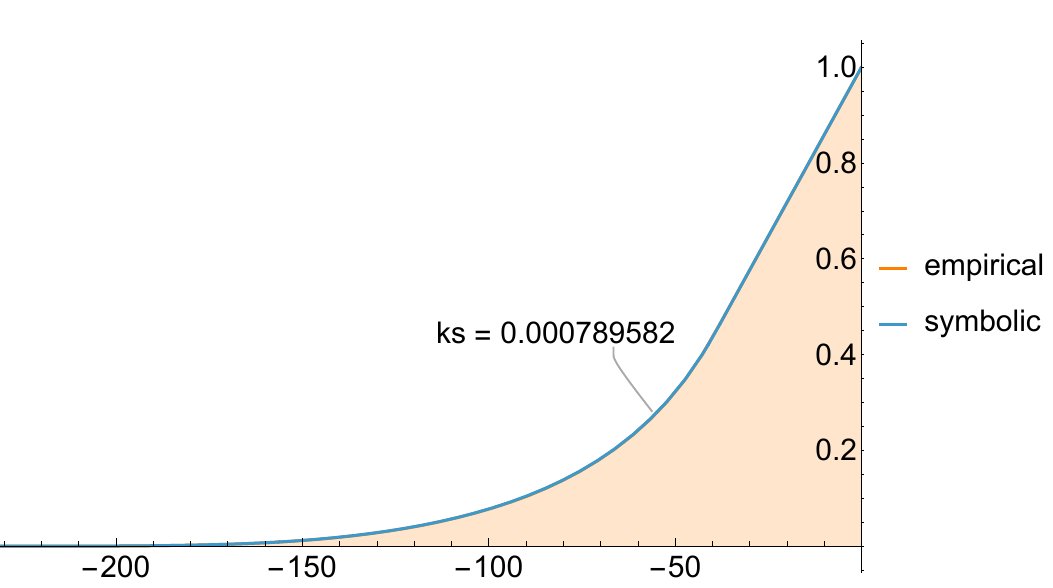} &
				\includegraphics[width=0.45\textwidth]{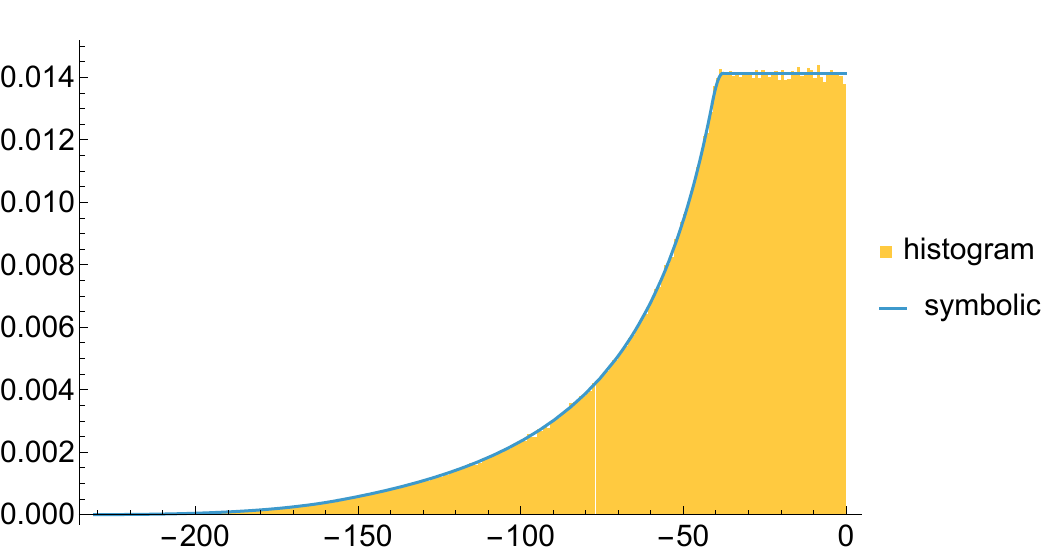} \\
				\hline
				\multicolumn{2}{c}{doppler2: $\frac{\left(-\frac{3 T}{5}-\frac{1657}{5}\right) v}{\left(\frac{3 T}{5}+u+\frac{1657}{5}\right)^2}, u\sim\unif{-125}{125}, v\sim\unif{15}{25000}, T\sim\unif{-40}{60}$} \\
				\hline \\
				\multicolumn{2}{c}{\includegraphics[width=0.95\textwidth,height=0.21\textheight, trim=5 12 12 8, clip]{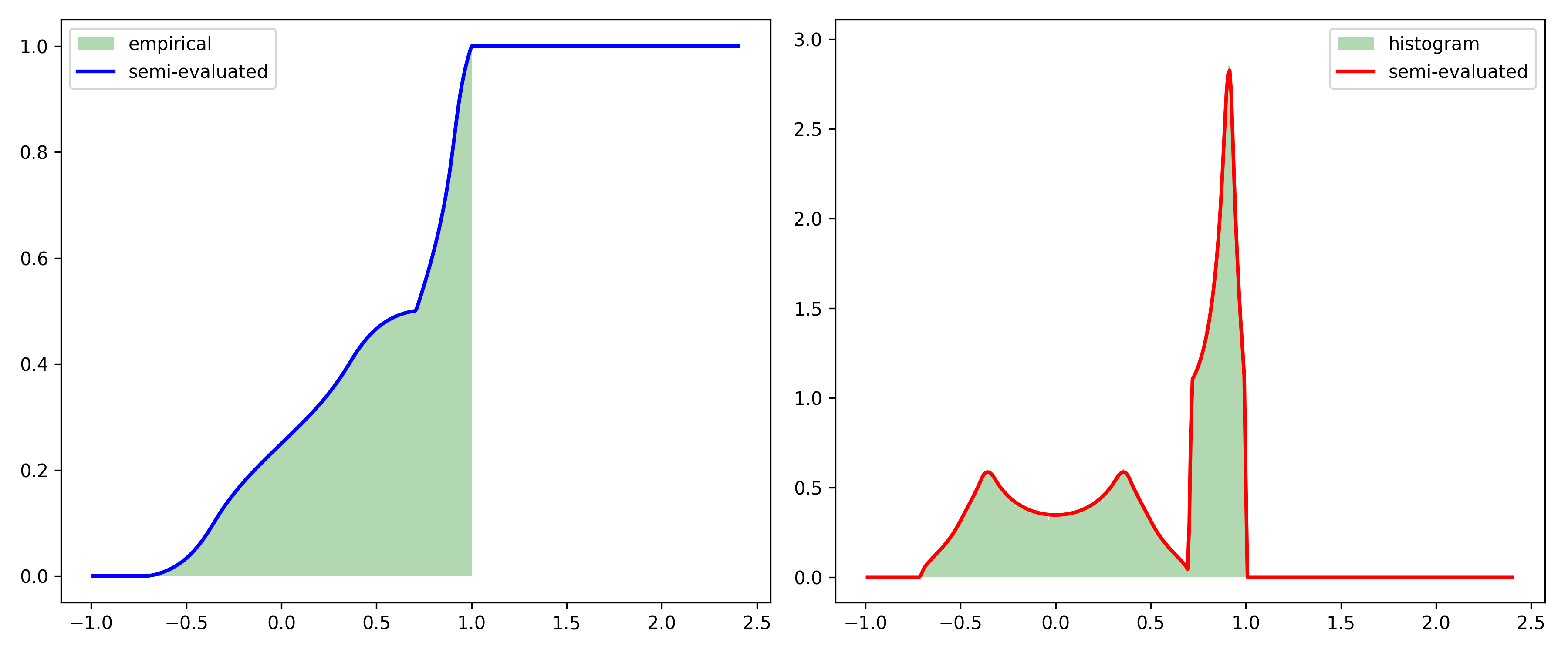}} \\
				\hline
				\multicolumn{2}{c}{get\_meas: c.f. \cref{sec:intro}, $x\sim\unif{-1}{1}, y\sim\unif{-1}{1}, z\sim\unif{-1}{1}$ [\textit{evaluated by Py-CAD}]} \\ 
				\hline
			\end{tabular}
		\end{center}	
	}
\end{table}

\paragraph{Discussion of results.}
Our Mathematica implementation produces symbolic CDFs and PDFs (see~\cref{tab:cdf_pdf_sample} and Appendix) that quickly become extremely complex. For some of the benchmarks in~\cref{tab:benchmarks}, \eg the \texttt{doppler} programs, the symbolic expressions returned by Mathematica are so large that they cannot be displayed on an A4 page. It is remarkable that CAD and symbolic integration are capable of evaluating these closed-form outputs in a reasonable time, providing a complete probabilistic range analysis for these programs, and in particular the ability to compute the probability of outcomes to arbitrary precision. 

It is interesting to note that the full evaluation is often competitive with the semi-evaluated numerical answer, \ie it is often more efficient to solve symbolically once than perform 400 numerical integrations. The running time in either case grows quickly as we move from one- or two-variable programs to three-variable ones. However, it remains doable to semi-evaluate the output distribution of a relatively complex program like get\_meas from \cref{sec:intro} at hundreds of points. In fact, in order to provide formal guarantees like $\Pro[\mathsf{BadOutcome}]<\varepsilon$, it is typically sufficient to evaluate the CDF at a handful of points (describing the region $\mathsf{BadOutcome}$), which will be orders of magnitude faster than the 400 evaluations of the CDFs and PDFs in our experiments. We can therefore expect that computing these kinds of exact probabilistic bounds should scale beyond the examples in our experiment.

Finally, we note that our tools perform well compared with PSI, which although much more general and able to perform exact inference, sometimes crashes and often times-out due to the complexity of the expressions it returns.

	\vspace{-4mm}
\section{Conclusions and Future Work}

We have presented a generic method for exactly evaluating the 
output distribution of programs with random inputs described by
density functions. This method relies on Cylindrical Algebraic Decomposition
and symbolic integration and handles if-else programs 
performing arithmetic operations and radicals.  We have implemented our method
in Mathematica and as Py-CAD, an open-source Python/C\texttt{++} tool, and applied it successfully to small benchmarks.

Despite its poor worse-case complexity, the main computational 
bottleneck of this approach is not the computation of the CAD, 
but the symbolic evaluation of nested integrals. Computing exact output distributions will always scale badly for this reason. However, we have shown that it is a realistic proposition for small programs from industrial application (FPBench) and sensor libraries. Using further refinements and optimisations (such as evaluating independent sub-programs using sub-CADs), we believe that we can further increase the range of programs for which exact output distributions can be either fully or semi-evaluated. Bearing in mind that for the purpose of program verification it is enough to be able to compute
the probability of certain outcomes, and that this typically only requires a few evaluations of the output CDF, we are hoping to apply our technique to larger programs, \eg four- or five-variable programs manipulating quaternions in IMU libraries.

Apart from these scalability questions, our next task will be to incorporate
floating-point errors to our framework. The model of 
\cite{constantinides2021rigorous,constantinides2025automated} seems
particularly well-suited to this task as the error is represented
as an independent noise variable applied multiplicatively 
at each arithmetic operation.

	\begin{credits}
		\subsubsection{\ackname} This work was supported by the Research Institute on Verified Trustworthy Software  (VeTSS)
		grant ``\emph{Probabilistic Precision Tuning}''.
	\end{credits}
	%
	%
	%
	%
	\bibliographystyle{splncs04}
	\bibliography{../mylibrary}
	\newpage
\section*{Appendix}\label{sec:appendix}	

	\scriptsize{
	\centering
	\captionof{table}{CDF and PDF of benchmarks vs.\ empirical distribution ($10^6$ samples)}
	\label{tab:cdf_pdf}
	\begin{longtable}{cc}
	
		\multicolumn{2}{c}{\textbf{\small Taylor approximations with uniform inputs}}\\
		\hline
		\\
		CDF & PDF \\
		\includegraphics[width=0.50\textwidth]{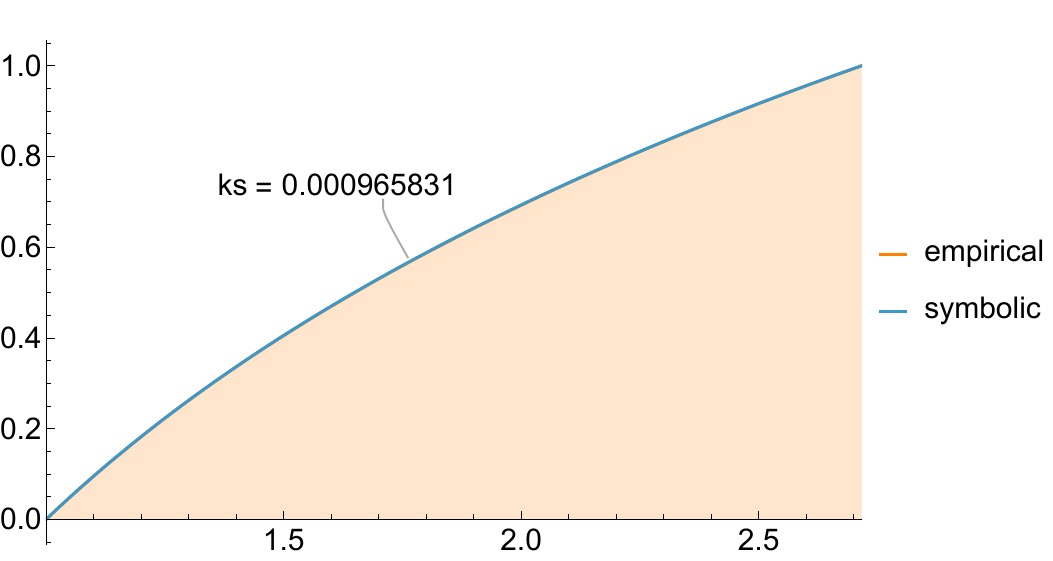} &
		\includegraphics[width=0.5\textwidth]{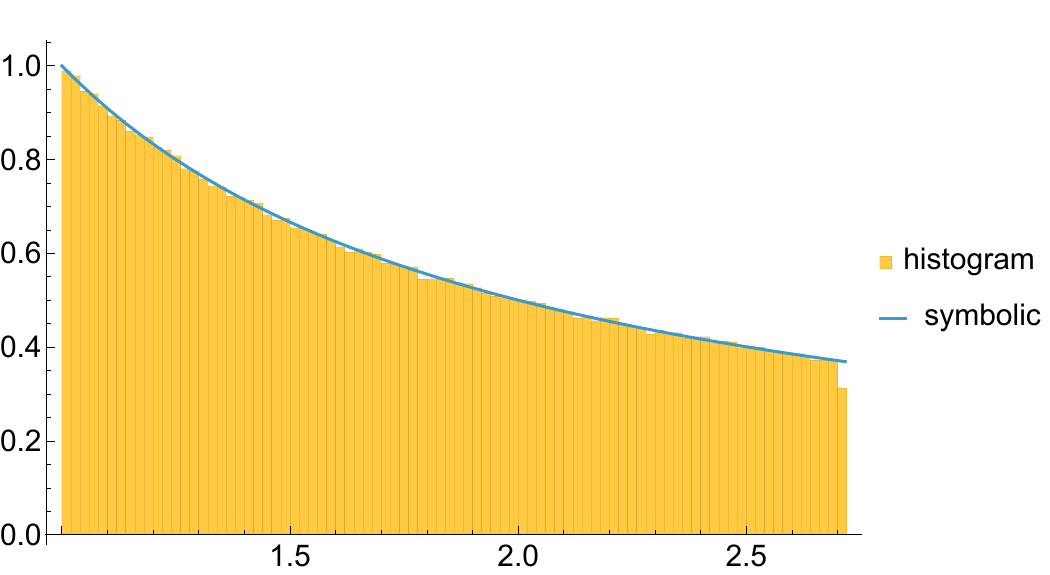} \\
		\hline
		\multicolumn{2}{c}{\rule[-0.75em]{0pt}{2em} exp: $1 + x + \frac{x^2}{2} + \frac{x^3}{6} + \frac{x^4}{24} +\frac{x^5}{120}, x\sim\unif{0}{1}$}\\
		\hline
		
		\includegraphics[width=0.50\textwidth]{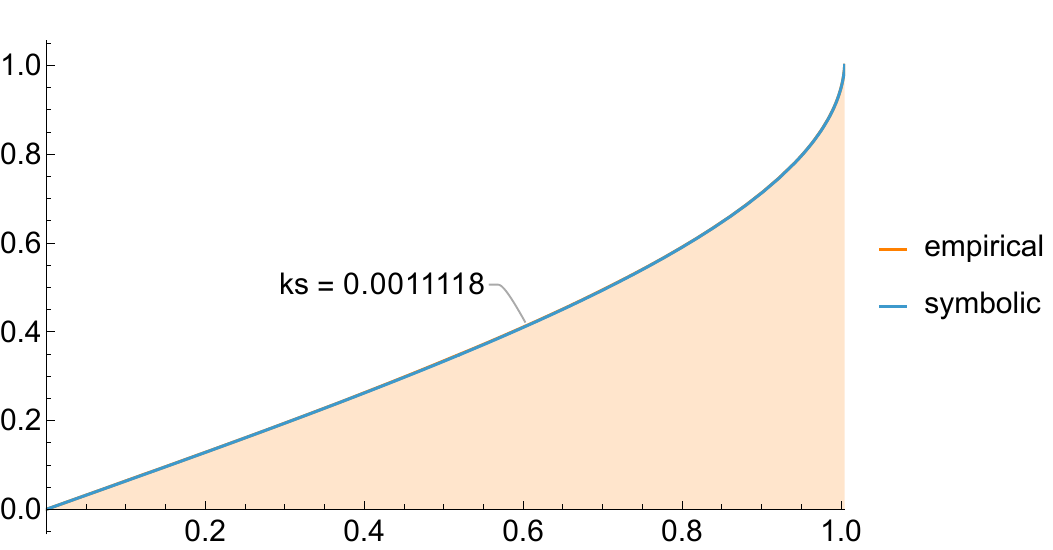} &
		\includegraphics[width=0.5\textwidth]{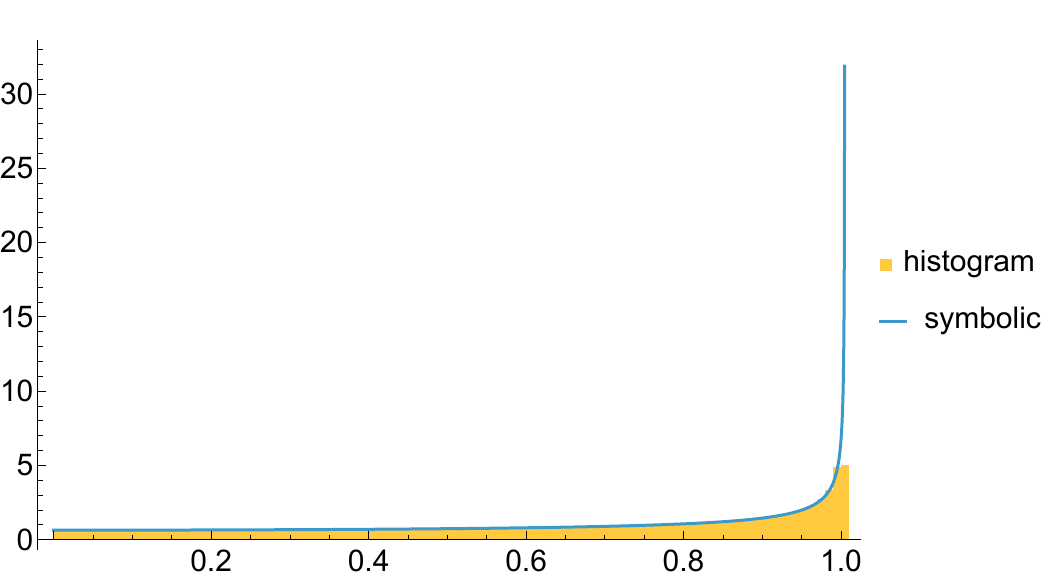} \\
		\hline
		\multicolumn{2}{c}{\rule[-0.75em]{0pt}{2em} sin: $\frac{x^5}{120}-\frac{x^3}{6}+x, x\sim\unif{0}{\nicefrac{\pi}{2}}$}\\
		\hline
		
		\includegraphics[width=0.50\textwidth]{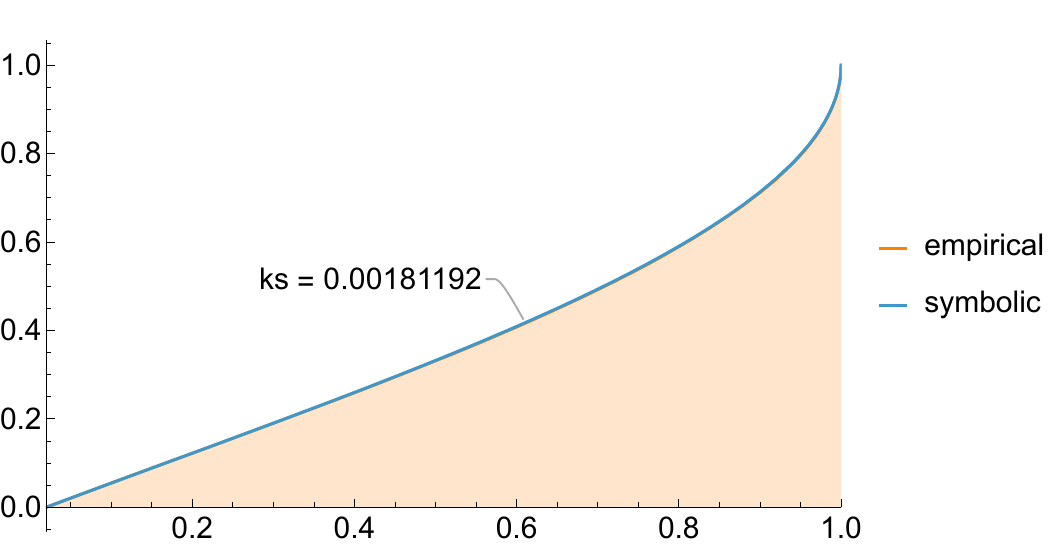} &
		\includegraphics[width=0.5\textwidth]{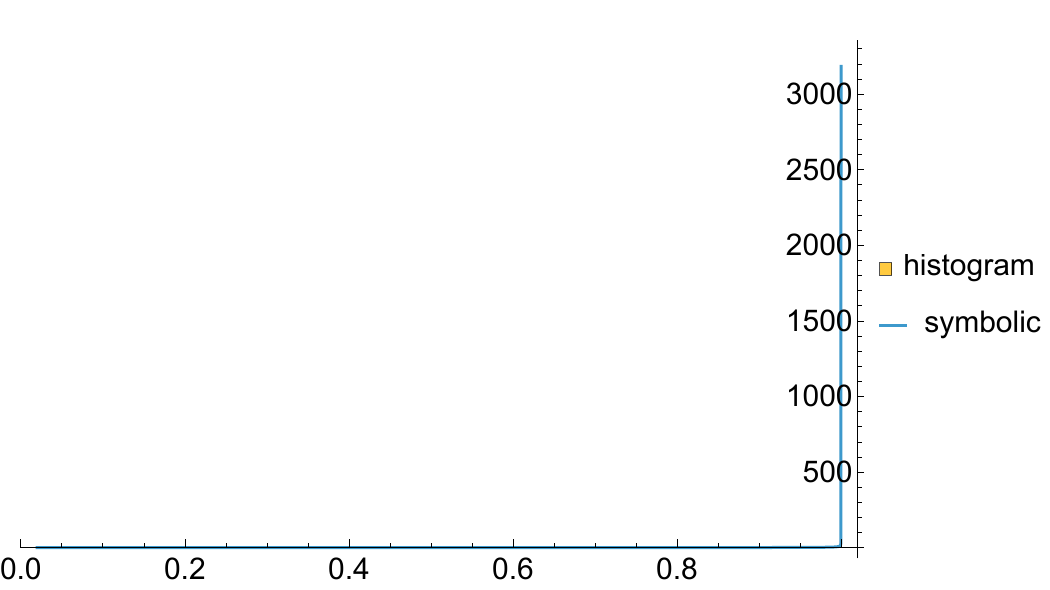} \\
		\hline
		\multicolumn{2}{c}{\rule[-0.75em]{0pt}{2em} cos: $\frac{x^4}{24}-\frac{x^2}{2}+1, x\sim\unif{0}{\nicefrac{\pi}{2}}$}\\
		\hline
		\includegraphics[width=0.50\textwidth]{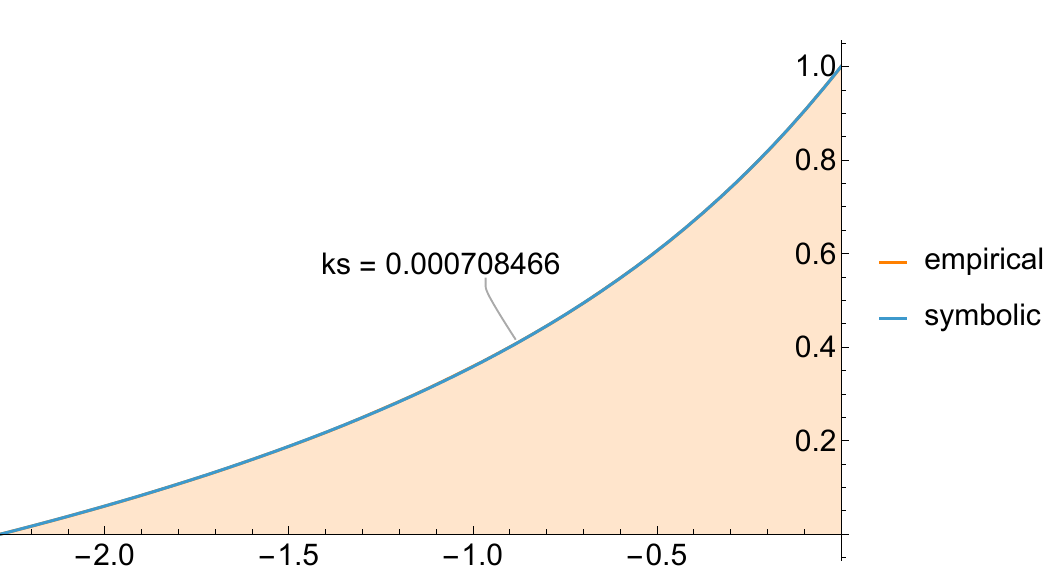} &
		\includegraphics[width=0.5\textwidth]{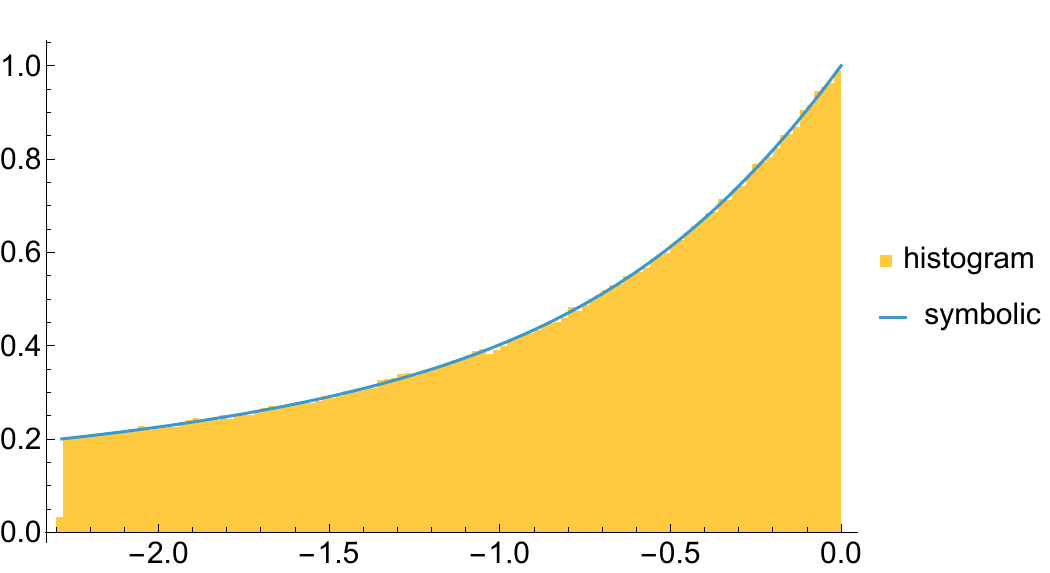} \\
		
		\hline
		\multicolumn{2}{c}{\rule[-0.75em]{0pt}{2em} log: $\frac{1}{5} (x-1)^5-\frac{1}{4} (x-1)^4+\frac{1}{3} (x-1)^3-\frac{1}{2} (x-1)^2+x-1, x\sim\unif{0}{1}$}\\
		\hline 
		
		\multicolumn{2}{c}{\textbf{\small Taylor approximations with triangular inputs}}\\
		\hline\\
		CDF & PDF \\
		\includegraphics[width=0.50\textwidth]{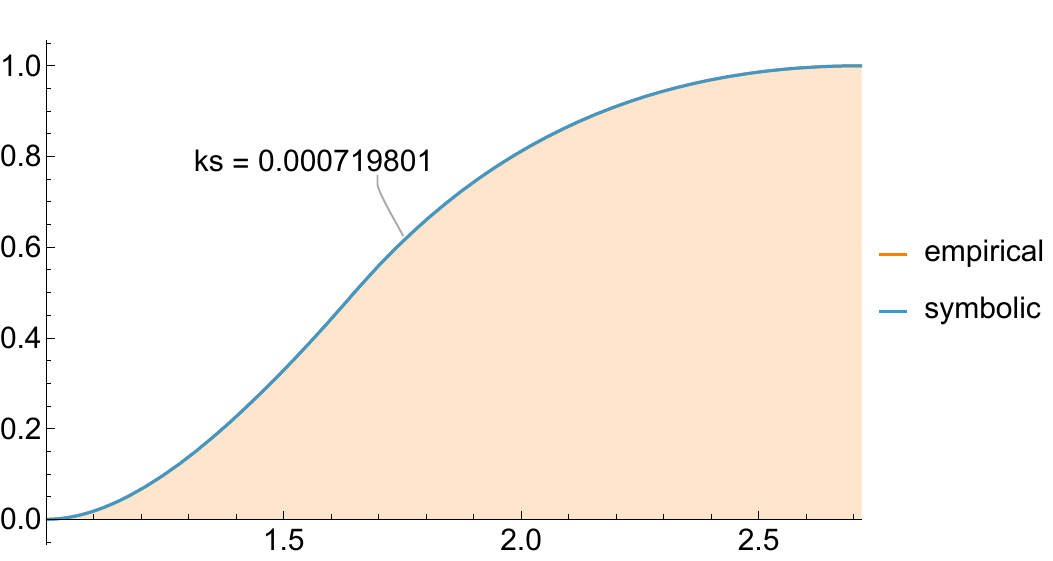} &
		\includegraphics[width=0.5\textwidth]{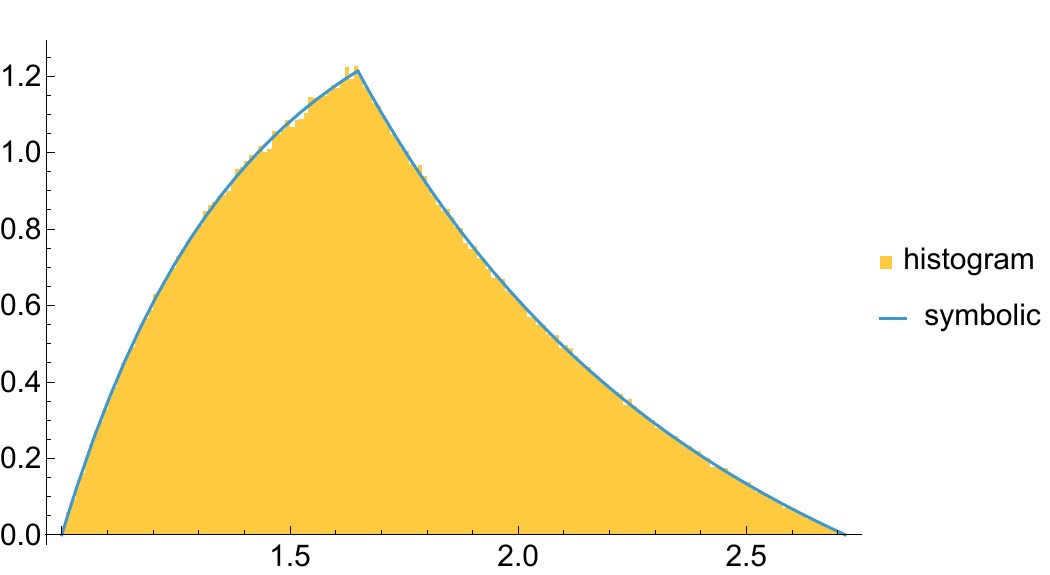} \\
		\hline
		\multicolumn{2}{c}{\rule[-0.75em]{0pt}{2em} exp: $1 + x + \frac{x^2}{2} + \frac{x^3}{6} + \frac{x^4}{24} +\frac{x^5}{120}, x\sim\tri{0,1}$}\\
		\hline
		
		\includegraphics[width=0.50\textwidth]{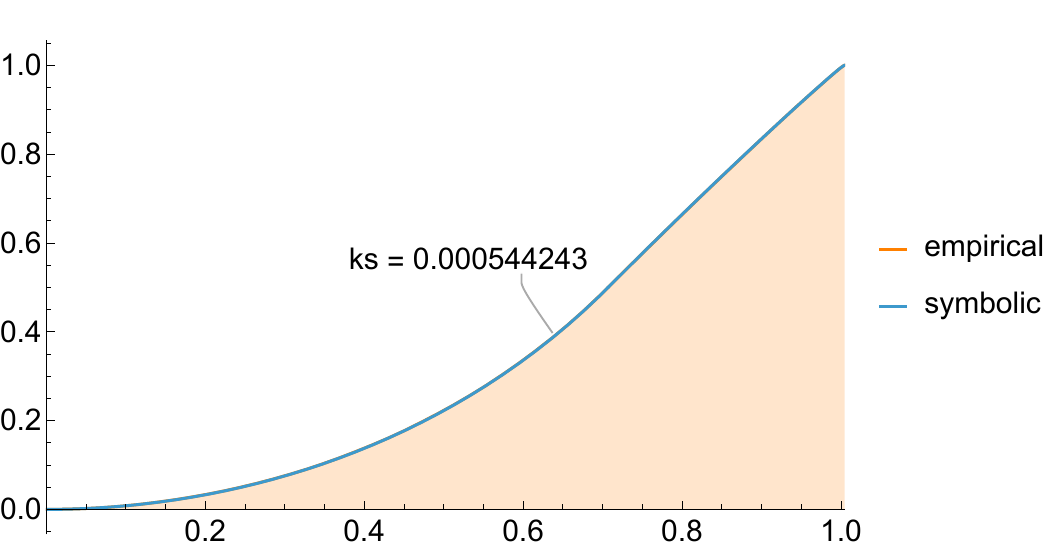} &
		\includegraphics[width=0.5\textwidth]{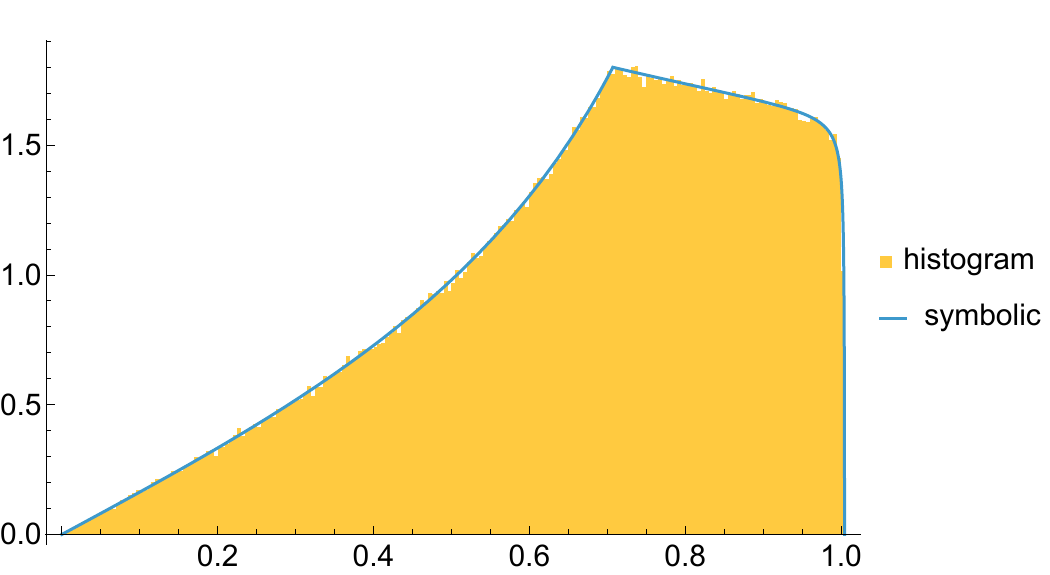} \\
		\hline
		\multicolumn{2}{c}{\rule[-0.75em]{0pt}{2em} sin: $\frac{x^5}{120}-\frac{x^3}{6}+x, x\sim\tri{0}{\nicefrac{\pi}{2}}$}\\
		\hline
		
		\includegraphics[width=0.50\textwidth]{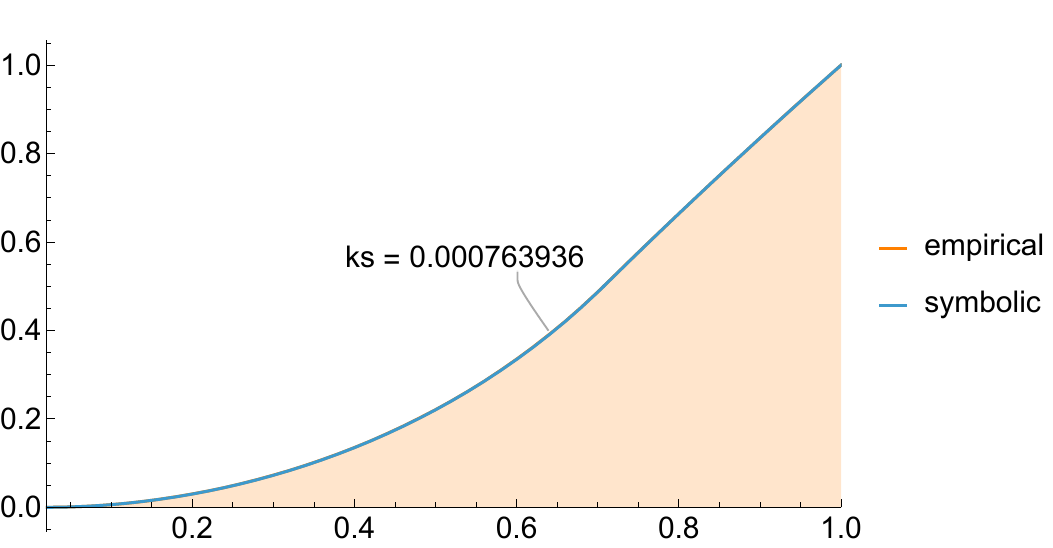} &
		\includegraphics[width=0.5\textwidth]{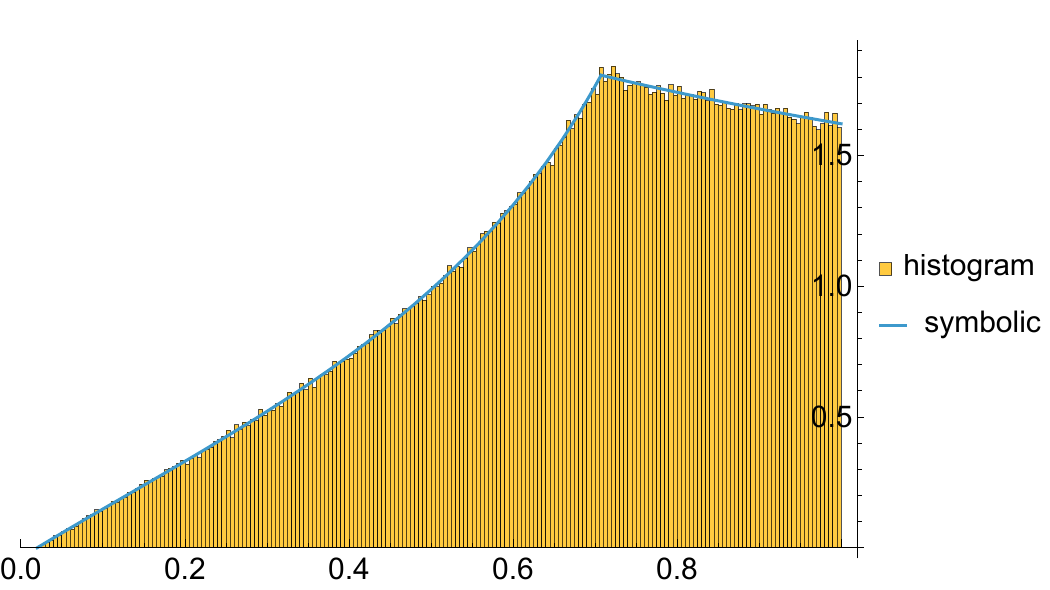} \\
		\hline
		\multicolumn{2}{c}{\rule[-0.75em]{0pt}{2em} cos: $\frac{x^4}{24}-\frac{x^2}{2}+1, x\sim\tri{0}{\nicefrac{\pi}{2}}$}\\
		\hline

		\includegraphics[width=0.50\textwidth]{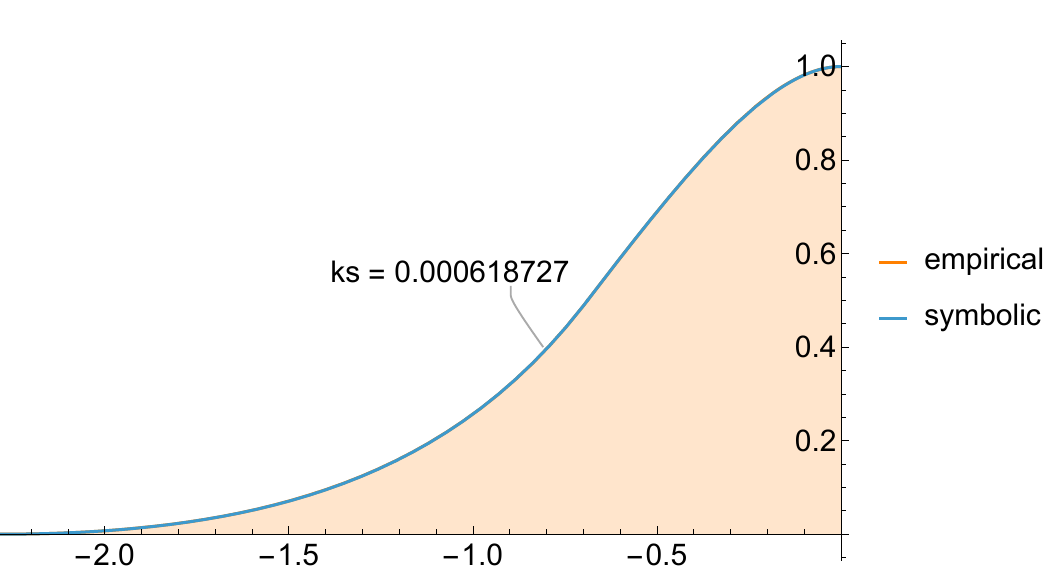} &
		\includegraphics[width=0.5\textwidth]{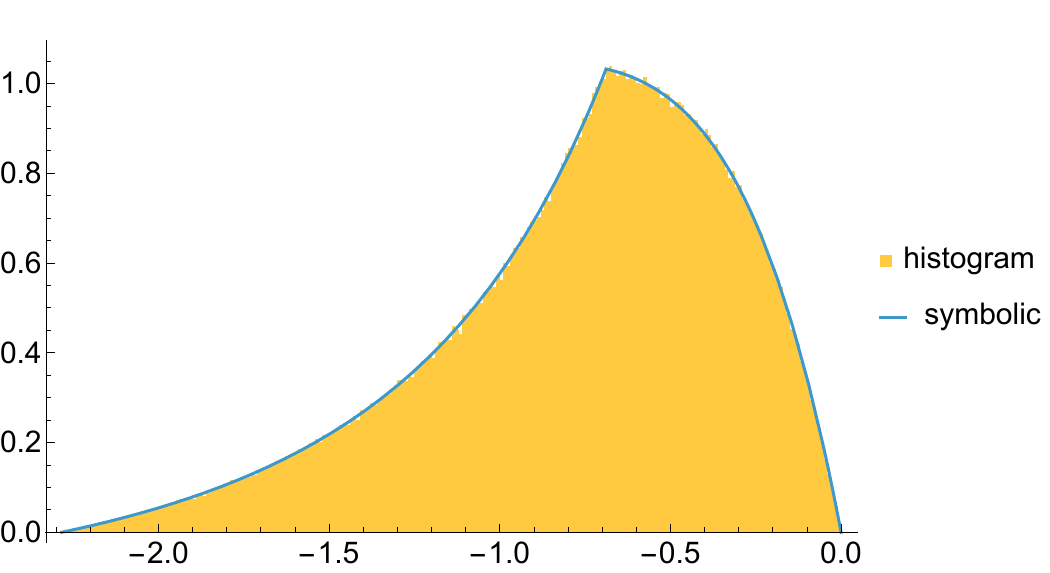} \\
		
		\hline
		\multicolumn{2}{c}{\rule[-0.75em]{0pt}{2em} log: $\frac{1}{5} (x-1)^5-\frac{1}{4} (x-1)^4+\frac{1}{3} (x-1)^3-\frac{1}{2} (x-1)^2+x-1, x\sim\tri{0}{1}$}\\
		\hline \pagebreak
		
		\hline
		\multicolumn{2}{c}{\textbf{\small FPBench benchmarks with uniform inputs}}\\
		\hline\\
		CDF & PDF \\
		
		\includegraphics[width=0.50\textwidth]{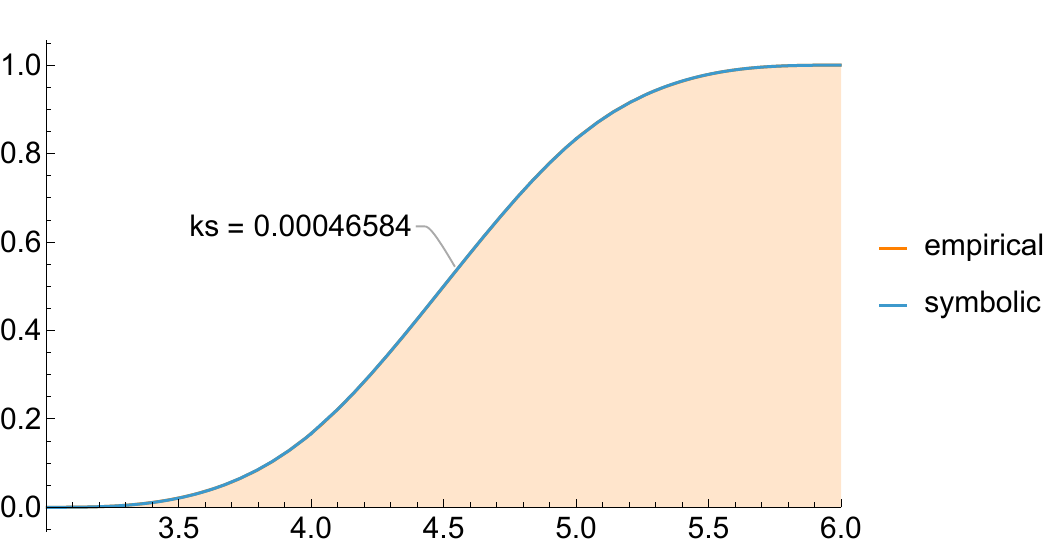} &
		\includegraphics[width=0.5\textwidth]{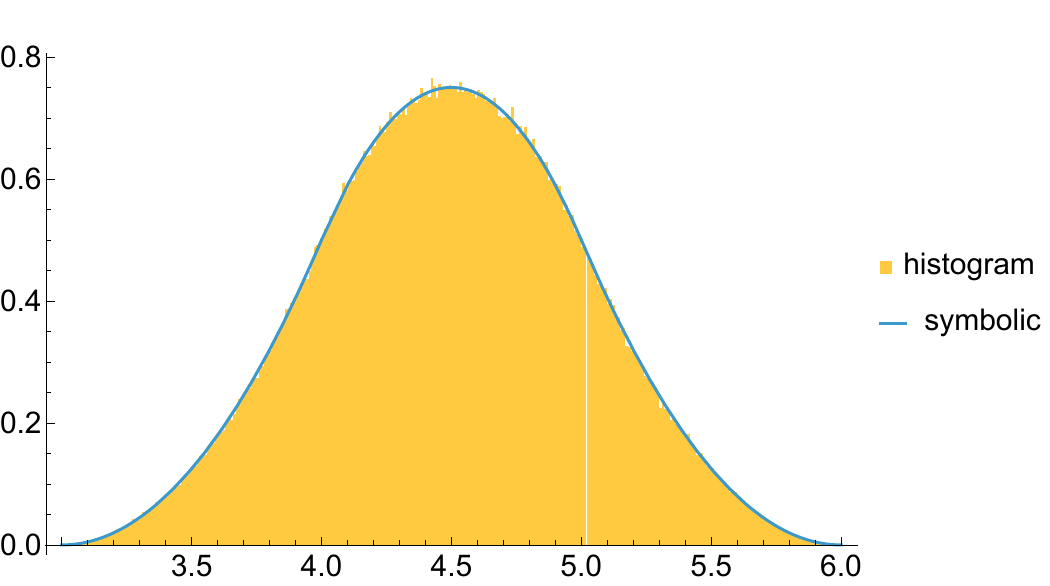} \\
		
		\hline
		\multicolumn{2}{c}{sum: $x0 + x1 + x2,  x0\sim \unif{1}{2}, x1\sim \unif{1}{2}, x2\sim \unif{1}{2}$}\\
		\hline
		\includegraphics[width=0.5\textwidth]{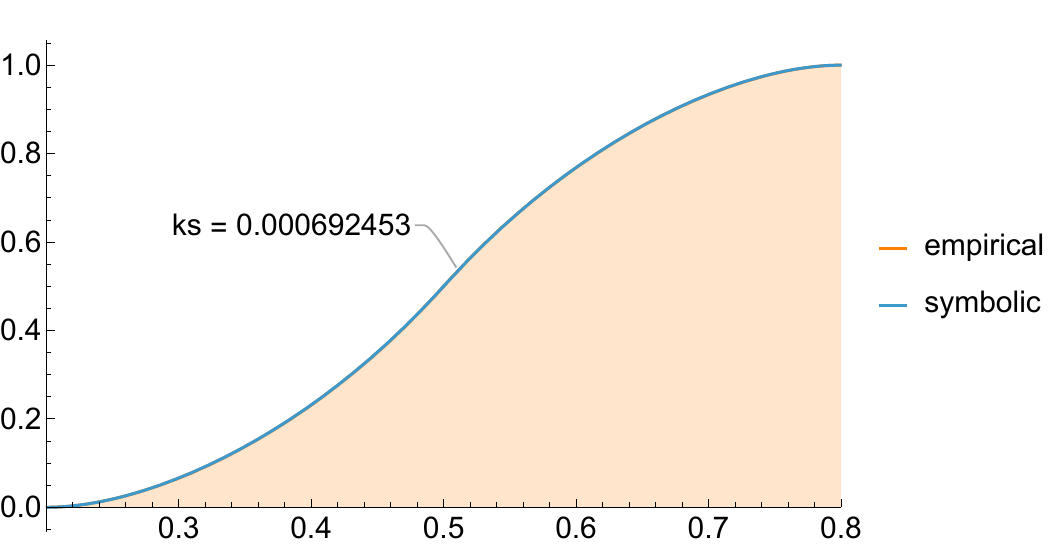} &
		\includegraphics[width=0.5\textwidth]{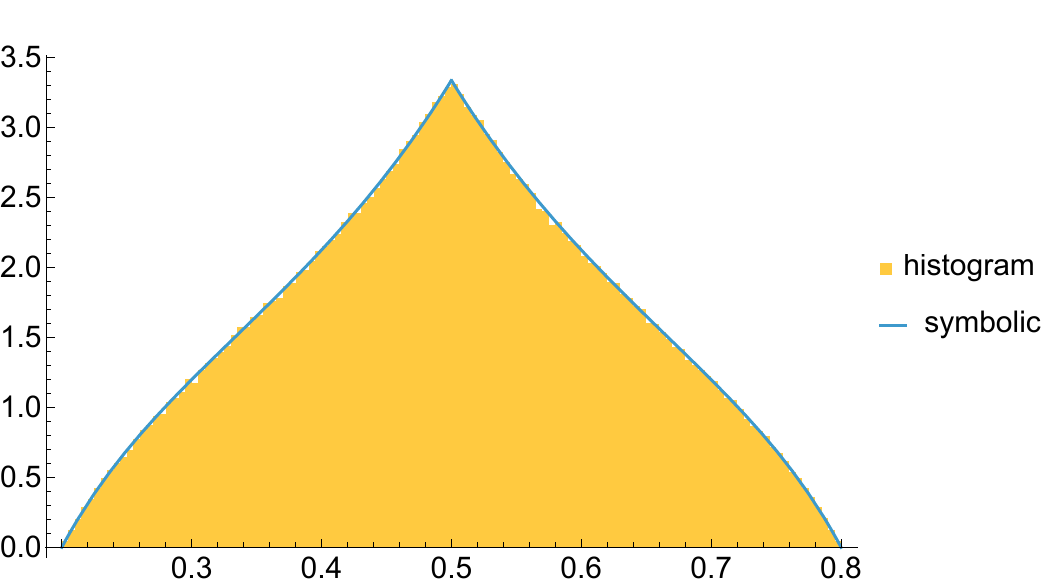} \\
		\hline
		\multicolumn{2}{c}{x\_by\_xy: $\frac{x}{x + y} \text{ with } x\sim\unif{1}{4}, y\sim\unif{1}{4}$} \\
		\hline 
		\includegraphics[width=0.5\textwidth]{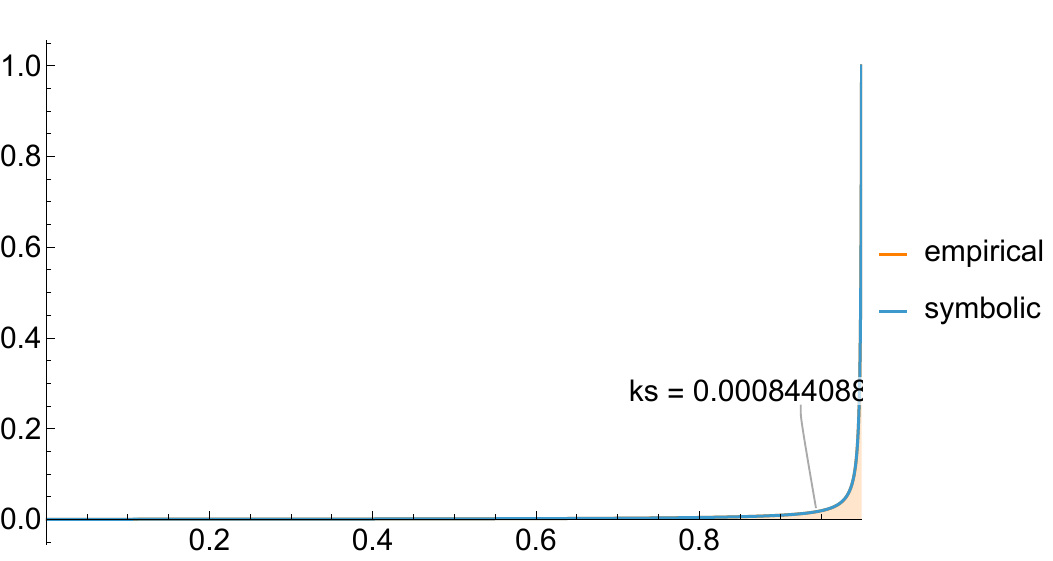} &
		\includegraphics[width=0.5\textwidth]{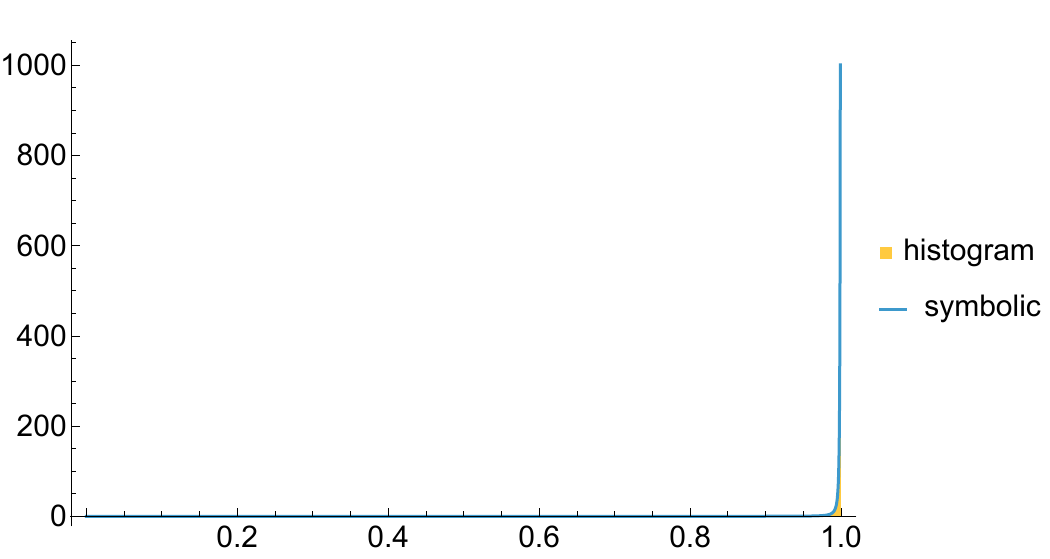} \\
		\hline
		\multicolumn{2}{c}{nonlin1: $\frac{z}{z+1}, z\sim\unif{0}{999}$} \\
		\hline
		\includegraphics[width=0.5\textwidth]{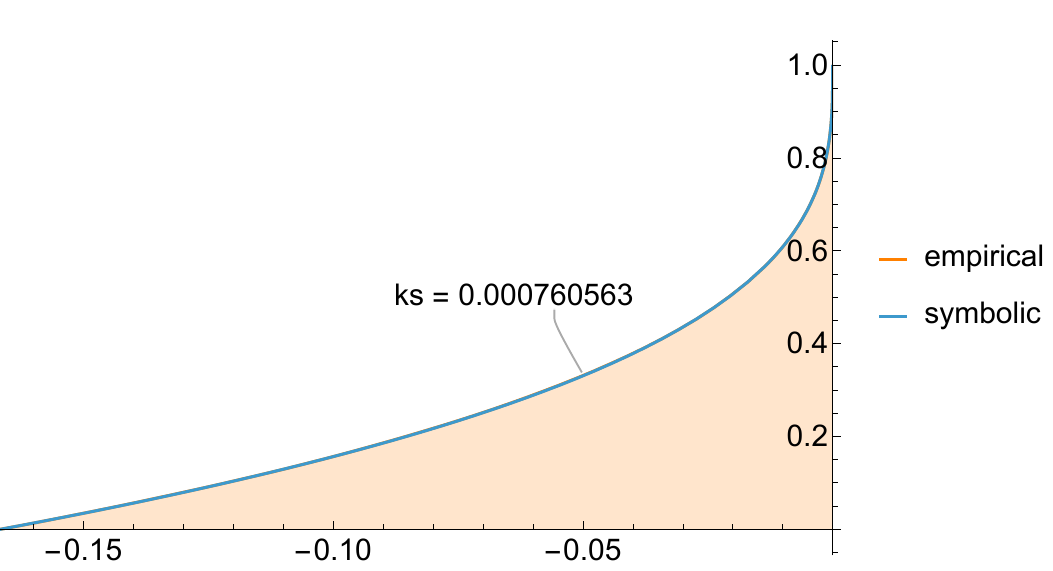} &
		\includegraphics[width=0.5\textwidth]{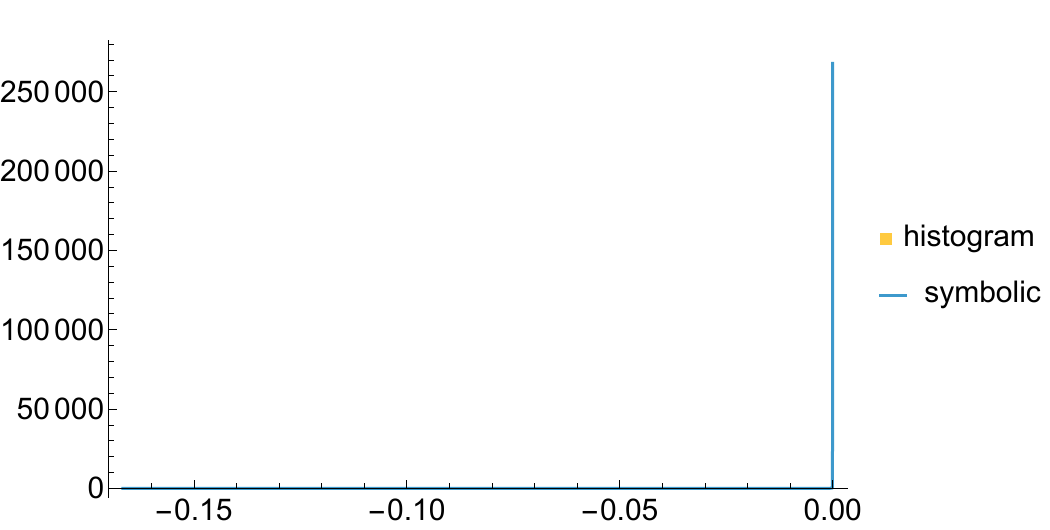} \\
		\hline
		\multicolumn{2}{c}{bspline3: $-\frac{u^3}{6}, u\sim\unif{0}{1}$} \\
		\hline
		\includegraphics[width=0.5\textwidth]{pictures/fpuniform/cdftestcav10} &
		\includegraphics[width=0.5\textwidth]{pictures/fpuniform/pdftestcav10} \\
		\hline
		\multicolumn{2}{c}{cav10: $\text{If}\left[x^2-x\geq 0,\frac{x}{10},x^2+2\right], x\sim\unif{0}{10}$} \\
		\hline
		
		\includegraphics[width=0.5\textwidth]{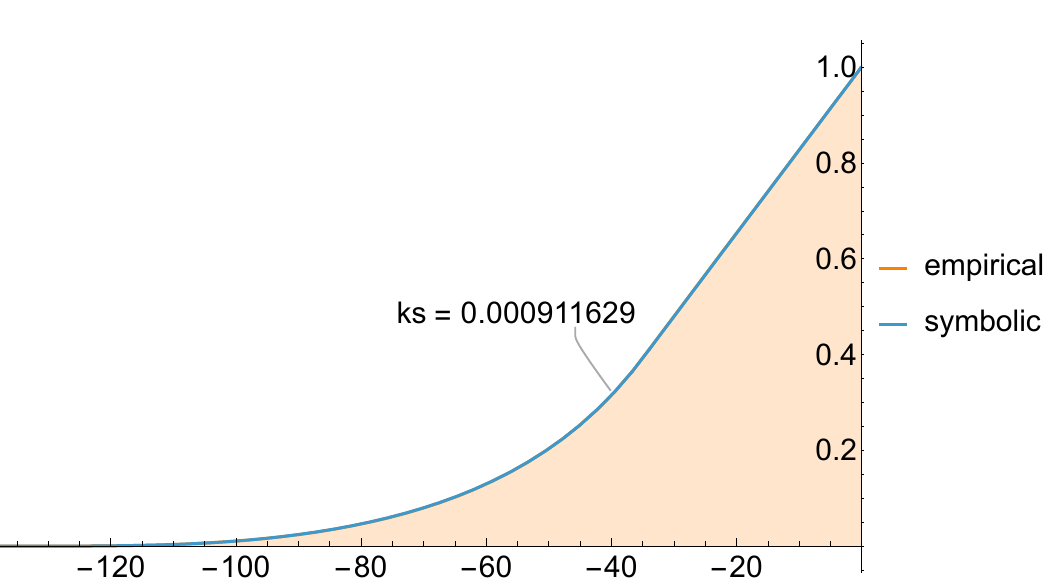} &
		\includegraphics[width=0.5\textwidth]{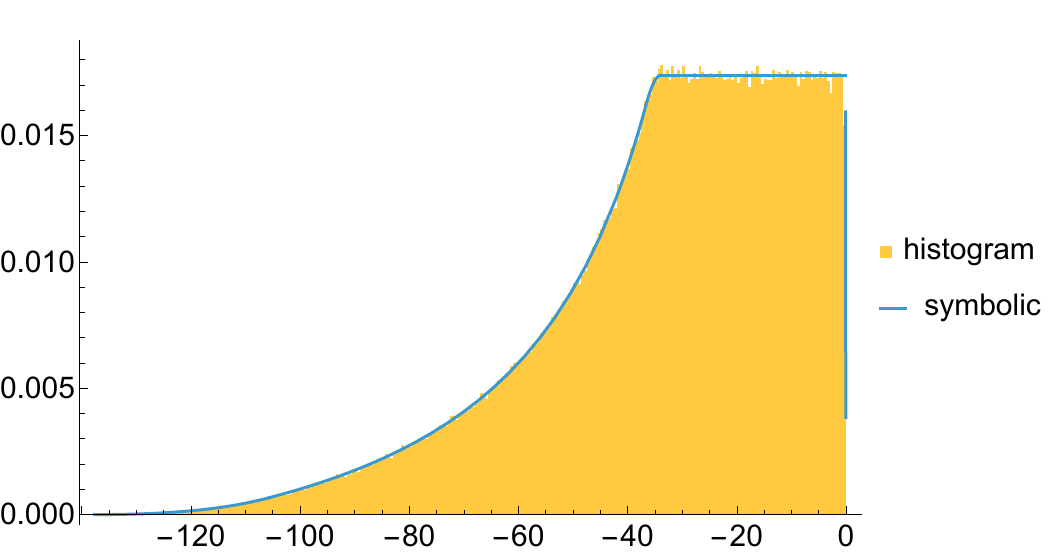} \\
		\hline
		\multicolumn{2}{c}{doppler1: $\frac{\left(-\frac{3 T}{5}-\frac{1657}{5}\right) v}{\left(\frac{3 T}{5}+u+\frac{1657}{5}\right)^2}, u\sim\unif{-100}{100}, v\sim\unif{20}{20000}, T\sim\unif{-30}{50}$} \\
		\hline
		
		\includegraphics[width=0.5\textwidth]{pictures/fpuniform/cdftestdoppler2} &
		\includegraphics[width=0.5\textwidth]{pictures/fpuniform/pdftestdoppler2} \\
		\hline
		\multicolumn{2}{c}{doppler2: $\frac{\left(-\frac{3 T}{5}-\frac{1657}{5}\right) v}{\left(\frac{3 T}{5}+u+\frac{1657}{5}\right)^2}, u\sim\unif{-125}{125}, v\sim\unif{15}{25000}, T\sim\unif{-40}{60}$} \\
		\hline
		
		\includegraphics[width=0.5\textwidth]{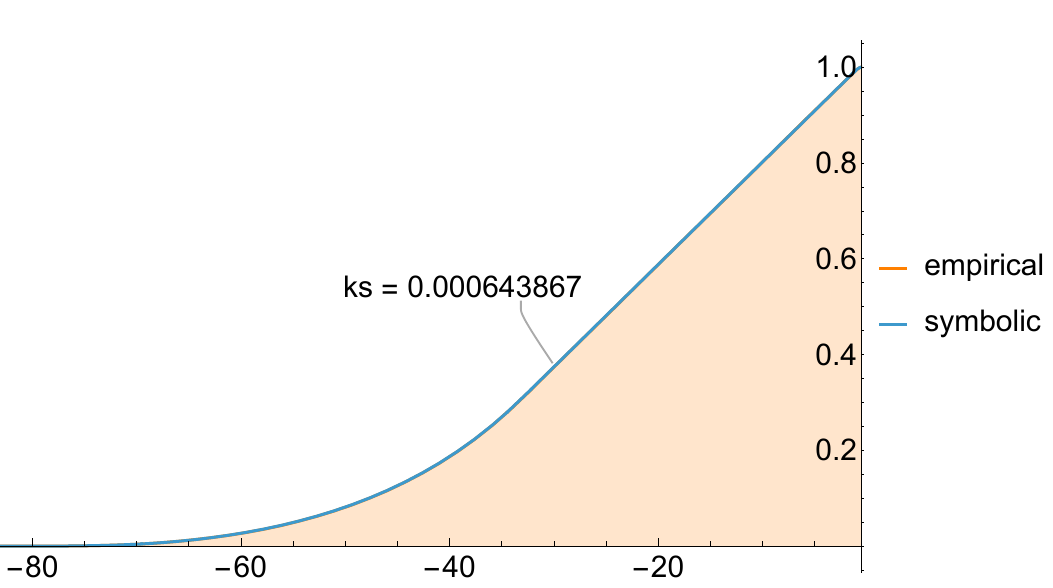} &
		\includegraphics[width=0.5\textwidth]{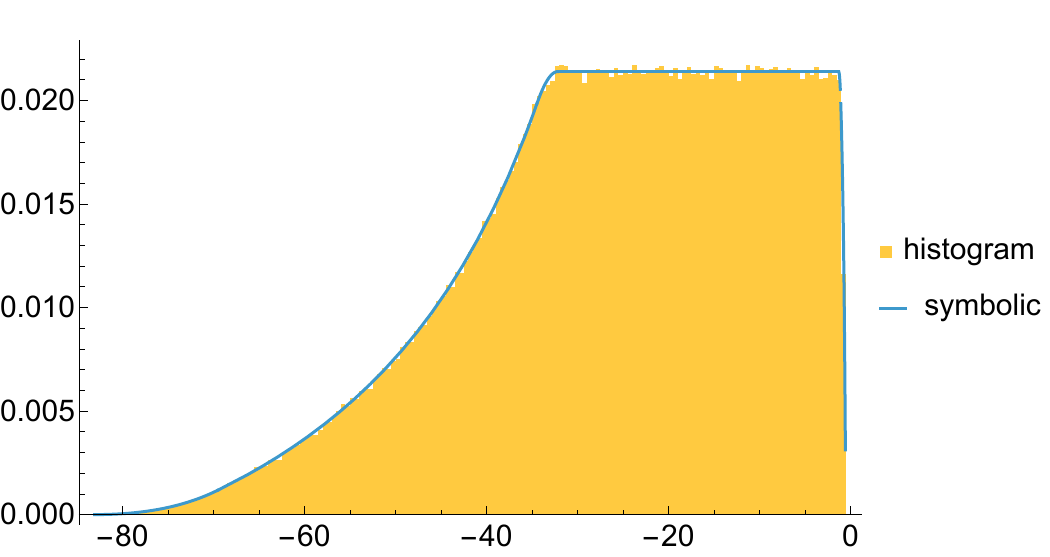} \\
		\hline
		\multicolumn{2}{c}{doppler3: $\frac{\left(-\frac{3 T}{5}-\frac{1657}{5}\right) v}{\left(\frac{3 T}{5}+u+\frac{1657}{5}\right)^2}, u\sim\unif{-30}{120}, v\sim\unif{320}{20300}, T\sim\unif{-50}{30}$} \\
		\hline
		
		\includegraphics[width=0.5\textwidth]{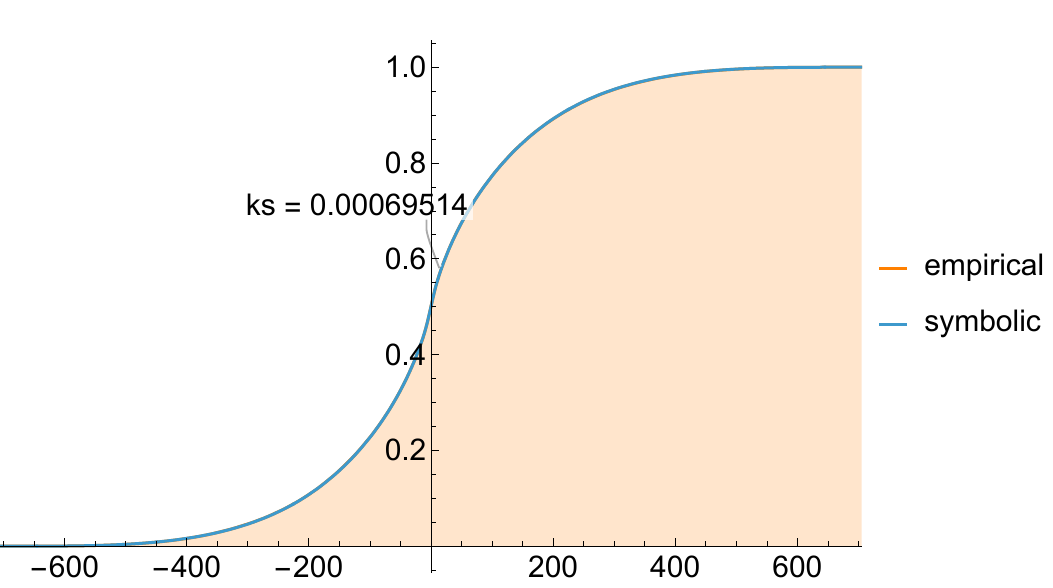} &
		\includegraphics[width=0.5\textwidth]{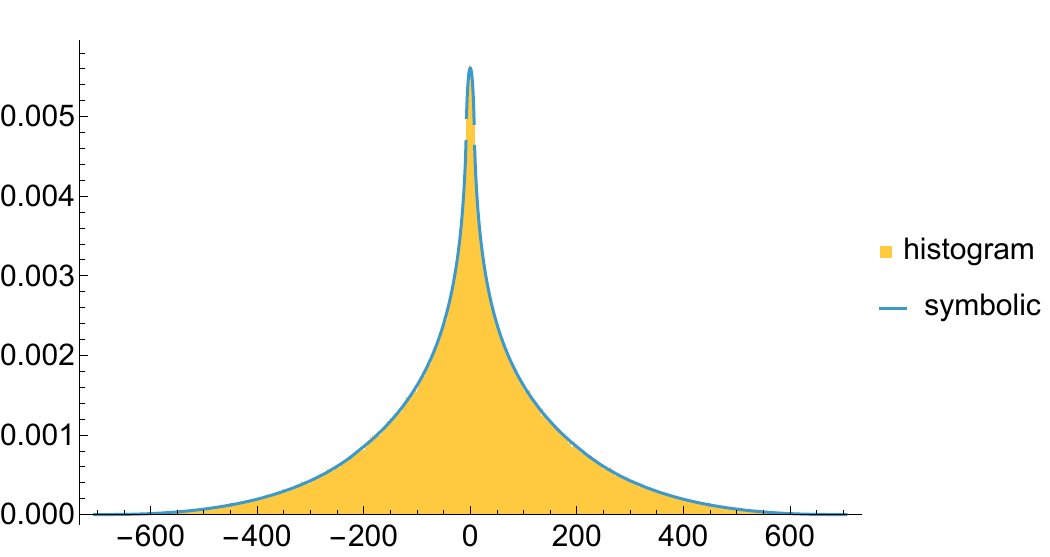} \\
		\hline
		\multicolumn{2}{c}{rigidBody1: $-x_1 - x_1 x_2 - x_3 - 2 x_2 x_3, x_1\sim\unif{-15}{15},x_2\sim\unif{-15}{15},x_3\sim\unif{-15}{15} $} \\
		\hline
		\\
		\multicolumn{2}{c}{\textbf{\small FPBench benchmarks with triangular inputs}}\\
		\hline\\
		CDF & PDF \\
		
		\includegraphics[width=0.50\textwidth]{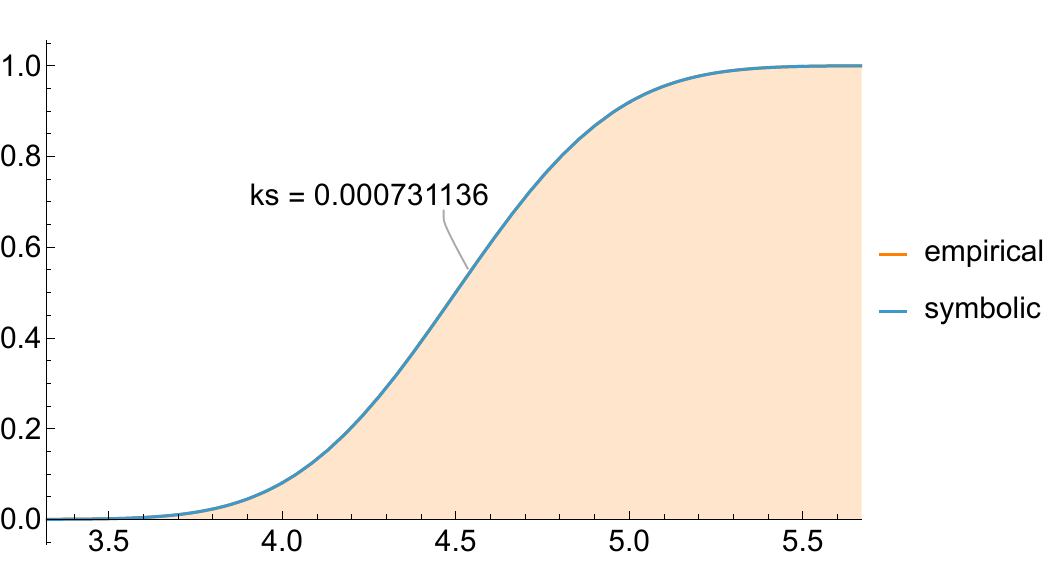} &
		\includegraphics[width=0.5\textwidth]{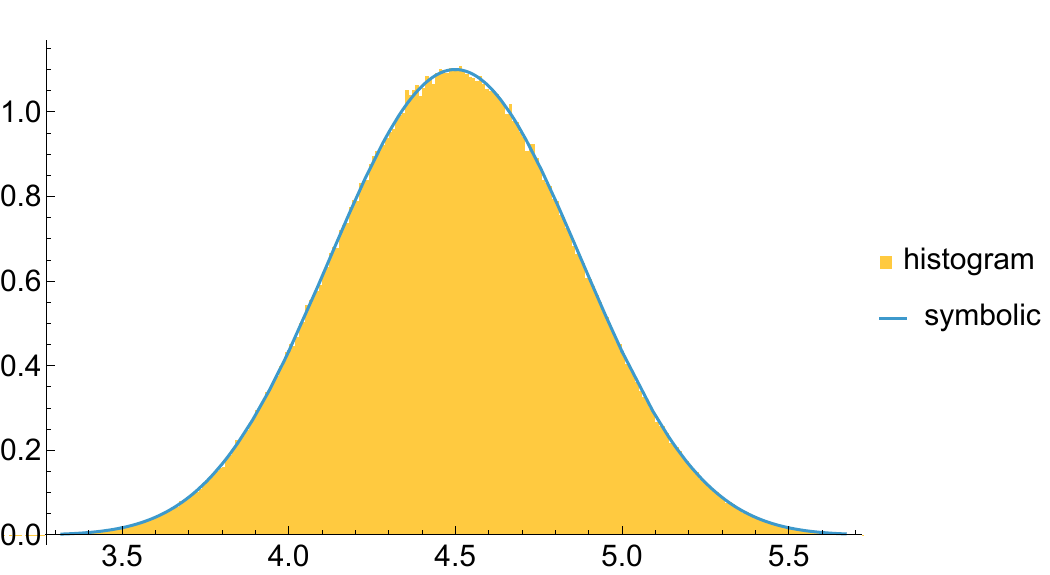} \\
		
		\hline
		\multicolumn{2}{c}{sum: $x0 + x1 + x2, x0\sim \tri{1}{2}, x1\sim\tri{1}{2}, x2\sim\tri{1}{2}$}\\
		\hline
		\includegraphics[width=0.5\textwidth]{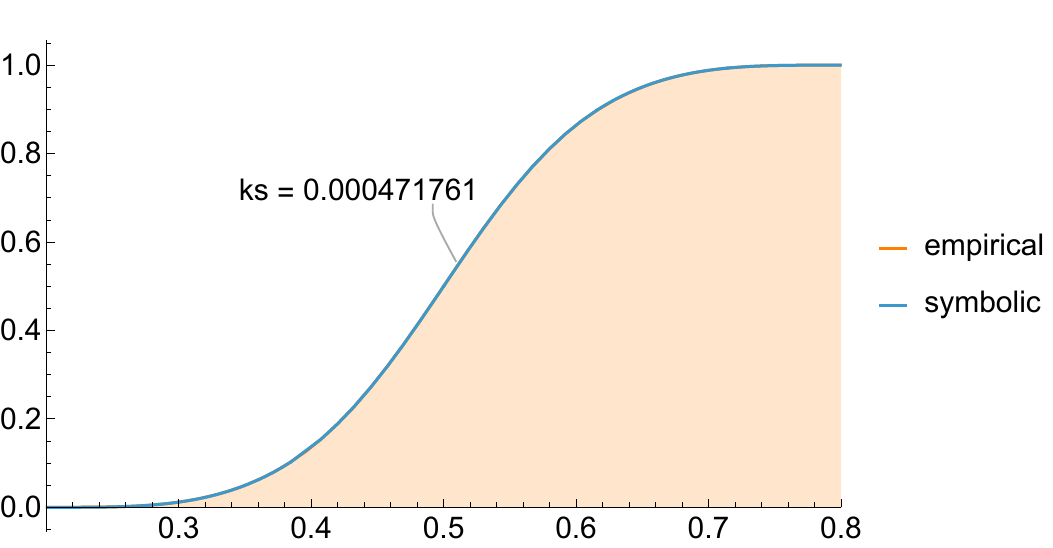} &
		\includegraphics[width=0.5\textwidth]{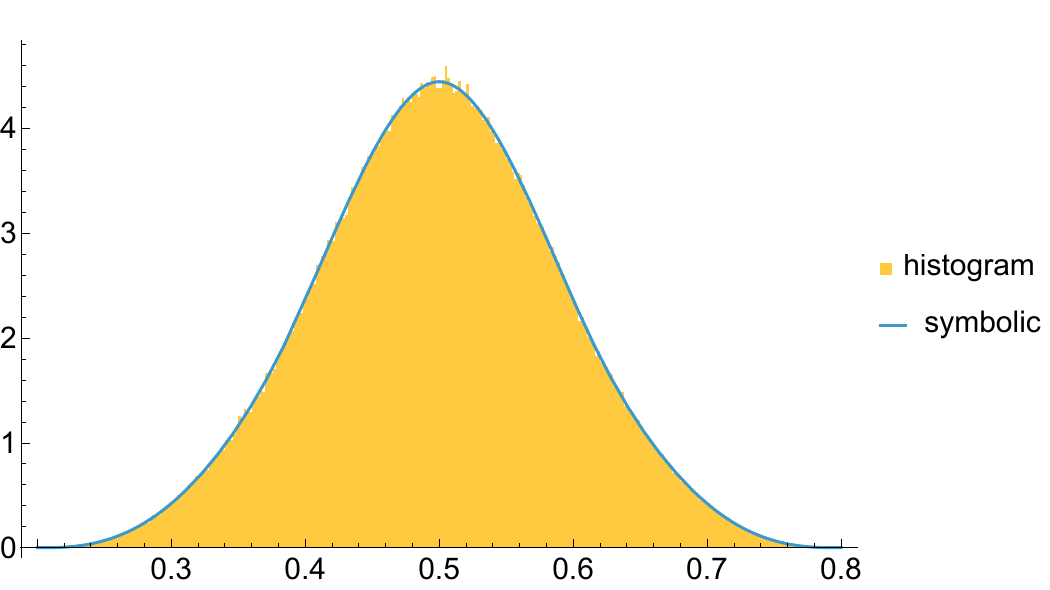} \\
		\hline
		\multicolumn{2}{c}{x\_by\_xy: $\frac{x}{x + y}, x\tri{1}{4}, y\tri{1}{4}$} \\
		\hline 
 		\includegraphics[width=0.5\textwidth]{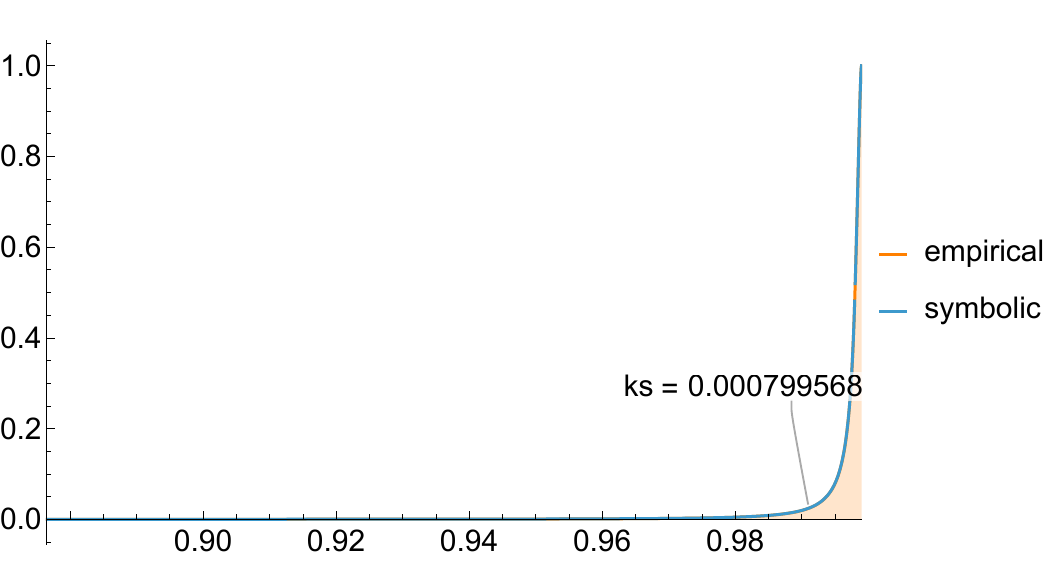} &
		\includegraphics[width=0.5\textwidth]{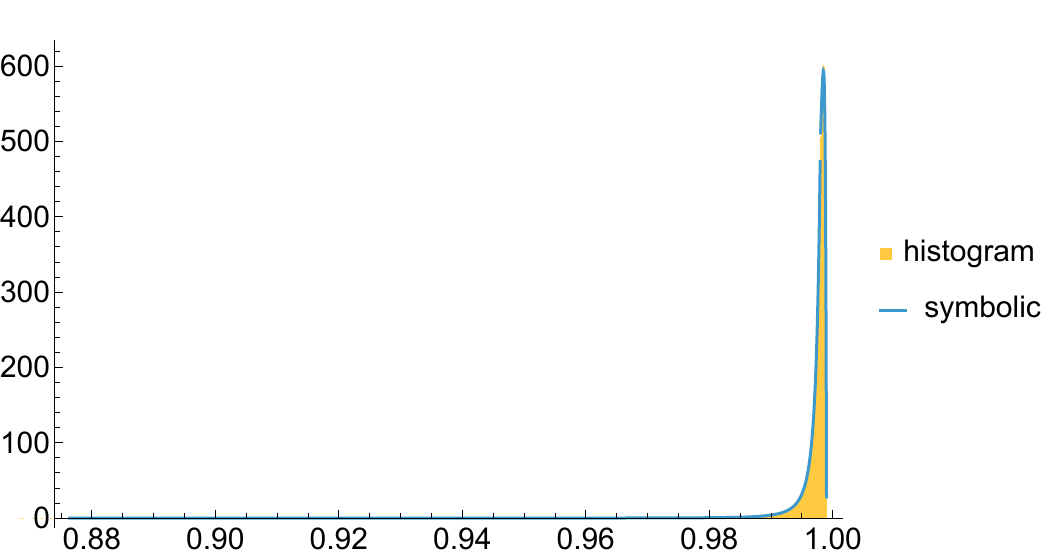} \\
		\hline
		\multicolumn{2}{c}{nonlin1: $\frac{z}{z+1}, x\sim\tri{0}{999}$} \\
		\hline
		\includegraphics[width=0.5\textwidth]{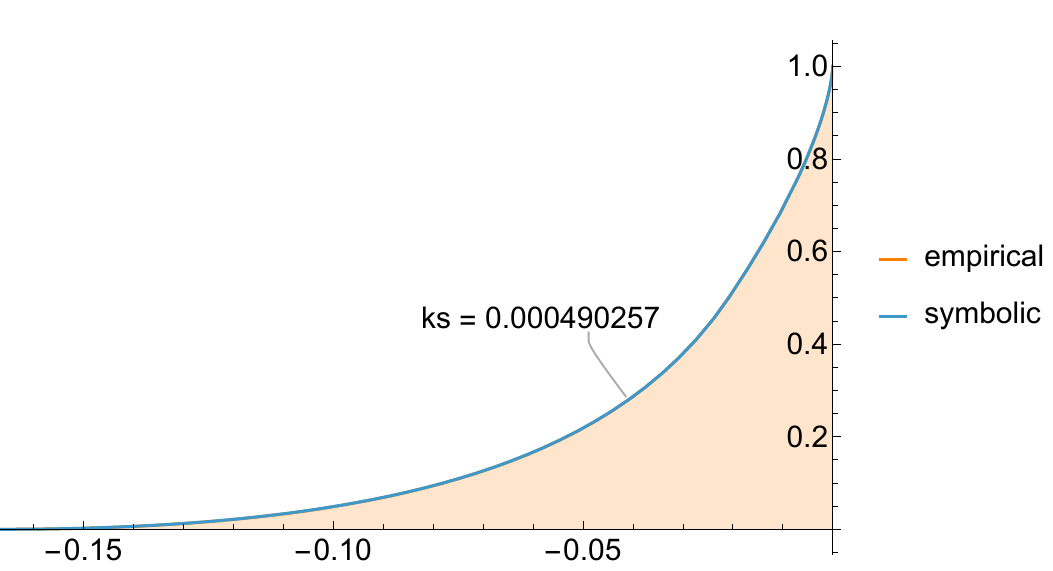} &
		\includegraphics[width=0.5\textwidth]{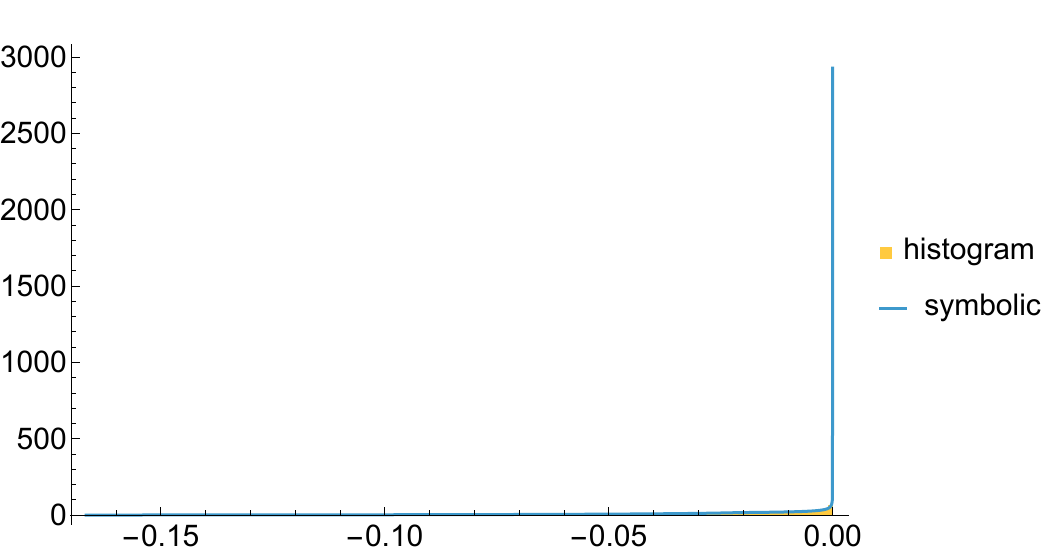} \\
		\hline
		\multicolumn{2}{c}{bspline3: $-\frac{u^3}{6}, u\sim\tri{0}{1}$} \\
		\hline
		\includegraphics[width=0.5\textwidth]{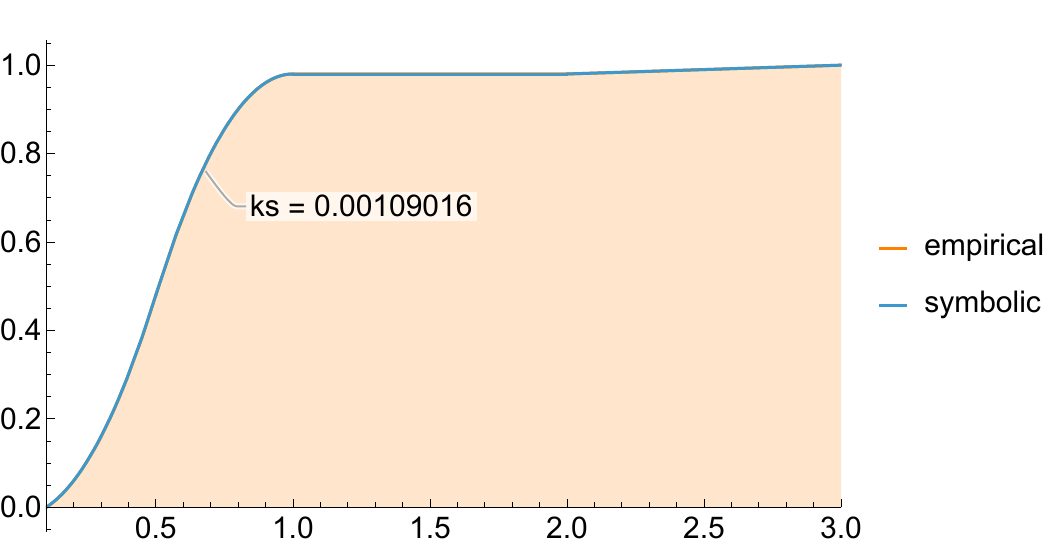} &
		\includegraphics[width=0.5\textwidth]{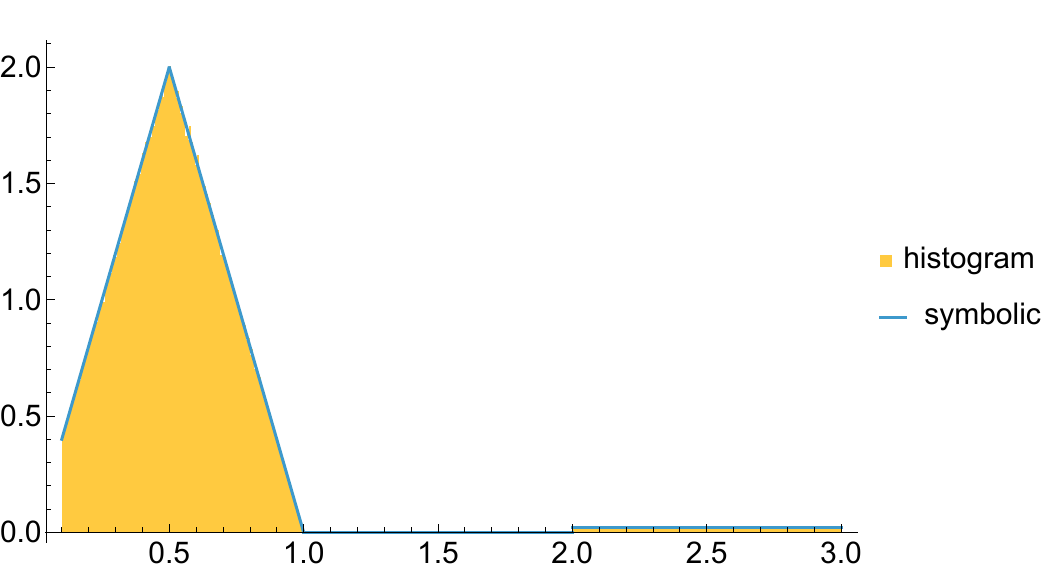} \\
		\hline
		\\
		\multicolumn{2}{c}{\textbf{\small FPBench benchmarks with beta inputs}}\\
		\hline\\
		CDF & PDF \\
		
		\includegraphics[width=0.50\textwidth]{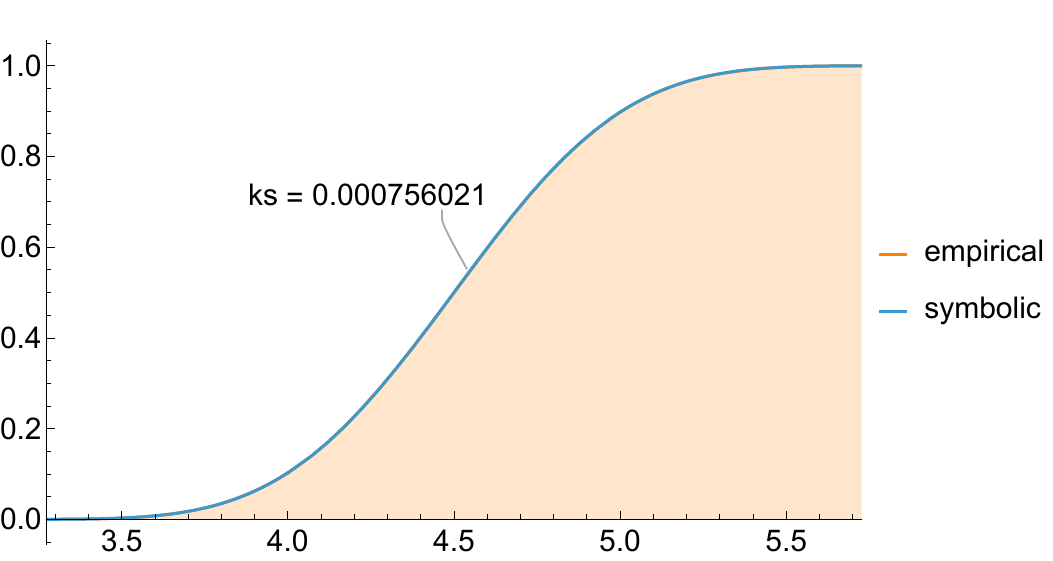} &
		\includegraphics[width=0.5\textwidth]{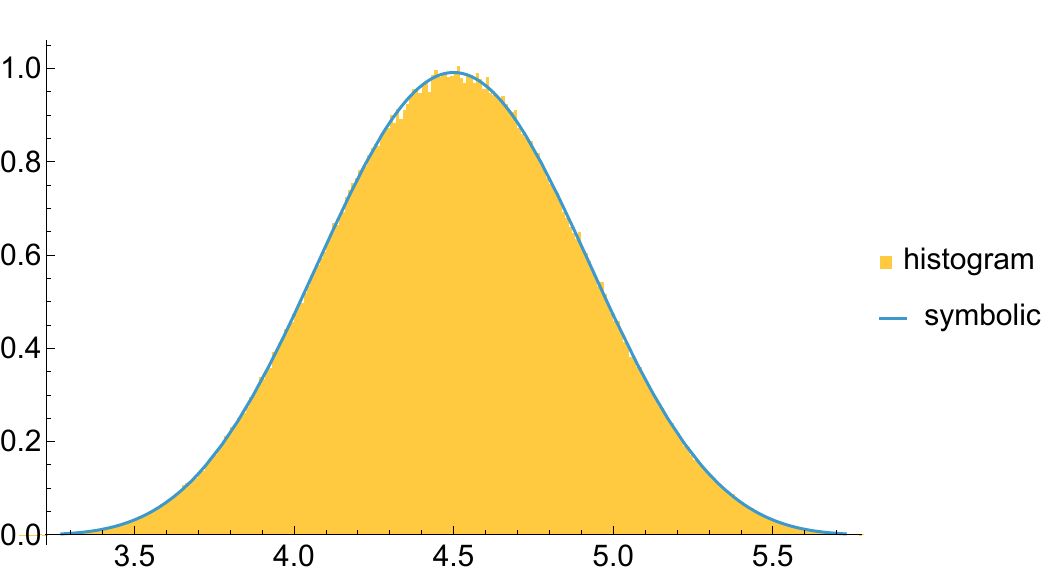} \\
		
		\hline
		\multicolumn{2}{c}{sum: $x0 + x1 + x2, x0\sim \tri{1}{2}, x1\sim\tri{1}{2}, x2\sim\tri{1}{2}$}\\
		\hline
		\includegraphics[width=0.5\textwidth]{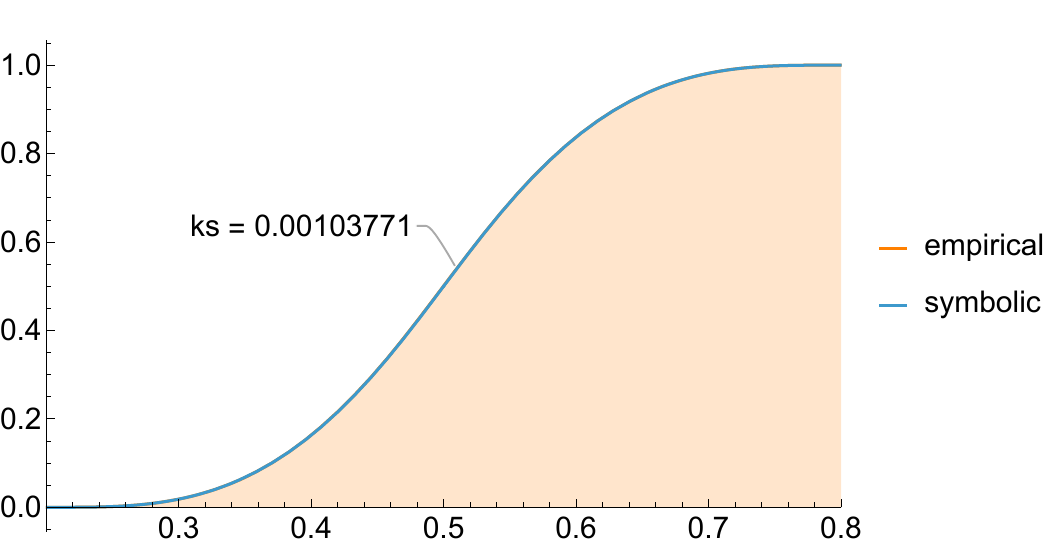} &
		\includegraphics[width=0.5\textwidth]{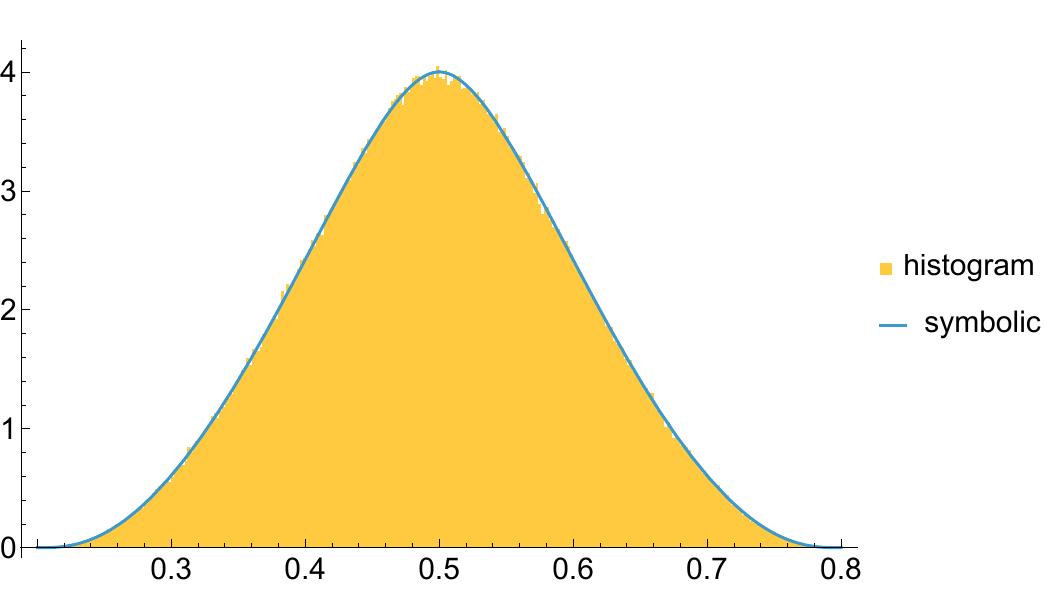} \\
		\hline
		\multicolumn{2}{c}{x\_by\_xy: $\frac{x}{x + y}, x\tri{1}{4}, y\tri{1}{4}$} \\
		\hline 
		\includegraphics[width=0.5\textwidth]{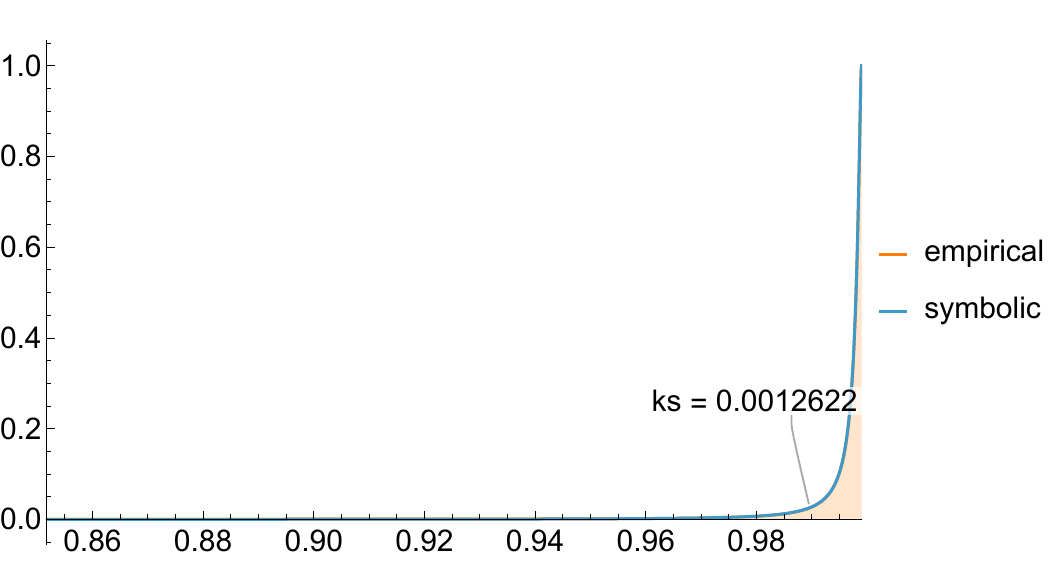} &
		\includegraphics[width=0.5\textwidth]{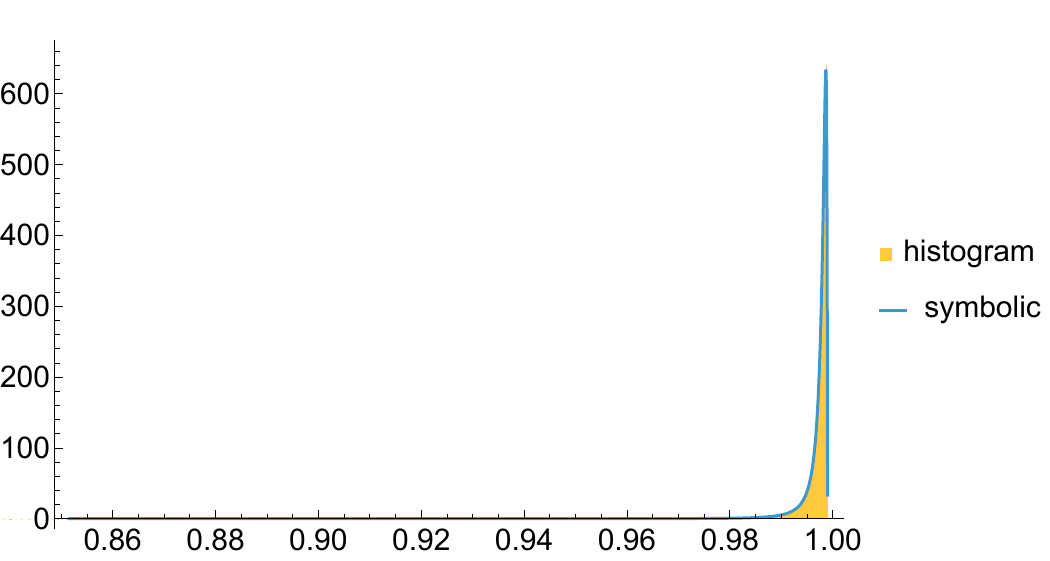} \\
		\hline
		\multicolumn{2}{c}{nonlin1: $\frac{z}{z+1}, x\sim\tri{0}{999}$} \\
		\hline
		\includegraphics[width=0.5\textwidth]{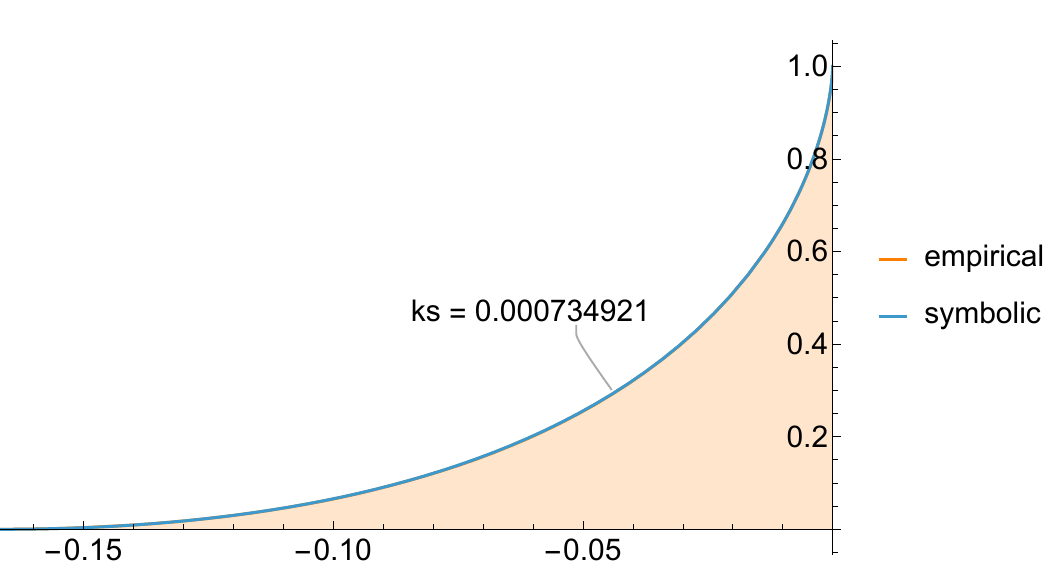} &
		\includegraphics[width=0.5\textwidth]{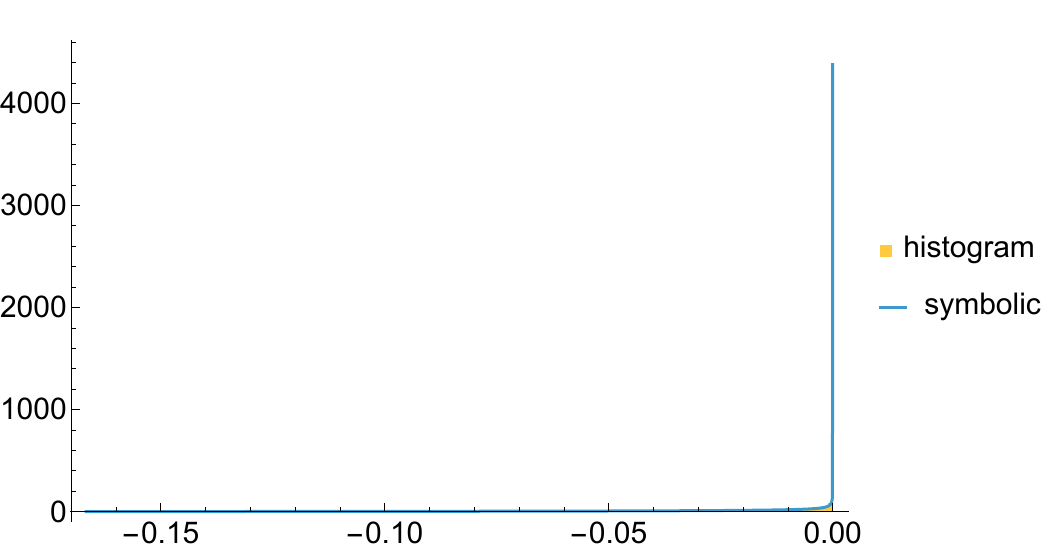} \\
		\hline
		\multicolumn{2}{c}{bspline3: $-\frac{u^3}{6}, u\sim\tri{0}{1}$} \\
		\hline
		\includegraphics[width=0.5\textwidth]{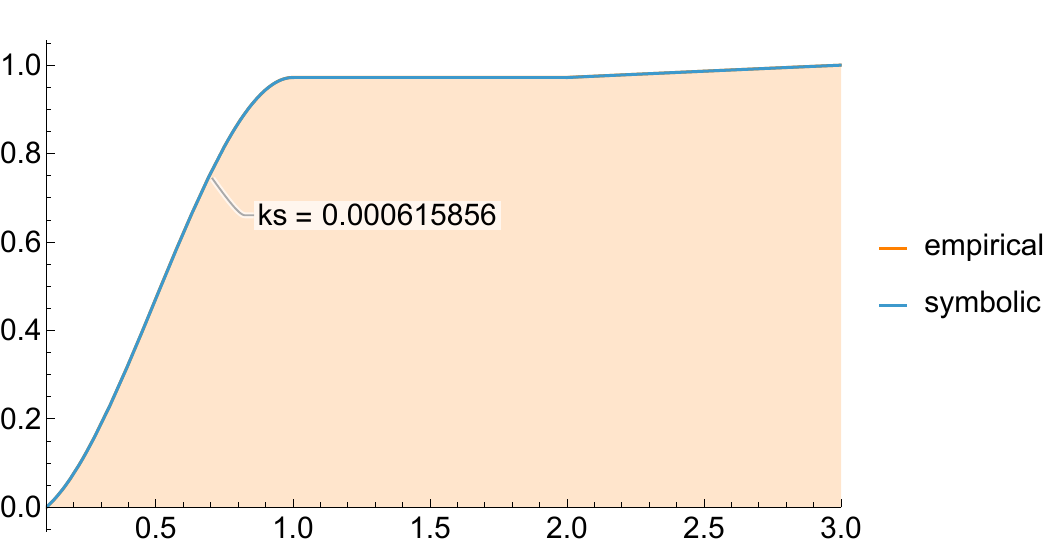} &
		\includegraphics[width=0.5\textwidth]{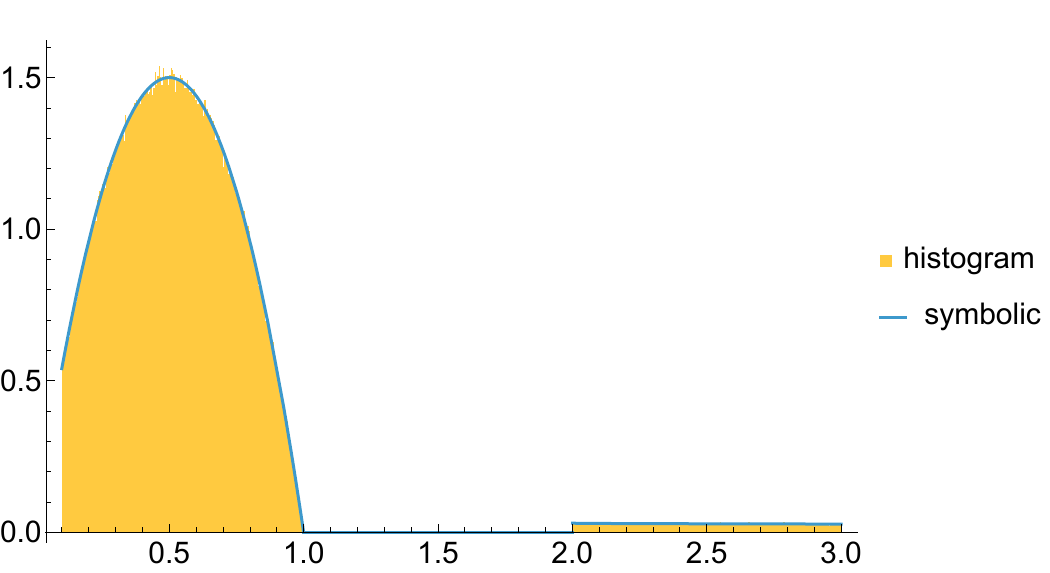} \\
		\hline
		\\
		\multicolumn{2}{c}{\textbf{\small IMU benchmarks with uniform inputs}}\\
		\hline\\
		CDF & PDF \\
		
		\hline
		\multicolumn{2}{c}{cav10: $\text{If}\left[x^2-x\geq 0,\frac{x}{10},x^2+2\right] , x\sim \tri{0}{10}$} \\
		\hline
		\includegraphics[width=0.5\textwidth]{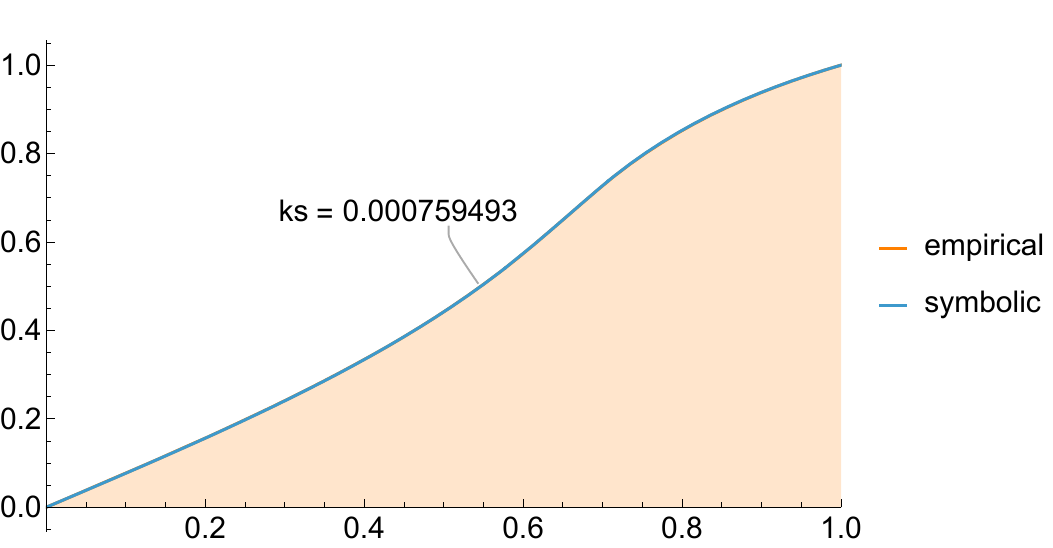} &
		\includegraphics[width=0.5\textwidth]{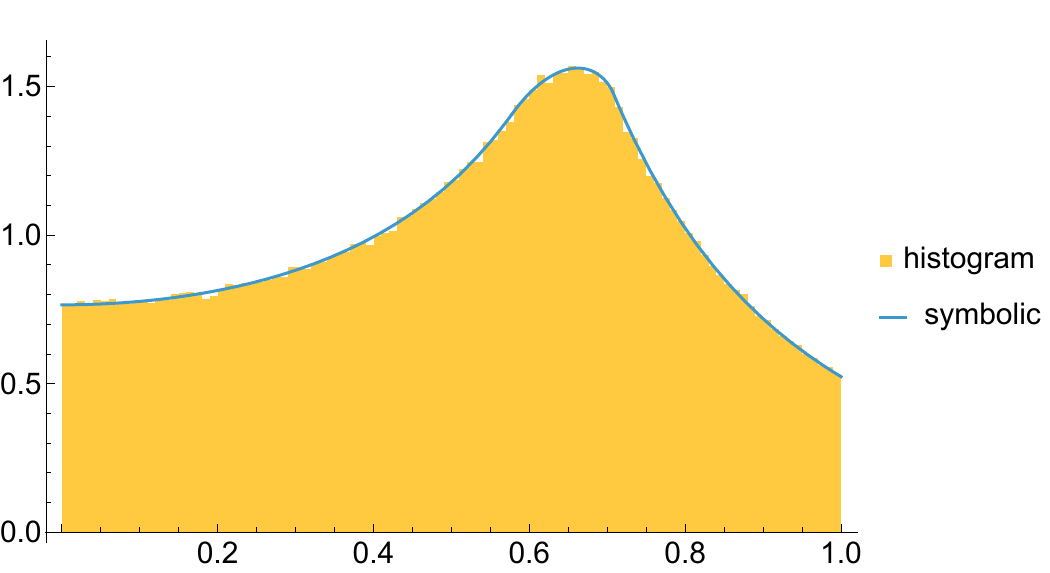} \\
		\hline
		\multicolumn{2}{c}{vector\_norm: $\frac{x}{\sqrt{x^2 + y^2 + z^2}} , x\sim \unif{0}{1}, y\sim \unif{0}{1}, z\sim \unif{0}{1}$} \\
		\hline
		\includegraphics[width=0.5\textwidth]{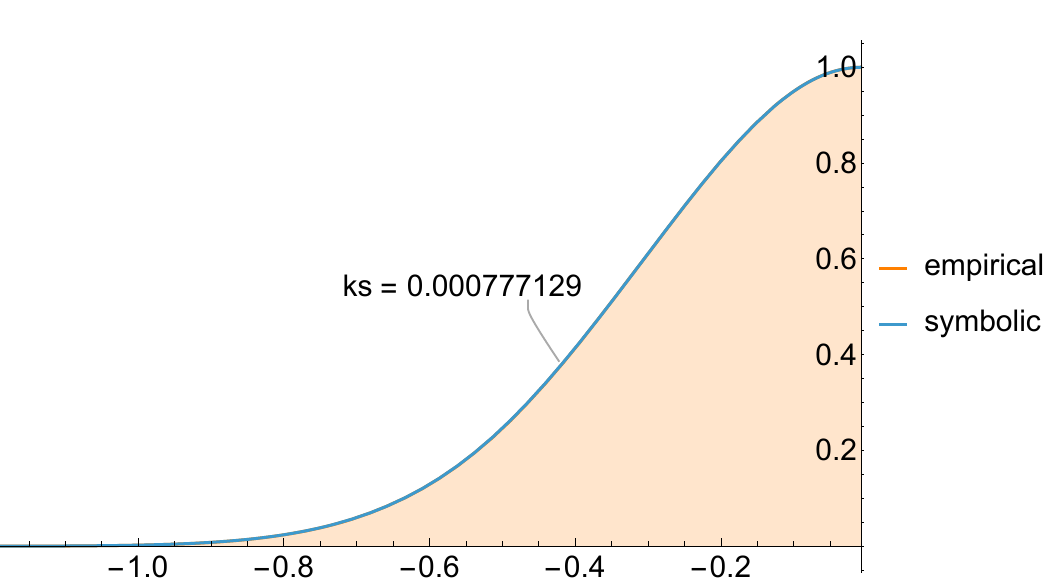} &
		\includegraphics[width=0.5\textwidth]{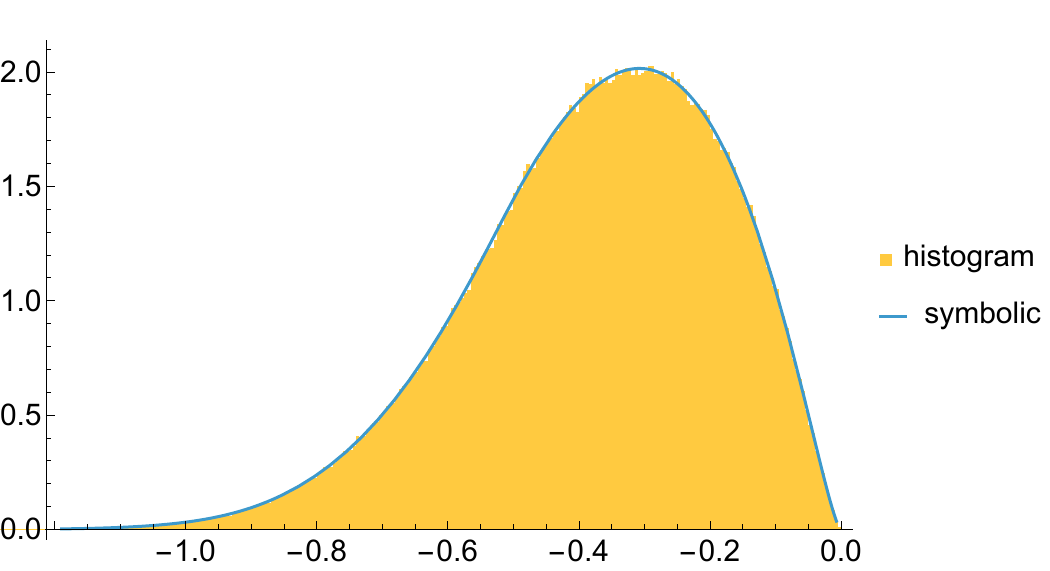} \\
		\hline
		\multicolumn{2}{c}{change\_gyro: $0.5*(-q1 * gx - q2 * gy - q3 * gz)$,}\\
		
		\multicolumn{2}{c}{$q1\sim \unif{0}{1}, q2\sim \unif{0}{1}, q3\sim \unif{0}{1}, gx\sim \unif{0}{1}, gy\sim \unif{0}{1}, gz\sim \unif{0}{1}$} \\
		\hline
	
	\end{longtable}
}

\end{document}